\tikzset{middlearrow/.style={
        decoration={markings,
            mark= at position 0.5 with {\arrow{#1}} ,
        },
        postaction={decorate}
    }
}
\newcommand{\verteq}{\rotatebox{90}{$\,=$}}
\newcommand{\equalto}[2]{\underset{\scriptstyle\overset{\mkern4mu\verteq}{#2}}{#1}}
\newtheorem*{rep@theorem}{\rep@title}
\newcommand{\newreptheorem}[2]{%
\newenvironment{rep#1}[1]{%
 \def\rep@title{#2 \ref{##1}}%
 \begin{rep@theorem}}%
 {\end{rep@theorem}}}
\theoremstyle{definition}
\newtheorem{definition}{Definition}
\newtheorem{example}[definition]{Example}
\theoremstyle{plain}
\newtheorem{theorem}{Theorem}
\newtheorem{proposition}[definition]{Proposition}
\newtheorem{lemma}[definition]{Lemma}
\newtheorem{remark}[definition]{Remark}
\newtheorem{corollary}[definition]{Corollary}
\begin{document}
%
\title{Sum-Rank BCH Codes and\\Cyclic-Skew-Cyclic Codes}
%
%
%
\author{Umberto~Mart{\'i}nez-Pe\~{n}as
\thanks{U. Mart{\'i}nez-Pe\~{n}as is with the Institute of Computer Science and the Institute of Mathematics, University of Neuch{\^a}tel, 2000 Neuch{\^a}tel, Switzerland. (e-mail: umberto.martinez@unine.ch).}}


%
%

\markboth{}%
{}
%



\maketitle



\begin{abstract}
In this work, cyclic-skew-cyclic codes and sum-rank BCH codes are introduced. Cyclic-skew-cyclic codes are characterized as left ideals of a suitable non-commutative finite ring, constructed using skew polynomials on top of polynomials (or vice versa). Single generators of such left ideals are found, and they are used to construct generator matrices of the corresponding codes. The notion of defining set is introduced, using pairs of roots of skew polynomials on top of poynomials. A lower bound (called sum-rank BCH bound) on the minimum sum-rank distance is given for cyclic-skew-cyclic codes whose defining set contains certain consecutive pairs. Sum-rank BCH codes, with prescribed minimum sum-rank distance, are then defined as the largest cyclic-skew-cyclic codes whose defining set contains such consecutive pairs. The defining set of a sum-rank BCH code is described, and a lower bound on its dimension is obtained. Thanks to it, tables are provided showing that sum-rank BCH codes beat previously known codes for the sum-rank metric for binary $ 2 \times 2 $ matrices (i.e., codes whose codewords are lists of $ 2 \times 2 $ binary matrices, for a wide range of list lengths that correspond to the code length). Finally, a decoder for sum-rank BCH codes up to half their prescribed distance is obtained.
\end{abstract}

\begin{IEEEkeywords} 
BCH codes, cyclic codes, Gabidulin codes, linearized Reed-Solomon codes, rank metric, Reed-Solomon codes, skew-cyclic codes, sum-rank metric.
\end{IEEEkeywords}

%
\IEEEpeerreviewmaketitle

\section{Introduction} \label{sec intro}
%
%
%
%

\IEEEPARstart{C}{odes} correcting errors and/or erasures with respect to the sum-rank metric have found applications in universal error correction and security in multishot network coding \cite{multishot, secure-multishot}, rate-diversity optimal space-time codes \cite{space-time-kumar, mohannad}, partial-MDS codes for repair in distributed storage \cite{universal-lrc}, compartmented secret sharing \cite{compartment}, and private information retrieval on partial-MDS-coded databases \cite{pirlrc}. They may also find applications in extending McEliece's public-key cryptosystem \cite{mceliece-crypto}, or in a multishot or multilayer version of crisscross error and erasure correction, extending \cite{roth}. The sum-rank metric is also of theoretical interest as it recovers the Hamming metric \cite{hamming} and the rank metric \cite{delsartebilinear, gabidulin} as particular cases (see Subsection \ref{subsec sum-rank metric}).

Similarly to other metrics, the objective is to obtain codes with large size, large minimum sum-rank distance and small finite-field size, together with an efficient error-correcting algorithm (unless we are only concerned with erasures). As we show next, one of the main difficulties in applying the known code constructions in practice is that they are defined over rather large finite fields. 

A few constructions of convolutional codes for the sum-rank metric are known \cite{wachter, wachter-convolutional, mahmood-convolutional, mrd-convolutional, mrd-convolutional2}, but they are tailored mainly for erasure correction under erasure patterns common in streaming scenarios. Other constructions include concatenations of Hamming-metric convolutional codes with rank-metric block codes \cite{napp-sidorenko}, and multi-level constructions over rank-metric block codes \cite{multishot}. However, for such constructions, the number $ m $ of rows per matrix in the definition of the sum-rank metric (see Subsection \ref{subsec sum-rank metric}) needs to be large in order to use meaningful rank-metric component codes. Moreover, the parameter $ m $ appears as an exponent in the finite-field sizes of such constructions (convolutional code constructions \cite{wachter, wachter-convolutional, mahmood-convolutional, mrd-convolutional, mrd-convolutional2} suffer also from large finite-field size exponents). Therefore, finite fields of size $ 2^2 = 4 $ ($ m = 2 $), for instance, are not attainable by such techniques for non-trivial code parameters. Note that small values for $ m $, such as $ m = 2 $ or $ 3 $, are also of interest in the applications (they correspond to $ m $ outgoing links in multishot linearly coded networks \cite{multishot, secure-multishot}, and to locality $ m $ in locally repairable codes \cite{universal-lrc}).

The first known block codes whose minimum sum-rank distance attains the Singleton bound and have sub-exponential field sizes are linearized Reed-Solomon codes \cite{linearizedRS, caruso}. Such codes recover as particular cases (generalized) Reed-Solomon codes \cite{reed-solomon} and Gabidulin codes \cite{gabidulin, roth} whenever the sum-rank metric recovers the Hamming metric and the rank metric, respectively. Linearized Reed-Solomon codes may be defined over finite fields of size $ \Theta( \ell^m) $, where $ m $ is as above and $ \ell $ is the number of terms in the sum defining the sum-rank metric (see Subsection \ref{subsec sum-rank metric}) or, equivalently, the list size if we see codewords as lists of matrices (\ref{eq matrix representation codewords}). Furthermore, $ m $ can be arbitrary small, e.g., $ m = 1 $, $ 2 $ or $ 3 $, while the code length $ n = \ell m $, dimension $ k $ and minimum sum-rank distance $ d = n - k + 1 $ may be arbitrarily large. However, $ \ell $ still grows linearly as either $ n $, $ k $ or $ d $ grows, and small finite-field sizes such as $ 2^2 = 4 $ are not attainable either by such codes.

To tackle this issue, subfield subcodes of linearized Reed-Solomon codes were considered in \cite[Sec. VII]{universal-lrc}. Their minimum sum-rank distance, over the smaller finite-field extension, is at least the minimum sum-rank distance, over the larger finite-field extension, of the linearized Reed-Solomon codes (see Subsection \ref{subsec sum-rank metric} and Lemma \ref{lemma min sum-rank distance subextension subcodes}). However, the only estimate on their dimension (i.e., code size) considered in \cite{universal-lrc} is the well-known Delsarte's lower bound \cite{delsarte} (see (\ref{eq lower bound delsarte})). Note that the sum-rank metric for subfield subcodes considered in \cite{universal-lrc} is not the corresponding extension of the rank metric defined over subfield subcodes considered in \cite{loidreau-subfield}.

In this work, we describe two new families of linear codes: \textit{Cyclic-skew-cyclic codes} (Definition \ref{def CSC codes}) and one of its subfamilies, \textit{sum-rank BCH codes} (Definition \ref{def SR-BCH codes}). The latter codes have a prescribed minimum sum-rank distance thanks to being subfield subcodes of certain linearized Reed-Solomon codes. They can be defined over arbitrarily small finite fields, including $ 2^2 = 4 $ for $ m = 2 $ and for arbitrarily large $ \ell $, with code length $ n = \ell m $ (for such parameters, codewords can be seen as lists of $ 2 \times 2 $ binary matrices with list length $ \ell $, see (\ref{eq matrix representation codewords}) and Appendix \ref{app tables}). In addition, we show that their dimension is in most cases higher than that obtained by Delsarte's lower bound (see Appendix \ref{app tables}) for a given prescribed minimum sum-rank distance. To the best of our knowledge, the code dimensions that we obtain in Appendix \ref{app tables} are the highest known so far for $ m = 2 $ and the finite-field size $ 2^2 = 4 $ (i.e., codewords are lists of $ 2 \times 2 $ binary matrices), for the given code lengths and minimum sum-rank distances.

\subsection{Similar Codes in the Literature} \label{subsec similar codes}

Here, we discuss related codes beyond other known codes endowed with the sum-rank metric \cite{wachter, wachter-convolutional, mahmood-convolutional, mrd-convolutional, mrd-convolutional2, napp-sidorenko, multishot}, which were discussed above. 

Cyclic-skew-cyclic codes form a family of codes that recover as particular cases classical cyclic codes \cite[Sec. 7.2]{macwilliamsbook} \cite[Ch. 4]{pless} and skew-cyclic codes \cite[Page 6]{gabidulin} \cite[Def. 1]{skewcyclic1}, whenever the sum-rank metric recovers the Hamming metric and the rank metric, respectively (note that skew-cyclic codes are also good for the Hamming metric \cite{skewcyclic1}). Mathematically, cyclic-skew-cyclic codes can be characterized simultaneously as classical cyclic codes over some non-commutative finite ring, and as skew-cyclic codes over some commutative finite ring (Theorem \ref{th algebraic char}), using polynomials on top of skew polynomials (\ref{eq non-comm ring first}) or vice versa (\ref{eq non-comm ring second}). 

The study of cyclic codes over general finite rings was initiated in \cite{blake}. Not many works exist treating cyclic codes over non-commutative finite rings \cite{greferath-cyclic, hirano, greferath-rings}, and they deal with general properties rather than particular code constructions. 

A wide range of works deal with skew-cyclic codes over commutative finite rings that are not fields \cite{skewcyclic4, skewcyclic5, skewcyclic6, skewcyclic7}. However, no work has yet treated skew-cyclic codes over the commutative finite ring $ \mathbb{F}[x] / (x^\ell - 1) $, to the best of our knowledge.

Cyclic convolutional codes \cite{piret, heide-schmale} are similar to cyclic-skew-cyclic codes in that they also make use of skew polynomials over the commutative finite ring $ \mathbb{F}[x] / (x^\ell - 1) $, for a finite field $ \mathbb{F} $. However, they yield (infinite) convolutional codes rather than (finite) block codes.

Sum-rank BCH codes are both cyclic-skew-cyclic codes and subfield subcodes of certain linearized Reed-Solomon codes. Furthermore, they recover classical BCH codes \cite{bose, hocquenghem} and rank-metric BCH codes \cite{skewcyclic3}, whenever the sum-rank metric recovers the Hamming metric and the rank metric, respectively.

\subsection{Main Results} \label{subsec main results}

We introduce cyclic-skew-cyclic codes in Definition \ref{def CSC codes}, and characterize them as left ideals over a suitable non-commutative finite ring in Theorem \ref{th algebraic char}. 

In Theorem \ref{th single generator}, we show that such ideals are generated by a unique element satisfying certain properties, called minimal generator skew polynomial (Definition \ref{def minimal gen and check skew pol}). We show how to use such a left-ideal generator to obtain a generator matrix of the corresponding code in Theorem \ref{th generator matrix}. 

We then introduce the notion of defining set (Definition \ref{def defining set}) and show that they characterize the corresponding cyclic-skew-cyclic code (Theorem \ref{th defining set}). In Theorem \ref{th dimensions from defining set}, we show how to obtain the dimension of a cyclic-skew-cyclic code from its defining set. 

In Theorem \ref{th CSC linearized RS codes}, we obtain a family of cyclic-skew-cyclic linearized Reed-Solomon codes. Using such codes, we obtain in Theorem \ref{th SR-BCH bound} a lower bound (called \textit{sum-rank BCH bound}) on the minimum sum-rank distance of cyclic-skew-cyclic codes whose defining set contains consecutive pairs. Sum-rank BCH codes are then defined as the largest codes satisfying such a property (Definition \ref{def SR-BCH codes}). 

In Theorem \ref{th SR-BCH structure}, we describe the defining set of a sum-rank BCH code based on the pairs in its definition. Using this result and Theorem \ref{th dimensions from defining set}, we obtain in Theorem \ref{th lower bound} a lower bound on the dimension of a sum-rank BCH code that can be easily computed from the pairs in its definition. 

In Subsection \ref{subsec SR-BCH decoding} we show that sum-rank BCH codes can be decoded with respect to the sum-rank metric up to half their prescribed distance by decoding the larger linearized Reed-Solomon code.

In Appendix \ref{app tables}, we obtain tables with lower and upper bounds on the parameters of sum-rank BCH codes for $ m = 2 $ and the finite-field size $ 2^2 = 4 $. These tables show that the introduced sum-rank BCH codes beat previously known code dimensions for a given prescribed distance, for a wide range of code lengths.

\subsection{Organization} \label{subsec organization}

The remainder of the manuscript is organized as follows. In Section \ref{sec preliminaries}, we provide some preliminaries, mainly the definitions of the sum-rank metric and skew polynomial rings. In Section \ref{sec CSC codes}, we introduced cyclic-skew-cyclic codes and characterize them as left ideals. In Section \ref{sec generators}, we find a single generator for such left ideals and the corresponding generator matrix. In Section \ref{sec roots}, we introduce defining sets of cyclic-skew-cyclic codes, after defining appropriate evaluation maps for skew polynomials. We conclude the section by computing the dimension of a cyclic-skew-cyclic code from its defining set. In Section \ref{sec CSC LRS codes}, we revisit linearized Reed-Solomon codes and find a subfamily of such codes formed by cyclic-skew-cyclic codes. In Section \ref{sec SR BCH codes}, we provide the sum-rank BCH bound and the definition of sum-rank BCH codes. We then describe their defining sets and conclude by giving a lower bound on their dimensions. The section concludes by showing how to decode sum-rank BCH codes up to half their prescribed distance. Section \ref{sec conclusion} concludes the manuscript.

\section{Preliminaries and General Setting} \label{sec preliminaries}

\subsection{Main Finite Fields in This Work} \label{subsec finite fields this work}

In this work, $ q_0 $ denotes the power of a prime number $ p $. We also fix positive integers $ m $, $ \ell $ and $ s $, and denote $ q = q_0^s $. We will assume that the finite field with $ q $ elements
$$ \mathbb{F}_q = \mathbb{F}_{q_0^s} $$
contains all $ \ell $th \textit{roots of unity}, i.e., all roots of the polynomial $ x^\ell - 1 \in \mathbb{F}_p[x] $. This holds if, and only if, $ q-1 $ is divisible by $ \ell $, see \cite[p. 110]{pless}. Observe that, if $ \ell = 1 $, then we may assume that $ s = 1 $. We refer to \cite{lidl} for generalities on finite fields. 

Throughout the manuscript, we will consider \textit{sum-rank metrics} (see Definition \ref{def sum-rank metric} below), for which we will consider the two finite-field extensions  
$$ \mathbb{F}_{q_0} \subseteq \mathbb{F}_{q_0^m} \quad \textrm{and} \quad \mathbb{F}_q \subseteq \mathbb{F}_{q^m}. $$
To relate these four fields, the following directed graph might be helpful, where $ K \longrightarrow L $ means that the field $ L $ is a field extension of the field $ K $, i.e., $ K $ is a subfield of $ L $:

\begin{displaymath}
\begin{array}{rcccl}
 & & \mathbb{F}_{q^m} & & \\
  & \nearrow & & \nwarrow & \\
\mathbb{F} = \quad \mathbb{F}_{q_0^m} & & & & \mathbb{F}_{q} \quad = \mathbb{F}_{q_0^s} . \\
 & \nwarrow & & \nearrow & \\
  & & \mathbb{F}_{q_0} & & \\
\end{array}
\end{displaymath}

Since we will define our main codes over $ \mathbb{F}_{q_0^m} $, we will simply denote
\begin{equation} \label{eq main finite field}
\mathbb{F} = \mathbb{F}_{q_0^m}.
\end{equation}

\subsection{The Sum-Rank Metric} \label{subsec sum-rank metric}

We fix another positive integer $ N $ and define a \textit{code length partition} 
\begin{equation} \label{eq sum-rank length partition}
n = \ell N  = \underbrace{N + N + \cdots + N}_{\ell \textrm{ times}} .
\end{equation}
The integer $ n $ will be the length of our main codes, and (\ref{eq sum-rank length partition}) will be the \textit{length partition} that we will use in order to define the sum-rank metric, which was defined in \cite[Sec. III-D]{multishot} under the name \textit{extended rank distance}.

\begin{definition} [\textbf{Sum-Rank Metric \cite{multishot}}] \label{def sum-rank metric}
Let $ K \subseteq L $ be a field extension (in our case, $ \mathbb{F}_{q_0} \subseteq \mathbb{F}_{q_0^m} $ or $ \mathbb{F}_q \subseteq \mathbb{F}_{q^m} $). Consider vectors $ \mathbf{c} \in L^n $ to be subdivided according to the partition $ n = \ell N $ as in (\ref{eq sum-rank length partition}):
\begin{equation*} 
\begin{array}{rll}
\mathbf{c} & = (\mathbf{c}^{(0)}, \mathbf{c}^{(1)}, \ldots, \mathbf{c}^{(\ell - 1)}) & \in L^n, \textrm{ where} \\
\mathbf{c}^{(i)} & = (c^{(i)}_0, c^{(i)}_1, \ldots, c^{(i)}_{N - 1}) & \in L^N, 
\end{array}
\end{equation*} 
for $ i = 0,1, \ldots, \ell - 1 $. We define the \textit{sum-rank weight} for the extension $ K \subseteq L $ (i.e., the pair $ (K, L) $) and length partition $ n = \ell N $ as in (\ref{eq sum-rank length partition})  (i.e., the pair $ (\ell, N) $) as the map $ {\rm wt}_{SR} : L^n \longrightarrow \mathbb{N} $ given by
$$ {\rm wt}_{SR}(\mathbf{c}) = \sum_{i = 0}^{\ell - 1} \dim_K \left( \left\langle c^{(i)}_0, c^{(i)}_1, \ldots, c^{(i)}_{N-1} \right\rangle_K \right), $$
for all $ \mathbf{c} \in L^n $ subdivided as above. Here, $ \dim_K( \cdot ) $ denotes dimension over $ K $ and $ \left\langle \cdot \right\rangle_K $ denotes $ K $-linear span. We define the \textit{sum-rank distance} for the same field extension and length partition as the map $ {\rm d}_{SR} : L^n \times L^n \longrightarrow \mathbb{N} $ given by
$$ {\rm d}_{SR} (\mathbf{c}, \mathbf{d}) = {\rm wt}_{SR}(\mathbf{c} - \mathbf{d}), $$
for all $ \mathbf{c}, \mathbf{d} \in L^n $. Finally, for a code $ \mathcal{C} \subseteq L^n $ (a \textit{code} will simply be any subset of $ L^n $, linear or not), we define its minimum sum-rank distance as
$$ {\rm d}_{SR}(\mathcal{C}) = \min \{ {\rm d}_{SR}(\mathbf{c}, \mathbf{d}) \mid \mathbf{c}, \mathbf{d} \in \mathcal{C}, \mathbf{c} \neq \mathbf{d} \}. $$
\end{definition}

To explain the name \textit{sum-rank metric}, observe that, if $ m = \dim_K(L) < \infty $, then 
$$ {\rm wt}_{SR}(\mathbf{c}) = \sum_{i = 0}^{\ell - 1} {\rm Rk}_K \left( \begin{array}{cccc}
c_{0,0}^{(i)} & c_{0,1}^{(i)} & \ldots & c_{0,N-1}^{(i)} \\
c_{1,0}^{(i)} & c_{1,1}^{(i)} & \ldots & c_{1,N-1}^{(i)} \\
\vdots & \vdots & \ddots & \vdots \\
c_{m-1,0}^{(i)} & c_{m-1,1}^{(i)} & \ldots & c_{m-1,N-1}^{(i)} \\
\end{array} \right), $$
where $ {\rm Rk}_K(\cdot) $ denotes rank over $ K $, $ \mathbf{c} \in L^n $ is subdivided as in Definition \ref{def sum-rank metric}, and
$$ c^{(i)}_j = \sum_{h = 0}^{m - 1} c^{(i)}_{h, j} \beta_h, $$
where $ c^{(i)}_{h, j} \in K $, for $ h = 0,1,\ldots, m-1 $, for $ j = 0,1, \ldots, N-1 $ and for $ i = 0,1, \ldots, \ell-1 $, and where $ \{ \beta_0, \beta_1, \ldots, \beta_{m - 1} \} $ is an ordered basis of $ L $ over $ K $. With such a matrix representation, we may consider $ \mathbf{c} \in L^n $ as a list of $ \ell $ matrices
\begin{equation}
(C_1, C_2, \ldots, C_\ell) \in \left( K^{m \times N} \right)^\ell . 
\label{eq matrix representation codewords}
\end{equation}
Considering codes as subsets of $ \left( K^{m \times N} \right)^\ell $, endowed with the sum-rank metric as above, may be important in some applications, but becomes cumbersome in our study.

The sum-rank metric recovers the Hamming metric if $ m = N = 1 $ \cite[Ex. 36]{linearizedRS} and the rank metric if $ \ell = 1 $ \cite[Ex. 37]{linearizedRS}. See also \cite[Sec. II-A]{universal-lrc}. We will show throughout the paper how our results particularize to these two important cases.

In most applications of the sum-rank metric, the objective is to obtain codes having simultaneously a large size and a large minimum sum-rank distance, for a given code alphabet (pair $ (K,L) $) and length (pair $ (\ell, N) $). As in the classical Hamming-metric case \cite[Th. 2.4.1]{pless}, there is a trade-off between the size and minimum sum-rank distance of any code (linear or not) given by the Singleton bound \cite[Cor. 2]{universal-lrc}. We refer to \cite{macwilliamsbook, pless} for generalities on codes.

\begin{proposition} [\textbf{\cite{universal-lrc}}]
With notation as in Definition \ref{def sum-rank metric}, for a finite-field extension $ K \subseteq L $, and for a code $ \mathcal{C} \subseteq L^n $ (which may be linear or not), it holds that
\begin{equation}
|\mathcal{C}| \leq |L|^{n - {\rm d}_{SR}(\mathcal{C}) + 1}.
\label{eq singleton bound}
\end{equation}
\end{proposition}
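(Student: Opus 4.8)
The plan is to deduce the bound directly from a puncturing argument, after observing that the sum-rank weight is always dominated by the ordinary Hamming weight. First I would record the elementary inequality $\mathrm{wt}_{SR}(\mathbf{c}) \le \mathrm{wt}_H(\mathbf{c})$ for every $\mathbf{c} \in L^n$, where $\mathrm{wt}_H$ denotes the number of nonzero entries: subdividing $\mathbf{c}$ as in Definition~\ref{def sum-rank metric}, on each block the family of those coordinates $c^{(i)}_j$ that are nonzero already spans $\langle c^{(i)}_0, \ldots, c^{(i)}_{N-1}\rangle_K$ over $K$, so $\dim_K \langle c^{(i)}_0, \ldots, c^{(i)}_{N-1}\rangle_K \le \#\{\, j : c^{(i)}_j \neq 0 \,\}$; summing over $i = 0, 1, \ldots, \ell-1$ gives the claim. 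Applying this to differences of codewords yields $d_{SR}(\mathcal{C}) \le d_H(\mathcal{C})$, where $d_H$ denotes the minimum Hamming distance of $\mathcal{C}$.

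Next I would run the classical puncturing argument, which is valid for arbitrary (possibly nonlinear) codes. Write $d = d_{SR}(\mathcal{C})$ and fix any subset $S \subseteq \{1, 2, \ldots, n\}$ with $|S| = d - 1$; let $\pi_S : L^n \to L^{n - d + 1}$ be the projection deleting the coordinates indexed by $S$. If $\mathbf{c}, \mathbf{d} \in \mathcal{C}$ with $\mathbf{c} \neq \mathbf{d}$, then $\mathbf{c} - \mathbf{d}$ has Hamming weight at least $d_H(\mathcal{C}) \ge d_{SR}(\mathcal{C}) = d$ by the previous step, so $\mathbf{c}$ and $\mathbf{d}$ differ in at least $d$ coordinates and hence in at least one coordinate outside $S$; therefore $\pi_S(\mathbf{c}) \neq \pi_S(\mathbf{d})$. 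Thus $\pi_S$ is injective on $\mathcal{C}$, and $|\mathcal{C}| = |\pi_S(\mathcal{C})| \le |L^{n - d + 1}| = |L|^{n - d_{SR}(\mathcal{C}) + 1}$, which is exactly (\ref{eq singleton bound}). Equivalently, one may simply combine the classical Singleton bound $|\mathcal{C}| \le |L|^{n - d_H(\mathcal{C}) + 1}$ for codes over $L$ with the inequality $d_{SR}(\mathcal{C}) \le d_H(\mathcal{C})$.

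I do not expect a genuine obstacle here. The only mild point is that $\mathcal{C}$ need not be linear, so the final step should be phrased through the cardinality of a punctured code rather than through dimension counting; and the fact that the sum-rank metric is not a coordinate-wise metric is harmless, since the crude comparison $\mathrm{wt}_{SR} \le \mathrm{wt}_H$ already suffices. It is worth remarking, though, that this Singleton bound is typically loose when $N > 1$ (a block can hide a large Hamming weight behind a small rank), and that sharper Singleton-type bounds tailored to the sum-rank metric require a more careful block-by-block analysis; for the statement at hand, however, the argument above is complete.
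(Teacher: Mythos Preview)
Your argument is correct: the inequality $\mathrm{wt}_{SR}(\mathbf{c}) \le \mathrm{wt}_H(\mathbf{c})$ holds block by block exactly as you say, and the classical puncturing argument then gives the bound for arbitrary (not necessarily linear) codes. The paper itself does not prove this proposition; it is quoted from \cite[Cor.~2]{universal-lrc} without proof, so there is no in-paper argument to compare against. Your route via the Hamming-metric Singleton bound is the standard elementary one.

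One correction to your closing remark, though: the bound \eqref{eq singleton bound} is \emph{not} typically loose for $N>1$ in the regime of this paper. As the paper emphasizes immediately after the proposition (Proposition~\ref{prop linRS are MSRD}), linearized Reed--Solomon codes attain equality in \eqref{eq singleton bound} for any $1\le N\le m$ and $1\le \ell\le q-1$; these are the MSRD codes. Sharper Singleton-type bounds only enter when $N>m$ (so that ranks are capped at $m$), which is outside the setting used here. So your final caveat, while not affecting the validity of the proof, is misleading in this context.
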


Codes achieving equality in (\ref{eq singleton bound}) are called \textit{maximum sum-rank distance} (MSRD) codes. Maximum rank distance (MRD) codes (e.g., Gabidulin codes \cite{gabidulin, roth}) are also MSRD codes, but any MRD code requires $ |L| \geq 2^{\ell N} $, exponential in the parameters $ n $, $ \ell $ and $ N $. The first MSRD codes for sub-exponential field sizes $ |L| $ in $ n $ and $ \ell $ are linearized Reed-Solomon codes \cite{linearizedRS, caruso}. See Subsection \ref{subsec revisiting lin RS codes}. They require $ |L| = \Theta (\ell^N) $. By considering subfield subcodes, one can reduce the base $ \ell $ at the expense of possibly reducing code size (relative to field size) and minimum sum-rank distance \cite[Sec. VII]{universal-lrc}. The objective of this manuscript is to study the structure of one family of subfield subcodes of linearized Reed-Solomon codes (Section \ref{sec SR BCH codes}). In particular, we will obtain a better estimate on their dimensions than previously known \cite[Cor. 9]{universal-lrc}. This will lead to codes with larger size and minimum sum-rank distance than previous codes with $ |L| \ll \ell^N $. 

To consider subfield subcodes and the sum-rank metric, we will consider the two finite-field extensions $ \mathbb{F}_{q_0} \subseteq \mathbb{F}_{q_0^m} $ and $ \mathbb{F}_q \subseteq \mathbb{F}_{q^m} $, as in Subsection \ref{subsec finite fields this work}. Since we will only consider the length partition $ n = \ell N $ as in (\ref{eq sum-rank length partition}), for $ \ell $ and $ N $ fixed, we do not need to write the dependency of the sum-rank metric on the field extension and length partition. We will simply denote by
\begin{equation}
\begin{split}
{\rm wt}_{SR} & : \mathbb{F}_{q^m}^n \longrightarrow \mathbb{N}, \textrm{ and} \\
{\rm wt}_{SR}^0 & : \mathbb{F}_{q_0^m}^n \longrightarrow \mathbb{N},
\end{split}
\label{eq sum-rank weights in this work}
\end{equation}
the sum-rank weights for the extensions $ \mathbb{F}_q \subseteq \mathbb{F}_{q^m} $ and $ \mathbb{F}_{q_0} \subseteq \mathbb{F}_{q_0^m} $, respectively (analogously for the corresponding metrics).

\subsection{Skew Polynomial Rings} \label{subsec skew pols}

Skew polynomial rings over division rings (i.e., commutative or non-commutative fields) were originally defined by Ore \cite{ore}. However, in this work, we will need to consider skew polynomial rings over commutative rings which are not necessarily integral domains. We refer to \cite{lang-under} for basic Algebra and to \cite{lam-book} for non-commutative Algebra.

Let $ R $ be an arbitrary commutative ring (with identity) and let $ \sigma : R \longrightarrow R $ be a ring automorphism (we always assume that $ \sigma(1) = 1 $). The skew polynomial ring $ R[z; \sigma] $ is defined as the free module over $ R $ with basis $ \{ z^i \mid i \in \mathbb{N} \} $, being $ \mathbb{N} = \{ 0 ,1,2, \ldots \} $, and with product given by the rule $ z^i z^j = z^{i+j} $, for all $ i,j \in \mathbb{N} $, and the rule
\begin{equation} \label{eq non-comm rule for skew pols}
z a = \sigma(a) z,
\end{equation}
for all $ a \in R $. Then $ R[z; \sigma] $ is a ring with identity $ 1 = z^0 $. It is commutative if, and only if, $ \sigma = {\rm Id} $ (the identity automorphism), and it is an integral domain if, and only if, so is $ R $. Furthermore, if $ R $ is a field, then $ R[z; \sigma] $ is both a left and right Euclidean domain.

To define skew polynomial rings, we will mostly consider the field automorphism
\begin{equation}
\begin{array}{rcl}
\sigma : \mathbb{F}_{q^m} & \longrightarrow & \mathbb{F}_{q^m} \\
a & \mapsto & a^q,
\end{array}
\end{equation}
for all $ a \in \mathbb{F}_{q^m} $. We will also consider $ \sigma $ restricted to some subfields of $ \mathbb{F}_{q^m} $, and we will also extend $ \sigma $ to polynomial rings over such subfields. Regardless of its domain, we will always use the letter $ \sigma $, as its definition can be easily inferred from the context.

\section{Cyclic-Skew-Cyclic Codes} \label{sec CSC codes}

\subsection{The Definition} \label{subsec definitions}

Recall that $ \mathbb{F} = \mathbb{F}_{q_0^m} $, as defined in (\ref{eq main finite field}), and that we consider vectors in $ \mathbb{F}^n $ to be subdivided as follows:
\begin{equation} \label{eq subdivision vectors}
\begin{array}{rll}
\mathbf{c} & = (\mathbf{c}^{(0)}, \mathbf{c}^{(1)}, \ldots, \mathbf{c}^{(\ell - 1)}) & \in \mathbb{F}^n, \textrm{ where} \\
\mathbf{c}^{(i)} & = (c^{(i)}_0, c^{(i)}_1, \ldots, c^{(i)}_{N-1}) & \in \mathbb{F}^N,
\end{array}
\end{equation}
for $ i = 0,1, \ldots, \ell - 1 $. With this notation, we may define the following operators.

\begin{definition} [\textbf{Shifting Operators}] \label{def CSC operators}
The \textit{cyclic inter-block shifting operator} $ \varphi : \mathbb{F}^n \longrightarrow \mathbb{F}^n $ is defined as
$$ \varphi \left( \mathbf{c}^{(0)}, \mathbf{c}^{(1)}, \ldots, \mathbf{c}^{(\ell - 1)} \right) = \left( \mathbf{c}^{(\ell - 1)}, \mathbf{c}^{(0)}, \ldots, \mathbf{c}^{(\ell-2)} \right) . $$
The \textit{skew-cyclic intra-block shifting operator} $ \phi : \mathbb{F}^n \longrightarrow \mathbb{F}^n $ is defined as
$$ \phi \left( \mathbf{c}^{(0)}, \mathbf{c}^{(1)}, \ldots, \mathbf{c}^{(\ell - 1)} \right) = $$
$$ \left( \psi(\mathbf{c}^{(0)}), \psi(\mathbf{c}^{(1)}), \ldots, \psi(\mathbf{c}^{(\ell - 1)}) \right), $$
where $ \psi : \mathbb{F}^N \longrightarrow \mathbb{F}^N $ is the classical skew-cyclic shifting operator, given by
$$ \psi \left( c_0, c_1, \ldots, c_{N - 1} \right) = \left( \sigma(c_{N- 1}), \sigma(c_0), \ldots, \sigma(c_{N-2}) \right) . $$
\end{definition}

The operators in Definition \ref{def CSC operators} can be trivially extended to any field $ L $ and endomorphism $ \sigma : L \longrightarrow L $. These operators depend on the length partition $ n = \ell N $, the field $ L $ and the field endomorphism $ \sigma $. However, we will not write this dependency for simplicity. 

The classical cyclic shifting operator \cite[Sec. 7.2]{macwilliamsbook} \cite[Ch. 4]{pless} is recovered from $ \varphi $ by setting $ N = 1 $, whereas the classical skew-cyclic shifting operator \cite[Page 6]{gabidulin} \cite[Def. 1]{skewcyclic1} is recovered from $ \phi $ by setting $ \ell = 1 $. Moreover, $ \phi $ and $ \varphi $ become the identity maps if $ m = N = 1 $ and $ \ell = 1 $, respectively.

We may now define cyclic-skew-cyclic codes.

\begin{definition} [\textbf{Cyclic-Skew-Cyclic Codes}] \label{def CSC codes}
We say that a code $ \mathcal{C} \subseteq \mathbb{F}^n $ is a \textit{cyclic-skew-cyclic code}, or a \textit{CSC code} for short, if it is a linear code (linear over $ \mathbb{F} $), and
$$ \varphi(\mathcal{C}) \subseteq \mathcal{C} \quad \textrm{and} \quad \phi(\mathcal{C}) \subseteq \mathcal{C}. $$
Analogously for any field $ L $, instead of $ \mathbb{F} $, and any field endomorphism $ \sigma : L \longrightarrow L $.
\end{definition}

By the observations above, classical cyclic codes \cite[Sec. 7.2]{macwilliamsbook} \cite[Ch. 4]{pless} and classical skew-cyclic codes \cite[Page 6]{gabidulin} \cite[Def. 1]{skewcyclic1} are recovered from CSC codes by setting $ m = N = 1 $ and $ \ell = 1 $, respectively.

\subsection{Algebraic Characterizations} \label{subsec algebraic char}

We now give an algebraic description of CSC codes, which in turn recovers those of classical cyclic codes and skew-cyclic codes. First, we define 
\begin{equation} \label{eq non-comm ring first}
\mathcal{R}^\prime = \frac{\mathcal{S}^\prime[x]}{(x^\ell - 1)}, \quad \textrm{where} \quad \mathcal{S}^\prime = \frac{\mathbb{F}[z; \sigma]}{(z^N - 1)}.
\end{equation}
The use of the prime symbol is due to the fact that we first consider $ \mathcal{R}^\prime $, since it is simpler to describe and corresponds to the length partition in (\ref{eq sum-rank length partition}). However, we will use an alternative ring throughout the manuscript. We will assume that $ \mathcal{S}^\prime $ is a ring, which holds if, e.g., $ m $ divides $ N $. We are considering the polynomial ring $ \mathcal{S}^\prime[x] $ as usual, where the variable $ x $ commutes with all elements. In particular, we have that
$$ zx = xz. $$
Ideals in $ \mathcal{S}^\prime[x] $ are assumed to be left ideals. Hence $ (x^\ell - 1) $ denotes the left ideal of $ \mathcal{S}^\prime[x] $ generated by $ x^\ell - 1 \in \mathcal{S}^\prime[x] $, even though this ideal in particular is two-sided (since $ xz = zx $ and $ xa = ax $ for all $ a \in \mathbb{F} $).

We may identify $ \mathbb{F}^n $ with $ \mathcal{R}^\prime $, as vector spaces over $ \mathbb{F} $, via the vector space isomorphism $ \mu : \mathbb{F}^n \longrightarrow \mathcal{R}^\prime $ given by
\begin{equation} \label{eq coordinate ismorphism first}
\mu \left( \mathbf{c}^{(0)}, \mathbf{c}^{(1)}, \ldots, \mathbf{c}^{(\ell - 1)} \right) = \left( \sum_{i=0}^{\ell - 1} \left( \sum_{j = 0}^{N-1} c^{(i)}_j z^j \right) x^i \right), 
\end{equation}
where $ \mathbf{c}^{(i)} = (c^{(i)}_0, c^{(i)}_1, \ldots, c^{(i)}_{N-1}) \in \mathbb{F}^N $, for $ i = 0,1, \ldots, \ell -1 $. We will often denote
$$ c(x, z) = \mu(\mathbf{c}). $$

It can be shown that $ \mathcal{R}^\prime $ is naturally isomorphic to
\begin{equation} \label{eq non-comm ring second}
\mathcal{R} = \frac{\mathcal{S}[z; \sigma]}{(z^N - 1)}, \quad \textrm{where} \quad \mathcal{S} = \frac{\mathbb{F}[x]}{(x^\ell - 1)},
\end{equation}
where $ \sigma : \mathbb{F} \longrightarrow \mathbb{F} $ is extended uniquely to $ \sigma : \mathcal{S} \longrightarrow \mathcal{S} $ by setting $ \sigma (x) = x $. 

This is the first time that we need to extend $ \sigma $ to a larger ring. First note that $ \sigma $ can be uniquely extended to $ \sigma : \mathbb{F}[x] \longrightarrow \mathbb{F}[x] $, satisfying that $ \sigma (x) = x $. Second, $ \sigma : \mathbb{F}[x] \longrightarrow \mathbb{F}[x] $ can be uniquely extended to 
\begin{equation} \label{eq extending theta to superring}
\sigma : \frac{\mathbb{F}[x]}{(m(x))} \longrightarrow \frac{\mathbb{F}[x]}{(m(x))},
\end{equation}
satisfying that $ \sigma(x) = x $, where $ m(x) \in \mathbb{F}[x] $, if and only if, 
$$ \sigma(m(x)) \in (m(x)). $$
This is clearly the case for $ m(x) = x^\ell - 1 $ since $ \sigma(x^\ell - 1) = x^\ell -  1 $. We will see less trivial cases later on.

Observe that $ \mathcal{S} $ is a commutative ring, but it is not an integral domain in general, thus $ \mathcal{R} $ is not an integral domain in general. Note that, in this case, the fact that $ zx = xz $ comes from rule (\ref{eq non-comm rule for skew pols}) in the definition of skew polynomial rings and $ \sigma (x) = x $.

We may also identify $ \mathbb{F}^n $ with $ \mathcal{R} $ as vector spaces over $ \mathbb{F} $. In this case, we may use $ \nu : \mathbb{F}^n \longrightarrow \mathcal{R} $ given by
\begin{equation} \label{eq coordinate ismorphism second}
\nu \left( \mathbf{c}^{(0)}, \mathbf{c}^{(1)}, \ldots, \mathbf{c}^{(\ell - 1)} \right) = \left( \sum_{j = 0}^{N-1} \left( \sum_{i=0}^{\ell - 1} c^{(i)}_j x^i \right) z^j \right), 
\end{equation}
where $ \mathbf{c}^{(i)} = (c^{(i)}_0, c^{(i)}_1, \ldots, c^{(i)}_{N-1}) \in \mathbb{F}^N $, for $ i = 0,1, \ldots, \ell - 1 $.

We have the following algebraic characterization of CSC codes.

\begin{theorem} \label{th algebraic char}
Let $ \mathcal{C} \subseteq \mathbb{F}^n $ be a code, which we do not assume to be linear. The following are equivalent:
\begin{enumerate}
\item
$ \mathcal{C} \subseteq \mathbb{F}^n $ is a CSC code.
\item
$ \mu(\mathcal{C}) \subseteq \mathcal{R}^\prime $ is a left ideal of $ \mathcal{R}^\prime $.
\item
$ \nu(\mathcal{C}) \subseteq \mathcal{R} $ is a left ideal of $ \mathcal{R} $.
\end{enumerate}
\end{theorem}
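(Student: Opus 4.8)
The plan is to prove the chain of equivalences by first establishing that the two coordinate isomorphisms $\mu$ and $\nu$ conjugate the shifting operators $\varphi,\phi$ into the operations of ``multiplication by $x$'' and ``multiplication by $z$'' on the respective rings, and then to invoke the standard fact that a linear subspace closed under multiplication by a set of generators of a ring (as an algebra over $\mathbb{F}$) is precisely a left ideal. Concretely, I would first record the bijective $\mathbb{F}$-linear correspondence $\mu:\mathbb{F}^n\to\mathcal{R}'$ from (\ref{eq coordinate ismorphism first}), and verify by direct computation that for $c(x,z)=\mu(\mathbf{c})$ one has
\begin{equation*}
x\cdot c(x,z) = \mu(\varphi(\mathbf{c})) \quad\text{and}\quad z\cdot c(x,z) = \mu(\phi(\mathbf{c})),
\end{equation*}
where in the first identity one uses $x^\ell\equiv 1$ to wrap the top block $\mathbf{c}^{(\ell-1)}$ around to position $0$, and in the second identity one uses $z^N\equiv 1$ together with the commutation rule $za=\sigma(a)z$ from (\ref{eq non-comm rule for skew pols}) to see that multiplying $\sum_j c_j z^j$ on the left by $z$ produces $\sum_j \sigma(c_{j-1})z^j$ with indices mod $N$, which is exactly the action of $\psi$ on each block. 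The same bookkeeping, carried out with (\ref{eq coordinate ismorphism second}), gives the analogous statements for $\nu:\mathbb{F}^n\to\mathcal{R}$.

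Next I would argue (1)$\Leftrightarrow$(2). Since $\mu$ is an $\mathbb{F}$-linear isomorphism, $\mathcal{C}$ is $\mathbb{F}$-linear iff $\mu(\mathcal{C})$ is an $\mathbb{F}$-submodule of $\mathcal{R}'$. Assuming this, $\mu(\mathcal{C})$ is a left ideal iff it is closed under left multiplication by every element of $\mathcal{R}'$; because $\mathcal{R}'=\mathcal{S}'[x]/(x^\ell-1)$ is generated as an $\mathbb{F}$-algebra by $x$ together with $\mathbb{F}$ and the image of $z$, and since $\mathbb{F}$-scalars already preserve $\mu(\mathcal{C})$ by linearity, closure under left multiplication by all of $\mathcal{R}'$ is equivalent to closure under left multiplication by $x$ and by $z$ alone. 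By the identities of the previous paragraph, the former is equivalent to $\varphi(\mathcal{C})\subseteq\mathcal{C}$ and the latter to $\phi(\mathcal{C})\subseteq\mathcal{C}$. This is precisely the definition of a CSC code (Definition \ref{def CSC codes}), giving (1)$\Leftrightarrow$(2). The proof of (1)$\Leftrightarrow$(3) is verbatim the same, using $\nu$ in place of $\mu$ and the ring $\mathcal{R}=\mathcal{S}[z;\sigma]/(z^N-1)$, which is likewise generated as an $\mathbb{F}$-algebra by $x$, $z$, and $\mathbb{F}$. The equivalence (2)$\Leftrightarrow$(3) then follows formally from the natural ring isomorphism $\mathcal{R}'\cong\mathcal{R}$ noted after (\ref{eq non-comm ring second}), but in fact it need not be proved separately once (1)$\Leftrightarrow$(2) and (1)$\Leftrightarrow$(3) are in hand.

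I expect the main obstacle to be purely the notational care in the second identity $z\cdot c(x,z)=\mu(\phi(\mathbf{c}))$: one must track how the twist $\sigma$ interacts with the variable $x$ (it fixes $x$, by construction of the extended automorphism), how it acts coefficientwise on $\mathbb{F}$, and how the reduction $z^N\equiv 1$ cyclically permutes the $z$-powers \emph{within} each $x^i$-block without mixing different blocks — so that the effect is exactly the block-wise operator $\psi$ of Definition \ref{def CSC operators} applied uniformly, i.e. the operator $\phi$. A subtlety worth flagging explicitly is that this argument presupposes $\mathcal{S}'$ (equivalently $\mathcal{S}$) is a genuine ring and that $\sigma$ extends as claimed — both of which are guaranteed under the running hypothesis (e.g. $m\mid N$, and $\sigma(x^\ell-1)=x^\ell-1$ as already observed) — so no extra work is needed there. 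One should also note at the outset that, although the statement does not assume $\mathcal{C}$ linear, each of conditions (2) and (3) forces $\mathbb{F}$-linearity (an ideal is in particular an $\mathbb{F}$-submodule), so the equivalences are consistent; I would make this remark once rather than repeating it in each direction.
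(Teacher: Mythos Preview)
Your proposal is correct and follows essentially the same approach as the paper: establish the intertwining identities $x\,\mu(\mathbf{c})=\mu(\varphi(\mathbf{c}))$ and $z\,\mu(\mathbf{c})=\mu(\phi(\mathbf{c}))$, then observe that $\mathbb{F}$-linearity together with closure under left multiplication by the algebra generators $x$ and $z$ is equivalent to being a left ideal, and treat item (3) analogously. The paper's own proof is terser but identical in substance; your added discussion of the bookkeeping (wrap-around via $x^\ell\equiv 1$, $z^N\equiv 1$, and the commutation $zx=xz$ respectively $\sigma(x)=x$) is accurate, though note that the point about $\sigma$ fixing $x$ is strictly needed only for the $\nu/\mathcal{R}$ computation, while for $\mu/\mathcal{R}'$ one appeals directly to $zx=xz$.
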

\begin{proof}
For all $ \mathbf{c} \in \mathbb{F}^n $, it holds that
$$ x \mu(\mathbf{c}) = \mu (\varphi(\mathbf{c})) \quad \textrm{and} \quad z \mu (\mathbf{c}) = \mu (\phi (\mathbf{c})). $$
Hence $ \mathcal{C} $ is $ \mathbb{F} $-linear, $ \varphi (\mathcal{C}) \subseteq \mathcal{C} $ and $ \phi (\mathcal{C}) \subseteq \mathcal{C} $ if, and only if, $ \mu (\mathcal{C}) $ is $ \mathbb{F} $-linear, $ x \mu (\mathcal{C}) \subseteq \mu (\mathcal{C}) $ and $ z \mu (\mathcal{C}) \subseteq \mu (\mathcal{C}) $. The latter three conditions are equivalent to $ \mu (\mathcal{C}) $ being a left ideal of $ \mathcal{R}^\prime $. This proves the equivalence of items 1 and 2. The equivalence with item 3 is analogous.
\end{proof}

Observe that Theorem \ref{th algebraic char} recovers the classical algebraic characterizations of cyclic codes and skew-cyclic codes by setting $ m = N = 1 $ and $ \ell = 1 $, respectively. In general, Theorem \ref{th algebraic char} states that a CSC code is simultaneously a cyclic code over the non-commutative finite ring $ \mathcal{S}^\prime $, and a skew-cyclic code over the commutative finite ring $ \mathcal{S} $.

Unless otherwise stated, we will identify $ \mathcal{C} $, $ \mu(\mathcal{C}) $  and $ \nu(\mathcal{C}) $, and we will denote by $ \mathcal{C} $ both the CSC code in $ \mathbb{F}^n $ and the left ideal of $ \mathcal{R}^\prime $ or $ \mathcal{R} $, indistinctly.

\section{Generators of Cyclic-Skew-Cyclic Codes} \label{sec generators}

\subsection{Finding a Single Generator as a Left Ideal} \label{subsec finding single gen pol}

Cyclic codes over an arbitrary (commutative or not) finite ring are not always principal ideals \cite{greferath-cyclic}, i.e., they do not always have a single generator. In fact, being \textit{principal} is equivalent to being \textit{splitting} \cite[Th. 3.2]{greferath-cyclic}. A similar result holds for \textit{cyclic convolutional codes} \cite[Th. 3.5]{heide-cyclic} \cite[Sec. 4]{heide-schmale}, which make use of skew polynomials but are not CSC codes. 

In this subsection, we will show that $ \mathcal{R} $ (thus $ \mathcal{R}^\prime $) is a \textit{principal left ideal ring} (PIR), assuming that $ p $ does not divide $ \ell $ (i.e., $ x^\ell - 1 $ has simple roots in $ \mathbb{F}_q $). To that end, we will find a minimal generator skew polynomial of a CSC code that will enable us to obtain a generator matrix (Subsection \ref{subsec consequences gen pol}) and obtain the defining set (Subsection \ref{subsec roots and generators}) of the CSC code.

From now on, let the polynomials $ m_1(x), $ $ m_2(x), $ $ \ldots, $ $ m_t(x) $ $ \in \mathbb{F}[x] $ form the unique irreducible decomposition
\begin{equation} \label{eq irreducible dec x^ell - 1}
x^\ell - 1 = m_1(x) m_2(x) \cdots m_t(x)
\end{equation}
of $ x^\ell - 1 $ in the polynomial ring $ \mathbb{F}[x] $. 

Recall from Subsection \ref{subsec finite fields this work} that we are assuming that $ \mathbb{F}_q = \mathbb{F}_{q_0^s} $ contains all roots of $ x^\ell - 1 $. Recall from Subsection \ref{subsec skew pols} that $ \sigma : \mathbb{F} \longrightarrow \mathbb{F} $ is given by $ \sigma(a) = a^q $, for $ a \in \mathbb{F} $. Combining these two facts, we deduce that $ \sigma(a) = a $, for every root $ a $ of $ x^\ell - 1 $. Since the roots of $ m_i(x) $ are a subset of the roots of $ x^\ell - 1 $, its coefficients lie in $ \mathbb{F}_q $ too, or in other words,
$$ \sigma(m_i(x)) = m_i(x), $$
for all $ i = 1,2, \ldots, t $. Hence, as in (\ref{eq extending theta to superring}), we may extend $ \sigma $ to 
$$ \sigma : \frac{\mathbb{F}[x]}{(m_i(x))} \longrightarrow \frac{\mathbb{F}[x]}{(m_i(x))}, $$
satisfying $ \sigma(x) = x $, for $ i = 1,2, \ldots, t $.

Assume from now on that $ x^\ell -1 $ has simple roots (i.e., $ p $ does not divide $ \ell $), then $ m_i(x) \neq m_j(x) $ if $ i \neq j $. By the B{\'e}zout identities in the polynomial ring $ (\mathbb{F} \cap \mathbb{F}_{q_0^s})[x] $ over the finite field $ \mathbb{F} \cap \mathbb{F}_{q_0^s} $, there exist $ a_i(x), b_i(x) \in (\mathbb{F} \cap \mathbb{F}_{q_0^s})[x] $ such that
\begin{equation} \label{eq bezout identities}
a_i(x) \left( \frac{x^\ell - 1}{m_i(x)} \right) + b_i(x) m_i(x) = 1,
\end{equation} 
for $ i = 1,2, \ldots, t $. Define the $ i $th \textit{primitive idempotent} \cite[Sec. 4.3]{pless} as
\begin{equation} \label{eq primitive idempotent}
\begin{split}
e_i(x) & = a_i(x) m_1(x) \cdots m_{i-1}(x) m_{i+1}(x) \cdots m_t(x) \\
& = a_i(x) \left( \frac{x^\ell - 1}{m_i(x)} \right) \in (\mathbb{F} \cap \mathbb{F}_{q_0^s})[x],
\end{split}
\end{equation}
for $ i = 1,2,\ldots, t $. Note that $ e_i(x) $ is not idempotent in $ \mathbb{F}[x] $, but its image in $ \mathbb{F}[x] / (x^\ell - 1) $ is (see (\ref{eq primitive idempotents are idempotents}) below). Note also that, since $ e_i(x) \in (\mathbb{F} \cap \mathbb{F}_{q_0^s})[x] $, then
$$ \sigma(e_i(x)) = e_i(x), $$
for $ i = 1,2, \ldots, t $, and in fact, it lies in the center of $ \mathcal{R} $ (or $ \mathcal{R}^\prime $) as it is constant in $ z $.

We may now state and prove the following result.

\begin{lemma} \label{lemma chinese remainder theorem}
Define the rings
$$ \mathcal{R}_i = \frac{\mathcal{S}_i [z; \sigma]}{(z^N - 1)}, \quad \textrm{where} \quad \mathcal{S}_i = \frac{\mathbb{F}[x]}{(m_i(x))}, $$
for $ i = 1,2, \ldots, t $. The natural maps
$$ \begin{array}{rrcl}
\rho : & \mathcal{R} & \longrightarrow & \mathcal{R}_1 \times \mathcal{R}_2 \times \cdots \times \mathcal{R}_t, \textrm{ and} \\
\tau : & \mathcal{S} [z; \sigma] & \longrightarrow & \left( \mathcal{S}_1 \times \mathcal{S}_2 \times \cdots \times \mathcal{S}_t \right) [z; \sigma] ,
\end{array} $$
given simply by projecting modulo the corresponding $ m_i(x) $, 
$$ \rho \left( \sum_{j=0}^{N-1} (f_j(x) + (x^\ell - 1)) z^j \right) = $$
$$ \sum_{j=0}^{N-1} \left( f_j(x) + (m_1(x)), \ldots, f_j(x) + (m_t(x)) \right) z^j, $$
and analogously for $ \tau $, are well-defined ring isomorphisms. In addition, it holds that
\begin{equation}
\rho(e_i(x)) = \tau(e_i(x)) = \mathbf{e}_i = (0, \ldots, 1, \ldots, 0) \in \mathbb{F}^t,
\label{eq primitive idempotents are idempotents}
\end{equation}
where $ \mathbf{e}_i \in \mathbb{F}^t $ has all of its components equal to $ 0 $ except its $ i $th component, which is equal to $ 1 $, for $ i = 1,2, \ldots, t $.

Finally, given $ f_i(x,z) \in \mathcal{R}_i $, for $ i = 1,2, \ldots, t $, the unique $ f(x,z) \in \mathcal{R} $, such that $ \rho(f(x,z)) = (f_1(x,z), $ $ f_2(x,z),$ $ \ldots, $ $ f_t(x,z)) $, is given by
$$ f(x,z) = \sum_{i=1}^t e_i(x) \widetilde{f}_i(x,z) = \sum_{i=1}^t \widetilde{f}_i(x,z) e_i(x), $$
where $ \widetilde{f}_i(x,z) \in \mathcal{R} $ is such that its projection onto $ \mathcal{R}_i $ is $ f_i(x,z) $. Analogously for $ \tau $.
\end{lemma}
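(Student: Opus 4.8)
The plan is to recognize this lemma as a skew-polynomial version of the Chinese Remainder Theorem, and to reduce it to the commutative CRT for $\mathbb{F}[x]$ together with a compatibility check with the automorphism $\sigma$. First I would establish the ring isomorphism $\mathbb{F}[x]/(x^\ell-1) \cong \mathbb{F}[x]/(m_1(x)) \times \cdots \times \mathbb{F}[x]/(m_t(x))$; this is the classical CRT, valid because $p \nmid \ell$ forces the $m_i(x)$ to be pairwise coprime, and the explicit idempotents realizing the inverse are exactly the $e_i(x)$ from (\ref{eq primitive idempotent}), as in \cite[Sec. 4.3]{pless}. At this point I would note the key point that each $e_i(x)$, and indeed each $m_i(x)$, is fixed by $\sigma$ (this was already argued in the text: their roots lie in $\mathbb{F}_q$, hence in the fixed field of $\sigma$). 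Consequently the isomorphism $\mathcal{S} \cong \mathcal{S}_1 \times \cdots \times \mathcal{S}_t$ is $\sigma$-equivariant, where $\sigma$ acts componentwise on the right-hand side.

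Next I would lift this to the skew-polynomial level. Because $\sigma$ commutes with the projection maps $\mathcal{S} \to \mathcal{S}_i$, the rule $za = \sigma(a)z$ is preserved, so the componentwise map extends to a ring homomorphism $\tau : \mathcal{S}[z;\sigma] \to (\mathcal{S}_1 \times \cdots \times \mathcal{S}_t)[z;\sigma]$ by acting coefficientwise; bijectivity is immediate since it is bijective on coefficients. For $\rho$, one observes that $(z^N-1)$ maps to the ideal generated by $(z^N-1)$ in each factor (here one uses again that $z^N-1$ has central, $\sigma$-fixed coefficients), so $\tau$ descends to $\rho : \mathcal{R} \to \mathcal{R}_1 \times \cdots \times \mathcal{R}_t$, and it is an isomorphism because it is one modulo $(z^N-1)$ in each coordinate. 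Equation (\ref{eq primitive idempotents are idempotents}) is then just the statement that $e_i(x)$ reduces to $1$ modulo $m_i(x)$ and to $0$ modulo $m_j(x)$ for $j\neq i$ — which follows directly from the Bézout identity (\ref{eq bezout identities}) — combined with the fact that $e_i(x)$ is constant in $z$.

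For the final reconstruction formula, given $f_i(x,z) \in \mathcal{R}_i$ pick any preimages $\widetilde{f}_i(x,z) \in \mathcal{R}$ and set $f(x,z) = \sum_i e_i(x)\widetilde{f}_i(x,z)$. Applying $\rho$ and using (\ref{eq primitive idempotents are idempotents}) together with the fact that $\mathbf{e}_i \mathbf{e}_j = \delta_{ij}\mathbf{e}_i$ componentwise shows $\rho(f(x,z))$ has $i$th component $f_i(x,z)$; uniqueness follows from injectivity of $\rho$. The two expressions for $f(x,z)$ coincide because $e_i(x)$ lies in the center of $\mathcal{R}$ (it is $\sigma$-fixed and $z$-free, so it commutes with $z$ and with all of $\mathcal{S}$), a point worth stating explicitly. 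The analogous claims for $\tau$ are verbatim the same, working in $\mathcal{S}[z;\sigma]$ rather than its quotient. The only genuinely delicate step — and the one I would write out carefully — is verifying that $\sigma$-equivariance of the coefficientwise map is exactly what makes it multiplicative as a map of skew-polynomial rings (one checks $\tau(za) = \tau(\sigma(a)z) = \sigma(\tau(a))z = z\tau(a)$ using $\sigma$-equivariance on coefficients); everything else is bookkeeping on top of the classical CRT.
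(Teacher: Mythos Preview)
Your proposal is correct and follows essentially the same route as the paper: start from the classical Chinese Remainder isomorphism $\mathcal{S}\cong\mathcal{S}_1\times\cdots\times\mathcal{S}_t$, observe that it is $\sigma$-equivariant (because $\sigma(x)=x$ and the $m_i(x)$ are $\sigma$-fixed), lift coefficientwise to the skew-polynomial rings to obtain $\tau$, then pass to the quotient by $(z^N-1)$ to obtain $\rho$, and read off the idempotent and reconstruction statements from the commutative CRT. Your write-up is in fact a bit more explicit than the paper's in checking multiplicativity of $\tau$ via $\tau(za)=z\tau(a)$ and in invoking the centrality of $e_i(x)$ for the two-sided reconstruction formula, but the underlying argument is the same.
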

\begin{proof}
By the \textit{Chinese Remainder Theorem} \cite[Sec. 10.9, Th. 5]{macwilliamsbook}, the natural map
\begin{equation} \label{eq chinese remainder theorem}
\equalto{\left( \frac{\mathbb{F}[x]}{(x^\ell - 1)} \right)}{\mathcal{S}}
 \longrightarrow \equalto{\left( \frac{\mathbb{F}[x]}{(m_1(x))} \right) \times \cdots \times \left( \frac{\mathbb{F}[x]}{(m_t(x))} \right)}{\mathcal{S}_1 \times \mathcal{S}_2 \times \cdots \times \mathcal{S}_t}
\end{equation}
is a ring isomorphism. The map $ \tau $ is just extending such a ring isomorphism to the corresponding skew polynomial rings. This can be done since $ \sigma $ commutes with the map in (\ref{eq chinese remainder theorem}), which holds since $ \sigma(x) = x $, for $ \sigma $ defined over all domains $ \mathcal{S}, \mathcal{S}_1, \mathcal{S}_2, \ldots, \mathcal{S}_t $.

Next, there is a natural surjective ring morphism 
$$ \left( \mathcal{S}_1 \times \mathcal{S}_2 \times \cdots \times \mathcal{S}_t \right) [z; \sigma] \longrightarrow \mathcal{R}_1 \times \mathcal{R}_2 \times \cdots \times \mathcal{R}_t. $$
By composing it with $ \tau $, we obtain the ring morphism
$$ \rho_0 : \left( \frac{\mathbb{F}[x]}{(x^\ell - 1)} \right) [z; \sigma] \longrightarrow \mathcal{R}_1 \times \mathcal{R}_2 \times \cdots \times \mathcal{R}_t. $$
Finally, we can extend $ \rho_0 $ to $ \rho $ by noting that $ \rho_0 (z^N - 1) = 0 $. This shows that $ \rho $ is well-defined and a ring morphism.

The fact that $ \rho $ and $ \tau $ are bijective, i.e., ring isomorphisms, can be shown coefficient-wise by using that (\ref{eq chinese remainder theorem}) is a ring isomorphism. Similarly, the other statements of the theorem can be proven coefficient-wise using the Chinese Remainder Theorem.
\end{proof}

The following codes will be useful thanks to Lemma \ref{lemma chinese remainder theorem}.

\begin{definition} \label{def skew-cyclic component code}
Let $ \pi_i : \mathcal{R}_1 \times \mathcal{R}_2 \times \cdots \times \mathcal{R}_t \longrightarrow \mathcal{R}_i $ denote the projection map onto $ \mathcal{R}_i $, for $ i = 1,2, \ldots, t $. Given a CSC code $ \mathcal{C} \subseteq \mathcal{R} $, we define its $ i $th \textit{skew-cyclic component} as 
$$ \mathcal{C}^{(i)} = \pi_i(\rho(\mathcal{C})) \subseteq \mathcal{R}_i = \frac{\mathcal{S}_i [z;\sigma]}{(z^N - 1)}, $$
which is a skew-cyclic code of length $ N $ over the field $ \mathcal{S}_i $ (recall that $ m_i(x) $ is irreducible) with field automorphism $ \sigma : \mathcal{S}_i \longrightarrow \mathcal{S}_i $, for $ i = 1,2, \ldots, t $. 
\end{definition}

We may now state and prove the main result of this subsection.

\begin{theorem} \label{th single generator}
If $ \mathcal{C} \subseteq \mathcal{R} $ is a CSC code, then it holds that
$$ \rho(\mathcal{C}) = \mathcal{C}^{(1)} \times \mathcal{C}^{(2)} \times \cdots \times \mathcal{C}^{(t)}. $$
In addition, there exist unique $ g(x,z), h(x,z) \in \mathcal{S} [z; \sigma] $ satisfying the following three properties: 
\begin{enumerate}
\item
Their projections onto $ \mathcal{S}_i [z; \sigma] $ are monic in $ z $, for $ i = 1,2, \ldots, t $.
\item
It holds that
$$ \mathcal{C} = (g(x,z)) $$
in the ring $ \mathcal{R} = \mathcal{S}[z; \sigma] / (z^N - 1) $.
\item
It holds that
$$ g(x,z) h(x,z) = h(x,z) g(x,z) = z^N - 1 $$
in the ring $ \mathcal{S}[z; \sigma] $.
\end{enumerate}
In particular, $ \mathcal{C} $ is a principal left ideal and $ \mathcal{R} $ and $ \mathcal{R}^\prime $ are PIRs. 

Furthermore, the images of $ g(x,z) $ and $ h(x,z) $ in $ \mathcal{S}_i [z;\sigma] $, denoted by $ g_i(x,z), h_i(x,z) \in \mathcal{S}_i [z;\sigma] $, are the minimal generator and check skew polynomials, respectively, of the $ i $th skew-cyclic component $ \mathcal{C}^{(i)} \subseteq \mathcal{R}_i $, for $ i = 1,2, \ldots, t $. In particular, it holds that
\begin{equation} \label{eq generator check pols from components}
\begin{split}
g(x,z) & = \sum_{i = 1}^t e_i(x) \widetilde{g}_i(x,z) = \sum_{i = 1}^t \widetilde{g}_i(x,z) e_i(x), \textrm{ and} \\
h(x,z) & = \sum_{i = 1}^t e_i(x) \widetilde{h}_i(x,z) = \sum_{i = 1}^t \widetilde{h}_i(x,z) e_i(x),
\end{split}
\end{equation}
in the ring $ \mathcal{S} [z; \sigma] $, for any $ \widetilde{g}_i(x,z), \widetilde{h}_i(x,z) \in \mathcal{S} [z;\sigma] $ such that their projections onto $ \mathcal{S}_i [z;\sigma] $ are $ g_i(x,z) $ and $ h_i(x,z) $, respectively. 
\end{theorem}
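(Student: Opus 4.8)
The plan is to push everything through the Chinese-remainder isomorphism $\rho$ of Lemma~\ref{lemma chinese remainder theorem}, reduce the single-generator problem to the classical structure theory of skew-cyclic codes over the \emph{fields} $\mathcal{S}_i$, and then reassemble the pieces using the primitive idempotents $e_i(x)$. First I would establish the product decomposition: since $\rho$ is a ring isomorphism, $\rho(\mathcal{C})$ is a left ideal of the product ring $\mathcal{R}_1\times\cdots\times\mathcal{R}_t$, and a left ideal of a finite product of rings is the product of its coordinate projections (multiply by the central idempotent $\mathbf{e}_i=\rho(e_i(x))$ to isolate the $i$th slot, then sum). As $\pi_i(\rho(\mathcal{C}))=\mathcal{C}^{(i)}$ by Definition~\ref{def skew-cyclic component code}, this gives $\rho(\mathcal{C})=\mathcal{C}^{(1)}\times\cdots\times\mathcal{C}^{(t)}$, and each $\mathcal{C}^{(i)}$ is a left ideal of $\mathcal{R}_i=\mathcal{S}_i[z;\sigma]/(z^N-1)$, i.e.\ a skew-cyclic code over the field $\mathcal{S}_i$.

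Next I would analyze one component. Write $D_i=\mathcal{S}_i[z;\sigma]$; since $\mathcal{S}_i$ is a field, $D_i$ is a left and right Euclidean domain (Subsection~\ref{subsec skew pols}). The standing assumption that $\mathcal{R}$ is a ring forces $\sigma^N={\rm Id}$ on $\mathbb{F}$, hence on $\mathcal{S}_i$, so $z^N-1$ is central in $D_i$. The preimage of $\mathcal{C}^{(i)}$ in $D_i$ is a left ideal containing $z^N-1$, so by left-Euclideanness it equals $D_i g_i$ for a unique monic $g_i=g_i(x,z)$, which is thus a right divisor of $z^N-1$, say $z^N-1=h_i g_i$ with $h_i$ monic. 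The key trick is that $g_i$ is also a left divisor with the same cofactor: centrality gives $g_i(h_i g_i)=g_i(z^N-1)=(z^N-1)g_i=(h_i g_i)g_i$, and cancelling $g_i\neq0$ on the right in the domain $D_i$ yields $g_i h_i=h_i g_i=z^N-1$. These $g_i$, $h_i$ are exactly the minimal generator and check skew polynomials of $\mathcal{C}^{(i)}$ in the sense of Definition~\ref{def minimal gen and check skew pol}.

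Then I would reassemble. Set $g(x,z)=\sum_{i}e_i(x)\widetilde{g}_i(x,z)$ and $h(x,z)=\sum_{i}e_i(x)\widetilde{h}_i(x,z)$ as in~(\ref{eq generator check pols from components}), for arbitrary lifts; by Lemma~\ref{lemma chinese remainder theorem} their images under $\tau$ are $(g_1,\dots,g_t)$ and $(h_1,\dots,h_t)$, while $\tau$ sends $z^N-1$ to the constant tuple, so the componentwise identities $g_ih_i=h_ig_i=z^N-1$ transfer to $gh=hg=z^N-1$ in $\mathcal{S}[z;\sigma]$ (property~3); property~1 is immediate since the projection of $g$ to $\mathcal{S}_i[z;\sigma]$ is the monic $g_i$. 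For property~2, $\rho$ maps the left ideal $(g)\subseteq\mathcal{R}$ onto the left ideal generated by $\rho(g)=(\bar g_1,\dots,\bar g_t)$ in $\prod_i\mathcal{R}_i$, which by the idempotent argument is $\prod_i\mathcal{R}_i\bar g_i=\prod_i\mathcal{C}^{(i)}=\rho(\mathcal{C})$, hence $(g)=\mathcal{C}$. Uniqueness reduces componentwise through $\rho$: any $g',h'$ satisfying 1--3 give $g'_i$ monic, right-dividing $z^N-1$ (from $h'_ig'_i=z^N-1$) and generating the same left ideal of $D_i$ as $g_i$; since the units of $D_i$ are the nonzero constants, monicity forces $g'_i=g_i$, so $g'=g$, and then $h'=h$ by cancelling $g$ in $g'h'=z^N-1=gh$. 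Finally $\mathcal{R}$ (and $\mathcal{R}^\prime\cong\mathcal{R}$) is a PIR, because the same argument shows any left ideal is, via $\rho$, a product of the principal left ideals $D_i g'_i/(z^N-1)$, generated overall by $\sum_i e_i(x)\widetilde{g}_i'(x,z)$.

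I expect the one genuinely delicate step to be the passage from ``$g_i$ right-divides $z^N-1$'' to ``$g_ih_i=h_ig_i=z^N-1$'': it rests on $z^N-1$ lying in the center of $\mathcal{S}_i[z;\sigma]$, which is precisely the hypothesis that makes $\mathcal{R}$ a ring, together with cancellation in the domain $D_i$. Everything else is bookkeeping: transferring statements across the ring isomorphisms $\rho$ and $\tau$ of Lemma~\ref{lemma chinese remainder theorem} and invoking the standard structure of skew-cyclic codes over a field.
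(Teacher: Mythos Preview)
Your proposal is correct and follows essentially the same approach as the paper: the paper's proof is a one-line appeal to Lemma~\ref{lemma chinese remainder theorem} (the Chinese-remainder isomorphism and the idempotent formula~(\ref{eq primitive idempotents are idempotents})) together with the standard structure theory of skew-cyclic codes over a field, and you have simply unpacked those citations. Your explicit derivation of $g_ih_i=h_ig_i$ from the centrality of $z^N-1$ in $\mathcal{S}_i[z;\sigma]$ and cancellation in the domain $D_i$ is exactly the content hidden behind the paper's reference to \cite[Th.~1,~2]{ontheroots}, and your uniqueness argument via $\tau$ is the natural elaboration of the last clause of Lemma~\ref{lemma chinese remainder theorem}.
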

\begin{proof}
It follows directly from Lemma \ref{lemma chinese remainder theorem} (mainly (\ref{eq primitive idempotents are idempotents})) and the corresponding results for skew-cyclic codes (for instance, \cite[Lemma 1]{skewcyclic1} or \cite[Th. 1, 2]{ontheroots}).
\end{proof}

Theorem \ref{th single generator} motivates the following definition.

\begin{definition} \label{def minimal gen and check skew pol}
Let $ \mathcal{C} \subseteq \mathcal{R} $ be a CSC code, and let $ g(x,z), h(x,z) \in \mathcal{S} [z; \sigma] $ be as in Theorem \ref{th single generator}. We say that $ g(x,z) $ is the \textit{minimal generator skew polynomial} of $ \mathcal{C} $, and $ h(x,z) $ is the \textit{minimal check skew polynomial} of $ \mathcal{C} $.
\end{definition}

As expected, Definition \ref{def minimal gen and check skew pol} above recovers the classical definition of minimal generator and check skew polynomials when $ \ell = 1 $ \cite[Th. 1]{ontheroots}. However, in the classical cyclic case $ m = N = 1 $, it holds that
\begin{equation}
\begin{split}
g(x,z) & = \sum_{i \in I} e_i(x) + \sum_{j \in [t] \setminus I} e_j(x) ( z - 1 ), \textrm{ and} \\
h(x,z) & = \sum_{j \in [t] \setminus I} e_j(x) + \sum_{i \in  I} e_i(x) ( z - 1 ),
\end{split}
\label{eq min gen and check skew pol for classical case idemp}
\end{equation}
for a set $ I \subseteq [t] $. Hence the image of $ g(x,z) $ in $ \mathcal{R} $ coincides with the \textit{unique idempotent generator} of $ \mathcal{C} $ \cite[Th. 4.3.2]{pless}, which is $ \sum_{i \in I} e_i(x) \in \mathbb{F}[x] / (x^\ell - 1) $ \cite[Th. 4.3.8 (vi)]{pless}.

\subsection{Finding a Generator Matrix and the Dimension} \label{subsec consequences gen pol}

We next obtain a basis of a CSC code as a vector space from its minimal generator skew polynomial. The assumptions and notation will be as in Subsection \ref{subsec finding single gen pol}.

\begin{theorem} \label{th generator matrix}
Let $ \mathcal{C} \subseteq \mathcal{R} $ be a CSC code, and let $ g(x,z) \in \mathcal{S} [z; \sigma] $ be its minimal generator skew polynomial. Given $ \mathbf{c} \in \mathbb{F}^n $ and $ c(x,z) = \nu (\mathbf{c}) \in \mathcal{R} $, it holds that $ c(x,z) \in \mathcal{C} $ if, and only if, there exist coefficients
\begin{equation} \label{eq coeffs for gen matrix}
\lambda^{(i)}_{u,v} \in \mathbb{F},
\end{equation}
for $ u = 0,1, \ldots , d_i - 1 $, for $ v = 0,1, \ldots, k_i - 1 $, for $ i = 1,2, \ldots, t $, such that
\begin{equation} \label{eq linear comb for gen matrix}
c(x,z) = \sum_{i=1}^t \left( \sum_{v = 0}^{k_i - 1} \sum_{u = 0}^{d_i - 1} \lambda^{(i)}_{u,v} x^u z^v \right) e_i(x) g(x,z)
\end{equation}
inside the ring $ \mathcal{R} $, where 
$$ k_i = N - \deg_z(g_i(x,z)) \quad \textrm{and} \quad d_i = \deg_x(m_i(x)), $$
for $ i = 1,2, \ldots, t $, denoting by $ \deg_x(\cdot) $ the degree in $ x $, and similarly for other variables. Furthermore, the coefficients in (\ref{eq coeffs for gen matrix}) are unique among coefficients in $ \mathbb{F} $ satisfying (\ref{eq linear comb for gen matrix}). In particular, a basis of $ \mathcal{C} $ over $ \mathbb{F} $ is formed by the skew polynomials
$$ x^u z^v e_i(x) g(x,z) \in \mathcal{R}, $$
for $ u = 0,1, \ldots , d_i - 1 $, for $ v = 0,1, \ldots, k_i - 1 $, for $ i = 1,2, \ldots, t $.
\end{theorem}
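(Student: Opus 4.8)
The plan is to reduce everything to the componentwise picture furnished by Lemma~\ref{lemma chinese remainder theorem} and Theorem~\ref{th single generator}, and then invoke the known basis description for a single skew-cyclic code. First I would apply the ring isomorphism $\rho : \mathcal{R} \longrightarrow \mathcal{R}_1 \times \cdots \times \mathcal{R}_t$. By Theorem~\ref{th single generator}, $\rho(\mathcal{C}) = \mathcal{C}^{(1)} \times \cdots \times \mathcal{C}^{(t)}$, and the $i$th component $\mathcal{C}^{(i)} = (g_i(x,z))$ inside $\mathcal{R}_i = \mathcal{S}_i[z;\sigma]/(z^N-1)$, where $\mathcal{S}_i$ is a field. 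So $c(x,z) \in \mathcal{C}$ if and only if, for each $i$, the projection $c_i(x,z) := \pi_i(\rho(c(x,z)))$ lies in $(g_i(x,z))$. For a skew-cyclic code over a field with monic generator $g_i$ of $z$-degree $N-k_i$, the classical theory (e.g.\ \cite[Lemma 1]{skewcyclic1} or \cite[Th. 1]{ontheroots}) says $c_i(x,z) \in (g_i(x,z))$ iff $c_i(x,z) = p_i(x,z) g_i(x,z)$ for a \emph{unique} $p_i(x,z) \in \mathcal{S}_i[z;\sigma]$ with $\deg_z(p_i) < k_i$; equivalently, $\{ z^v g_i(x,z) : 0 \le v \le k_i - 1 \}$ is an $\mathcal{S}_i$-basis of $\mathcal{C}^{(i)}$, and writing the coefficients of $p_i$ over an $\mathbb{F}$-basis of $\mathcal{S}_i$ refines this to an $\mathbb{F}$-basis $\{ x^u z^v g_i(x,z) : 0 \le u \le d_i - 1,\ 0 \le v \le k_i-1 \}$, since $\mathcal{S}_i = \mathbb{F}[x]/(m_i(x))$ has $\mathbb{F}$-dimension $d_i = \deg_x(m_i(x))$ with basis $1, x, \ldots, x^{d_i - 1}$.

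Next I would transport this back through $\rho^{-1}$ using the idempotent formula~(\ref{eq primitive idempotents are idempotents}): since $\rho(e_i(x)) = \mathbf{e}_i$, the unique preimage of $(0,\ldots, p_i(x,z) g_i(x,z), \ldots, 0)$ is $e_i(x) \widetilde{p}_i(x,z) \widetilde{g}_i(x,z)$ for any lifts, and by~(\ref{eq generator check pols from components}) we may take $\widetilde{g}_i(x,z)$ so that $e_i(x) \widetilde{g}_i(x,z) = e_i(x) g(x,z)$ in $\mathcal{R}$. Hence the unique preimage of $(p_1 g_1, \ldots, p_t g_t)$ is $\sum_{i=1}^t e_i(x) \widetilde{p}_i(x,z) g(x,z)$; and lifting each $p_i$'s coefficient in $\mathcal{S}_i$ over the basis $1, x, \ldots, x^{d_i-1}$ (noting $e_i(x) x^u = x^u e_i(x)$ since $x$ commutes with everything and $e_i$ is central) gives exactly the expansion~(\ref{eq linear comb for gen matrix}) with coefficients $\lambda^{(i)}_{u,v} \in \mathbb{F}$. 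Uniqueness of the $\lambda^{(i)}_{u,v}$ follows because $\rho$ is bijective, the decomposition into components is direct, and within each component the coefficients of $p_i$ over the $\mathbb{F}$-basis of $\mathcal{S}_i$ are unique. The spanning-set/basis assertion is then immediate: the displayed $x^u z^v e_i(x) g(x,z)$ generate $\mathcal{C}$ over $\mathbb{F}$ by~(\ref{eq linear comb for gen matrix}), and they are $\mathbb{F}$-linearly independent by the uniqueness just established, so they form a basis, and $\dim_\mathbb{F}(\mathcal{C}) = \sum_{i=1}^t d_i k_i$.

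The one point requiring care — the main obstacle — is the bookkeeping of \emph{lifts} in the step $e_i(x) \widetilde{p}_i(x,z) \widetilde{g}_i(x,z) = e_i(x) \widetilde{p}_i(x,z) g(x,z)$: a priori $\widetilde{g}_i(x,z)$ and $g(x,z)$ differ by something killed by $e_i(x)$, and one must check this survives multiplication on the left by $x^u z^v e_i(x)$. This is where centrality of $e_i(x)$ and the idempotent relation $e_i(x) e_j(x) = \delta_{ij} e_i(x)$ in $\mathcal{R}$ (a consequence of~(\ref{eq primitive idempotents are idempotents})) do the work: multiplying~(\ref{eq generator check pols from components}) on the left by $e_i(x)$ collapses the sum to the single term $e_i(x) \widetilde{g}_i(x,z)$, so $e_i(x) g(x,z) = e_i(x) \widetilde{g}_i(x,z)$ in $\mathcal{R}$, independent of the chosen lift. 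Everything else is a routine unwinding of the Chinese-Remainder isomorphism coefficient by coefficient, exactly as in the proof of Lemma~\ref{lemma chinese remainder theorem}.
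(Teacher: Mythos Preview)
Your proposal is correct and follows essentially the same route as the paper: reduce via the Chinese-Remainder isomorphism $\rho$ to the components $\mathcal{C}^{(i)}$, invoke the standard $\mathcal{S}_i$-basis $\{z^v g_i(x,z)\}_{v=0}^{k_i-1}$ for each skew-cyclic component, refine to an $\mathbb{F}$-basis using $1,x,\ldots,x^{d_i-1}$, and pull back through the identity $\rho(e_i(x) g(x,z)) = \mathbf{e}_i\, g_i(x,z)$. The paper's proof is terser on the lift/idempotent bookkeeping that you spell out, but the argument is the same.
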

\begin{proof}
By Theorem \ref{th single generator} and the corresponding result for skew-cyclic codes (see, e.g., \cite[Th. 1]{ontheroots}), it holds that 
$$ g_i(x,z), zg_i(x,z), \ldots, z^{k_i - 1}g_i(x,z) \in \mathcal{R}_i $$
form a basis of the skew-cyclic code $ \mathcal{C}^{(i)} \subseteq \mathcal{R}_i = \mathcal{S}_i [z; \sigma] /(z^N - 1) $ over the field $ \mathcal{S}_i $, for $ i = 1,2, \ldots, t $. Furthermore, we obtain a basis of $ \mathcal{S}_i $ over $ \mathbb{F} $ formed by
$$ 1, x, x^2, \ldots, x^{d_i - 1} \in \mathcal{S}_i, $$
for $ i = 1,2, \ldots, t $. Thus we deduce the result by using Theorem \ref{th single generator} and 
$$ \rho(e_i(x) g(x,z)) = \mathbf{e}_i g_i(x,z), $$
for $ i = 1,2, \ldots, t $.
\end{proof}

\begin{remark}
It may happen that $ k_i = 0 $ for some $ i = 1,2, \ldots, t $. This means that $ g_i(x,z) = z^N - 1 $ and $ \mathcal{C}^{(i)} = \{ \mathbf{0} \} $. This also means that the $ i $th term in the sum (\ref{eq linear comb for gen matrix}) does not exist.
\end{remark}

The first important consequence of the previous theorem is obtaining a generator matrix of a CSC code over its defining field, $ \mathbb{F} $.

\begin{corollary} \label{cor basis from gen skew pol}
Let $ \mathcal{C} \subseteq \mathcal{R} $ be a CSC code, and let $ g(x,z) \in \mathcal{S} [z; \sigma] $ be its minimal generator skew polynomial. Taking images in $ \mathcal{R} $, let
$$ e_i(x) g(x,z) = \left( \sum_{j = 0}^{N-1} \left( \sum_{h=0}^{\ell - 1} g^{(h)}_{i,j} x^h \right) z^j \right) \in \mathcal{R}, $$
and define
\begin{equation} \label{eq coordinates of basis from gen pol}
\begin{array}{rll}
\mathbf{g}_i & = (\mathbf{g}_i^{(0)}, \mathbf{g}_i^{(1)}, \ldots, \mathbf{g}_i^{(\ell - 1)}) & \in \mathbb{F}^n, \textrm{ where} \\
\mathbf{g}_i^{(h)} & = (g^{(h)}_{i,0}, g^{(h)}_{i,1}, \ldots, g^{(h)}_{i, N - 1}) & \in \mathbb{F}^N,
\end{array}
\end{equation}
for $  h = 0,1, \ldots, \ell - 1 $ and $ i = 1,2, \ldots, t $. Then the vectors
\begin{equation} \label{eq rows of generator matrix}
\varphi^u \left( \phi^v \left( \mathbf{g}_i \right) \right) \in \mathbb{F}^n,
\end{equation}
for $ u = 0,1, \ldots , d_i - 1 $, for $ v = 0,1, \ldots, k_i - 1 $, for $ i = 1,2, \ldots, t $, form a basis of $ \mathcal{C} \subseteq \mathbb{F}^n $ over $ \mathbb{F} $, and thus they form the rows of a generator matrix of $ \mathcal{C} $ over $ \mathbb{F} $.
\end{corollary}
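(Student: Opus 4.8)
The plan is to transport Theorem~\ref{th generator matrix}, which is stated inside the ring $\mathcal{R}$, back to $\mathbb{F}^n$ via the coordinate isomorphism $\nu$. Two ingredients are needed: (i) the identities $x\,\nu(\mathbf{c}) = \nu(\varphi(\mathbf{c}))$ and $z\,\nu(\mathbf{c}) = \nu(\phi(\mathbf{c}))$ for every $\mathbf{c}\in\mathbb{F}^n$, which are the $\nu$-analogues of the identities for $\mu$ recorded in the proof of Theorem~\ref{th algebraic char}; and (ii) the fact that $\nu(\mathbf{g}_i) = e_i(x)g(x,z)$ in $\mathcal{R}$, which is immediate from the definition of $\mathbf{g}_i$ in (\ref{eq coordinates of basis from gen pol}), since (\ref{eq coordinates of basis from gen pol}) merely records the coefficients of the image of $e_i(x)g(x,z)$ in $\mathcal{R}$ written in the form dictated by (\ref{eq coordinate ismorphism second}).

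First I would verify (i) by a direct computation. Multiplying $\nu(\mathbf{c}) = \sum_{j}\big(\sum_i c^{(i)}_j x^i\big)z^j$ on the left by $z$ and using $z\,a = \sigma(a)z$, $\sigma(x)=x$ and $z^N = 1$ in $\mathcal{R}$ gives $\sum_j\big(\sum_i \sigma(c^{(i)}_{j-1})x^i\big)z^j$ with indices read modulo $N$, which is exactly $\nu(\phi(\mathbf{c}))$ by the definitions of $\phi$ and $\psi$ in Definition~\ref{def CSC operators}; similarly, left multiplication by $x$ together with $x^\ell = 1$ cyclically shifts the outer index, so $x\,\nu(\mathbf{c}) = \nu(\varphi(\mathbf{c}))$. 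Since $x$ and $z$ commute in $\mathcal{R}$ (recall $zx=xz$), iterating (i) yields
$$ \nu\big(\varphi^u(\phi^v(\mathbf{g}_i))\big) = x^u z^v\, \nu(\mathbf{g}_i) = x^u z^v e_i(x) g(x,z) $$
in $\mathcal{R}$, for all $u,v,i$ in the stated ranges.

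It then remains to invoke Theorem~\ref{th generator matrix}: the skew polynomials $x^u z^v e_i(x)g(x,z)\in\mathcal{R}$, for $u = 0,1,\ldots,d_i-1$, $v = 0,1,\ldots,k_i-1$, $i=1,2,\ldots,t$, form an $\mathbb{F}$-basis of the left ideal $\mathcal{C}\subseteq\mathcal{R}$. Because $\nu:\mathbb{F}^n\to\mathcal{R}$ is an $\mathbb{F}$-linear isomorphism carrying $\mathcal{C}\subseteq\mathbb{F}^n$ onto $\nu(\mathcal{C})\subseteq\mathcal{R}$ (the identification made after Theorem~\ref{th algebraic char}), the preimages $\varphi^u(\phi^v(\mathbf{g}_i))$ of these basis elements form an $\mathbb{F}$-basis of $\mathcal{C}\subseteq\mathbb{F}^n$, hence the rows of a generator matrix of $\mathcal{C}$ over $\mathbb{F}$, as claimed. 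The case $k_i = 0$ needs no special treatment: the corresponding index range is empty and that value of $i$ simply contributes no rows.

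There is no serious obstacle here; the statement is essentially a dictionary translation of Theorem~\ref{th generator matrix}. The only point requiring care is the bookkeeping in step (i)---keeping the inner shift ($z$, hence $\phi$ and $\psi$) and the outer shift ($x$, hence $\varphi$) straight, and handling the index reductions modulo $N$ and modulo $\ell$ correctly---together with the observation that $\nu$ being an isomorphism is precisely what upgrades ``basis of $\nu(\mathcal{C})$'' to ``basis of $\mathcal{C}$''.
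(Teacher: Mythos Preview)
Your proposal is correct and is precisely the approach the paper has in mind: the corollary is stated without proof immediately after Theorem~\ref{th generator matrix} as its ``first important consequence,'' relying on the same translation via $\nu$ and the identities $x\,\nu(\mathbf{c})=\nu(\varphi(\mathbf{c}))$, $z\,\nu(\mathbf{c})=\nu(\phi(\mathbf{c}))$ that you spell out. Your write-up simply makes explicit what the paper leaves to the reader.
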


The second important consequence is finding the dimension of a CSC code from its minimal generator skew polynomial.

\begin{corollary} \label{cor dimension from generator}
Let $ \mathcal{C} \subseteq \mathcal{R} $ be a CSC code. Let $ g(x,z) \in \mathcal{S} [z; \sigma] $ be its minimal generator skew polynomial, and let $ g_i(x,z) \in \mathcal{S}_i [z;\sigma] $ be its projection onto $ \mathcal{S}_i [z;\sigma] $, for $ i = 1,2, \ldots, t $. It holds that 
\begin{equation*}
\begin{split}
\dim_\mathbb{F}(\mathcal{C}) & = \sum_{i=1}^t \dim_\mathbb{F} \left( \mathcal{C}^{(i)} \right) \\
 & = \sum_{i = 1}^t \deg_x(m_i(x)) \left( N - \deg_z(g_i(x,z)) \right) \\
 & = n - \sum_{i=1}^t \deg_x(m_i(x)) \deg_z(g_i(x,z)) .
\end{split}
\end{equation*}
(Recall that $ 0 \leq \deg_z(g_i(x,z)) \leq N $ and $ \deg_z(g_i(x,z)) = N $ if, and only if, $ g_i(x,z) = z^N - 1 $, for $ i = 1,2, \ldots, t $.) 
\end{corollary}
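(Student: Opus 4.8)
The plan is to deduce all three displayed equalities from Theorem~\ref{th single generator} and Theorem~\ref{th generator matrix}, since the statement is essentially bookkeeping on top of those results. For the first equality, I would observe that the map $\rho$ of Lemma~\ref{lemma chinese remainder theorem}, being defined coefficient-wise by reduction modulo the $m_i(x)$, is in particular an isomorphism of $\mathbb{F}$-vector spaces; hence $\dim_\mathbb{F}(\mathcal{C}) = \dim_\mathbb{F}(\rho(\mathcal{C}))$. By Theorem~\ref{th single generator}, $\rho(\mathcal{C}) = \mathcal{C}^{(1)} \times \cdots \times \mathcal{C}^{(t)}$, so its $\mathbb{F}$-dimension equals $\sum_{i=1}^t \dim_\mathbb{F}(\mathcal{C}^{(i)})$.

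For the second equality, I would use that $g_i(x,z)$ is, by Theorem~\ref{th single generator}, the minimal generator skew polynomial of the skew-cyclic code $\mathcal{C}^{(i)}$ over the field $\mathcal{S}_i$, together with the standard fact for skew-cyclic codes --- already invoked in the proof of Theorem~\ref{th generator matrix}, see e.g.\ \cite[Th.~1]{ontheroots} --- that $\mathcal{C}^{(i)}$ has $\mathcal{S}_i$-dimension $k_i = N - \deg_z(g_i(x,z))$. Since $m_i(x)$ is irreducible over $\mathbb{F}$, we have $[\mathcal{S}_i : \mathbb{F}] = \deg_x(m_i(x)) =: d_i$, so the tower law yields $\dim_\mathbb{F}(\mathcal{C}^{(i)}) = d_i k_i = \deg_x(m_i(x))\,(N - \deg_z(g_i(x,z)))$; substituting into the first equality gives the second. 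Equivalently, one may simply count the $\mathbb{F}$-basis of $\mathcal{C}$ exhibited in Theorem~\ref{th generator matrix}, which has exactly $\sum_{i=1}^t d_i k_i$ elements.

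For the third equality, I would expand the sum and use $\sum_{i=1}^t \deg_x(m_i(x)) = \deg_x(x^\ell - 1) = \ell$ from the factorization~(\ref{eq irreducible dec x^ell - 1}) together with $n = \ell N$ from~(\ref{eq sum-rank length partition}), so that $\sum_i d_i(N - \deg_z(g_i)) = \ell N - \sum_i d_i \deg_z(g_i) = n - \sum_i \deg_x(m_i(x))\deg_z(g_i(x,z))$. The parenthetical claim about $\deg_z(g_i(x,z))$ follows from property~3 of Theorem~\ref{th single generator}: in the skew polynomial ring $\mathcal{S}_i[z;\sigma]$ over the field $\mathcal{S}_i$ (which is a left and right Euclidean domain), the identity $g_i(x,z) h_i(x,z) = z^N - 1$ forces $\deg_z(g_i) + \deg_z(h_i) = N$, and $\deg_z(g_i) = N$ occurs precisely when $h_i$ is a unit, which by the monicity of $g_i$ (property~1) means $g_i(x,z) = z^N - 1$. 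The only points that require any care here are the $\mathbb{F}$-linearity of $\rho$ and the normalization $[\mathcal{S}_i : \mathbb{F}] = \deg_x(m_i(x))$; beyond those I do not anticipate a genuine obstacle.
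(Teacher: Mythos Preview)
Your proposal is correct and follows essentially the same approach as the paper: the corollary is stated there without separate proof, as an immediate consequence of Theorem~\ref{th generator matrix}, whose explicit $\mathbb{F}$-basis of $\mathcal{C}$ has precisely $\sum_{i=1}^t d_i k_i$ elements. Your argument via the $\mathbb{F}$-linearity of $\rho$, the tower law for $[\mathcal{S}_i : \mathbb{F}]$, and the degree count $\sum_i d_i = \ell$ is exactly the bookkeeping the paper leaves implicit, and your justification of the parenthetical remark from property~3 of Theorem~\ref{th single generator} is a useful addition.
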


\begin{remark}
Note that $ \deg_z(g_i(x,z)) $ is needed in Corollaries \ref{cor basis from gen skew pol} and \ref{cor dimension from generator}. It is left to the reader to verify that
$$ \deg_z(e_i(x) g(x,z)) = \deg_z(g_i(x,z)), $$
considering $ e_i(x) g(x,z) \in \mathcal{S} [z;\sigma] $ and $ g_i(x,z) \in \mathcal{S}_i [z;\sigma] $, for $ i = 1,2, \ldots, t $.
\end{remark}

We conclude by showing briefly which generator matrix one obtains from (\ref{eq rows of generator matrix}) in the cases $ m = N = 1 $ (classical cyclic codes) and $ \ell = 1 $ (skew-cyclic codes).

First, if $ \ell = 1 $, then $ t = 1 $, $ d_1 = 1 $, $ e_1(x) = 1 $, $ \mathcal{S} \cong \mathbb{F} $ naturally, and we may consider $ g(x,z) = g(z) \in \mathbb{F}[z; \sigma] $. In this case, the basis rows in (\ref{eq rows of generator matrix}) are
$$ \mathbf{g}, \phi(\mathbf{g}), \phi^2(\mathbf{g}), \ldots, \phi^{k-1}(\mathbf{g}) \in \mathbb{F}^N, $$
where $ \mathbf{g} \in \mathbb{F}^N $ is formed by the first $ N $ coefficients of $ g(z) $ in $ z $, and $ k = N - \deg_z(g(z)) $. Hence we recover the generator matrix of a skew-cyclic code as well as the well-known formula for its dimension (see, e.g., items 3 and 4 in \cite[Th. 1]{ontheroots}).

Finally, if $ m = N = 1 $, then as shown in (\ref{eq min gen and check skew pol for classical case idemp}), we have that
$$ g(x,z) = \sum_{i \in I} e_i(x) + \sum_{j \in [t] \setminus I} e_j(x) ( z - 1 ), $$
for a set $ I \subseteq [t] $. Then the rows of the generator matrix of $ \mathcal{C} $, given in (\ref{eq rows of generator matrix}), are
$$ \mathbf{e}_i, \varphi(\mathbf{e}_i), \varphi^2(\mathbf{e}_i), \ldots, \varphi^{d_i-1}(\mathbf{e}_i) \in \mathbb{F}^\ell, \textrm{ for } i \in I, $$
where $ \mathbf{e}_i \in \mathbb{F}^\ell $ is formed by the coefficients of $ e_i(x) \in \mathbb{F}[x] / (x^\ell - 1) $, for $ i = 1,2, \ldots, t $. This generator matrix corresponds to obtaining a classical cyclic code as a direct sum of the minimal cyclic codes generated by the primitive idempotents $ e_i(x) $, for $ i \in I $, and then appending the generator matrices of these minimal cyclic codes obtained from their corresponding idempotent as in \cite[Th. 4.3.6]{pless}.

\section{Sets of Roots of Cyclic-Skew-Cyclic Codes} \label{sec roots}

In this section, we provide a basic theory of defining sets of CSC codes that we will need later. Throughout this section, we will assume that $ N = m $, since this is the case of interest for defining sum-Rank BCH codes (Section \ref{sec SR BCH codes}).

\subsection{Defining Appropriate Evaluation Maps} \label{subsec evaluation maps}

In this subsection, we define the evaluation maps that we will consider in order to define roots of skew polynomials in $ \mathcal{R} \cong \mathcal{R}^\prime $. They will enable us to consider defining sets.

We start by revisiting the main definition of arithmetic evaluation of skew polynomials from \cite{lam, lam-leroy}, which reads as follows.

\begin{definition} [\textbf{\cite{lam, lam-leroy}}] \label{def skew polynomial evaluation}
Given a skew polynomial $ f(z) \in \mathbb{F}_{q^m}[z; \sigma] $ and a field element $ \alpha \in \mathbb{F}_{q^m} $, we define the evaluation of $ f(z) $ at $ \alpha $ as the unique element $ f(\alpha) \in \mathbb{F}_{q^m} $ such that there exists a skew polynomial $ g(z) \in \mathbb{F}_{q^m}[z; \sigma] $ satisfying that
$$ f(z) = g(z) (z - \alpha) + f(\alpha). $$
\end{definition}

Definition \ref{def skew polynomial evaluation} is consistent by the right Euclidean division property of the skew polynomial ring $ \mathbb{F}_{q^m}[z;\sigma] $ \cite{ore}. 

We now turn to the important class of linearized polynomials \cite[Sec. 3.4]{lidl}.

\begin{definition} \label{def linearized pols}
Let $ K $ be a finite field of characteristic $ p $. We say that a conventional polynomial $ G(y) \in K[y] $ is a linearized polynomial with coefficients in $ K $ and automorphism $ \sigma : K \longrightarrow K $, given by $ \sigma(a) = a^q $, for $ a \in K $, if it has the form
$$ G(y) = G_0 y + G_1 y^q + G_2 y^{q^2} + \cdots + G_d y^{q^d}, $$
where $ G_0, G_1, \ldots, G_d \in K $. We denote the set of such linearized polynomials by $ \mathcal{L}_q K [y] $. Finally, given a skew polynomial of the form
$$ f(z) = f_0 + f_1 z + f_2 z^2 + \cdots + f_d z^d \in K[z; \sigma], $$
where $ f_0, f_1, \ldots, f_d \in K $, we define its associated linearized polynomial as
$$ f^\sigma(y) = f_0 y + f_1 y^q + f_2 y^{q^2} + \cdots + f_d y^{q^d} \in \mathcal{L}_q K [y]. $$
\end{definition}

The set $ \mathcal{L}_q K [y] $ forms a ring with usual addition and with composition of maps as multiplication. As the reader may check, this ring is isomorphic as a ring to the skew polynomial ring $ K[z; \sigma] $ by mapping a skew polynomial to its associated linearized polynomial.

The following result connects the arithmetic evaluation of skew polynomials from Definition \ref{def skew polynomial evaluation} with the conventional evaluation of linearized polynomials. This result can be easily deduced from \cite[Lemma 2.4]{lam-leroy} and is also a particular case of \cite[Lemma 24]{linearizedRS}.

\begin{lemma}  \label{lemma connection arith eval lin eval}
Given a skew polynomial $ f(z) \in \mathbb{F}_{q^m}[z; \sigma] $, it holds that
$$ f( 1^\beta ) = f^\sigma(\beta) \beta^{-1}, $$
for all $ \beta \in \mathbb{F}_{q^m} \setminus \{ 0 \} $, where $ f^\sigma(\beta) $ is the conventional evaluation of the associated linearized polynomial $ f^\sigma(y) \in \mathcal{L}_q \mathbb{F}_{q^m}[y] $ in $ \beta $, and where we denote by
$$ 1^\beta = \sigma(\beta) \beta^{-1} $$
the conjugate of $ 1 $ with respect to $ \beta $ (see Subsection \ref{subsec revisiting lin RS codes}). 
\end{lemma}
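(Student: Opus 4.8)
The plan is to prove $f(1^\beta) = f^\sigma(\beta)\beta^{-1}$ directly from the definitions, reducing everything to the monomial case by linearity. First I would observe that both sides of the claimed identity are additive in $f$: the left side $f \mapsto f(1^\beta)$ is additive because if $f(z) = g(z)(z-1^\beta) + f(1^\beta)$ and $\tilde{f}(z) = \tilde{g}(z)(z-1^\beta) + \tilde{f}(1^\beta)$ are the right-division expressions, then adding them exhibits $f(1^\beta) + \tilde{f}(1^\beta)$ as the remainder of $f + \tilde{f}$ upon right division by $z - 1^\beta$, which by uniqueness equals $(f+\tilde{f})(1^\beta)$; the right side is additive because $f \mapsto f^\sigma$ is a ring homomorphism (in particular additive) and conventional evaluation at $\beta$ followed by multiplication by $\beta^{-1}$ is additive. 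Hence it suffices to verify the identity for the monomials $f(z) = z^i$, $i \ge 0$.

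Next I would compute both sides for $f(z) = z^i$. For the right side, the associated linearized polynomial is $f^\sigma(y) = y^{q^i}$, so $f^\sigma(\beta)\beta^{-1} = \beta^{q^i}\beta^{-1} = \beta^{q^i - 1}$. For the left side, I need to understand the iterated conjugate. Writing $1^\beta = \sigma(\beta)\beta^{-1} = \beta^{q-1}$, I claim that the evaluation of $z^i$ at $1^\beta$ is the $i$-fold "norm-like" product $N_i(1^\beta) := \sigma^{i-1}(1^\beta)\cdots\sigma(1^\beta)\cdot 1^\beta$ (with $N_0 = 1$), which telescopes: $\sigma^{j}(\beta^{q-1}) = \beta^{q^{j+1} - q^j}$, so $N_i(1^\beta) = \prod_{j=0}^{i-1}\beta^{q^{j+1}-q^j} = \beta^{q^i - 1}$. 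This matches the right side. To justify $z^i(1^\beta) = N_i(1^\beta)$, I would use the standard identity $z^i - N_i(\alpha) = \left(\sum_{j=0}^{i-1} N_j(\alpha) z^{i-1-j}\right)(z - \alpha)$ in $\mathbb{F}_{q^m}[z;\sigma]$, which one checks by expanding the right-hand side using the commutation rule $za = \sigma(a)z$ and telescoping; this expresses $z^i$ as (something)$\cdot(z-\alpha) + N_i(\alpha)$, so by Definition~\ref{def skew polynomial evaluation} the evaluation is exactly $N_i(\alpha)$.

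Alternatively, and perhaps more cleanly, I would just cite \cite[Lemma 2.4]{lam-leroy} or \cite[Lemma 24]{linearizedRS} as the excerpt already suggests, and only sketch the monomial verification as a self-contained check. The main obstacle, such as it is, is bookkeeping with the twisted powers: getting the order of the $\sigma$-iterates in $N_i(\alpha)$ right (left-to-right versus right-to-left) and confirming the telescoping of exponents, together with the edge case $\beta = 0$ which is excluded in the statement precisely because $\beta^{-1}$ and the conjugate $1^\beta$ are undefined there. None of this is deep; it is a direct unwinding of the right-Euclidean division that defines skew-polynomial evaluation, combined with the ring isomorphism $K[z;\sigma] \cong \mathcal{L}_q K[y]$.
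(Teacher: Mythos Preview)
Your approach is correct and is essentially the standard direct verification that the cited references \cite[Lemma 2.4]{lam-leroy} and \cite[Lemma 24]{linearizedRS} perform; the paper itself gives no proof beyond those citations, so there is nothing to compare against except the references, with which you agree.

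Two small bookkeeping points. First, additivity alone only reduces you to the monomials $c z^i$ with $c \in \mathbb{F}_{q^m}$, not to $z^i$; you should also note that both sides are left-$\mathbb{F}_{q^m}$-linear in $f$ (immediate from the same division argument on the left, and trivial on the right), after which checking $f(z)=z^i$ suffices. Second, your displayed quotient $\sum_{j=0}^{i-1} N_j(\alpha)\, z^{\,i-1-j}$ is not quite the correct one: carrying out the right division $z^i = q(z)(z-\alpha) + r$ with $q(z)=\sum_k c_k z^k$ and the rule $z a = \sigma(a) z$ gives $c_{i-1}=1$, $c_{k-1}=c_k\,\sigma^k(\alpha)$, hence $c_k=\sigma^{i-1}(\alpha)\cdots\sigma^{k+1}(\alpha)$, not $N_{i-1-k}(\alpha)$. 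The remainder $r=c_0\,\alpha=\sigma^{i-1}(\alpha)\cdots\sigma(\alpha)\alpha=N_i(\alpha)$ is exactly what you claim, so your conclusion $z^i(\alpha)=N_i(\alpha)$ and the telescoping $N_i(1^\beta)=\beta^{q^i-1}$ are correct; only the intermediate quotient needs adjusting.
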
 

We may now define the evaluation maps on the roots of unity that we will use to provide defining sets of CSC codes.

\begin{definition} \label{def eval maps}
Given $ a \in \mathbb{F}_q $ and $ \beta \in \mathbb{F}_{q^m} \setminus \{ 0 \} $ such that $ a^\ell = 1 $, we define the \textit{evaluation maps}
\begin{equation*}
\begin{split}
{\rm Ev}_a : \frac{\mathbb{F}_{q^m}[x]}{(x^\ell - 1)} & \longrightarrow \frac{\mathbb{F}_{q^m}[x]}{(x - a)} \cong \mathbb{F}_{q^m} \textrm{ and} \\
{\rm Ev}^\sigma_\beta : \frac{\mathbb{F}_{q^m}[z; \sigma]}{(z^m - 1)} & \longrightarrow \frac{\mathbb{F}_{q^m}[z; \sigma]}{(z - \sigma(\beta)\beta^{-1})} \cong \mathbb{F}_{q^m}
\end{split}
\end{equation*}
as the natural projection maps onto the corresponding quotient left modules.
\end{definition}

Observe that the evaluation maps in Definition \ref{def eval maps} are well-defined since
$$ (x^\ell - 1) \subseteq (x-a) \quad \textrm{and} \quad (z^m - 1) \subseteq (z - \sigma(\beta)\beta^{-1}), $$
where the first inclusion follows from $ a^\ell = 1 $, and the second inclusion follows from $ \sigma^m(\beta) = \beta^{q^m} = \beta $ and Lemma \ref{lemma connection arith eval lin eval}. As expected, it holds that
$$ {\rm Ev}_a(f(x)) = f(a) \quad \textrm{and} \quad {\rm Ev}^\sigma_\beta (g(z)) = g(1^\beta), $$
for $ f(x) \in \mathbb{F}_{q^m}[x] / (x^\ell - 1) $ and $ g(z) \in \mathbb{F}_{q^m}[z; \sigma] / (z^m - 1) $. 

We will also find it useful to consider the following partial evaluation map and evaluation skew-cyclic codes.

\begin{definition} \label{def partial eval map}
Given $ a \in \mathbb{F}_q $ such that $ a^\ell = 1 $, we define the \textit{partial evaluation map}
$$ {\rm Ev}_{a,z} : \frac{\left( \frac{\mathbb{F}_{q^m}[x]}{(x^\ell - 1)} \right) [z;\sigma]}{(z^m - 1)} \longrightarrow \frac{\left( \frac{\mathbb{F}_{q^m}[x]}{(x - a)} \right) [z;\sigma]}{(z^m - 1)} \cong \frac{\mathbb{F}_{q^m}[z;\sigma]}{(z^m - 1)} $$
as the ring morphism given by
$$ {\rm Ev}_{a,z} \left( f_0(x) + f_1(x) z + \cdots + f_{m-1}(x) z^{m-1} \right) = $$
$$ f_0(a) + f_1(a) z + \cdots + f_{m-1}(a) z^{m-1}, $$
where $ f_0(x), f_1(x), \ldots, f_{m-1}(x) \in \mathbb{F}_{q^m}[x] / (x^\ell - 1) $. Finally, we will denote 
$$ f(a,z) = {\rm Ev}_{a,z} (f(x,z)) \in \frac{\mathbb{F}_{q^m}[z;\sigma]}{(z^m - 1)}, $$
\begin{flalign*}
\textrm{for } f(x,z) \in \frac{\left( \frac{\mathbb{F}_{q^m}[x]}{(x^\ell - 1)} \right) [z;\sigma]}{(z^m - 1)}. &&&
\end{flalign*}
We will use the same definition and notation for
$$ {\rm Ev}_{a,z} : \left( \frac{\mathbb{F}_{q^m}[x]}{(x^\ell - 1)} \right) [z;\sigma] \longrightarrow \left( \frac{\mathbb{F}_{q^m}[x]}{(x - a)} \right) [z;\sigma] \cong\mathbb{F}_{q^m}[z;\sigma] . $$
\end{definition}

It is left to the reader to prove that $ {\rm Ev}_{a,z} $ is a ring morphism. To this end, the reader might find it useful to use that $ \sigma(a) = a^q = a $ ($ a \in \mathbb{F}_q $ by assumption) to show that
$$ {\rm Ev}_{a,z} \left( f(x,z) g(x,z) \right) = f(a,z) g(a,z), $$
\begin{flalign*}
\textrm{for } f(x,z), g(x,z) \in \frac{\left( \frac{\mathbb{F}_{q^m}[x]}{(x^\ell - 1)} \right) [z;\sigma]}{(z^m - 1)}. &&&
\end{flalign*}

Assume that $ p $ does not divide $ \ell $ (i.e., the roots of $ x^\ell - 1 \in \mathbb{F}_p[x] $ are simple). Observe that, if $ a \in \mathbb{F}_q $ is such that $ a^\ell = 1 $, then 
$$ f(a,z) \in \frac{\mathbb{F}_{q_0^{m d_i}}[z;\sigma]}{(z^m - 1)}, \quad \textrm{for } f(x,z) \in \frac{\left( \frac{\mathbb{F}[x]}{(x^\ell - 1)} \right) [z;\sigma]}{(z^m - 1)}. $$
Recall that we are denoting $ d_i = \deg_x(m_i(x)) $ (Theorem \ref{th generator matrix}) and $ \mathbb{F} = \mathbb{F}_{q_0^m} $, where $ m_i(x) \in \mathbb{F}[x] $ is the irreducible component of $ x^\ell - 1 $ in $ \mathbb{F}[x] $ such that $ m_i(a) = 0 $ (see Subsection \ref{subsec finding single gen pol}). We may now give the following useful definition.

\begin{definition} \label{def skew-cyclic partial eval code}
Assume that $ p $ does not divide $ \ell $. Given a CSC code $ \mathcal{C} \subseteq \mathcal{R} $ and an element $ a \in \mathbb{F}_q $ such that $ a^\ell = 1 $, we define its \textit{evaluation skew-cyclic code} on $ a $ as 
$$ \mathcal{C}(a) = \{ c(a,z) \mid c(x,z) \in \mathcal{C} \} \subseteq \frac{\mathbb{F}_{q_0^{m d_i}}[z;\sigma]}{(z^m - 1)}, $$
which is a skew-cyclic code of length $ m $ over the field $ \mathbb{F}_{q_0^{m d_i}} $ with field automorphism $ \sigma : \mathbb{F}_{q_0^{m d_i}} \longrightarrow \mathbb{F}_{q_0^{m d_i}} $, where $ m_i(x) \in \mathbb{F}[x] $ is the irreducible component of $ x^\ell - 1 $ in $ \mathbb{F}[x] $ such that $ m_i(a) = 0 $, and where $ d_i = \deg_x (m_i(x)) $.
\end{definition} 

We have the following identification between $ \mathcal{C}(a) $ and $ \mathcal{C}^{(i)} $ (Definition \ref{def skew-cyclic component code}), whose connection is the fact that $ m_i(a) = 0 $. This also proves that $ \mathcal{C}(a) $ is a skew-cyclic code.

\begin{proposition} \label{prop partial eval gives isomorphism}
Assume that $ p $ does not divide $ \ell $. Let $ a \in \mathbb{F}_q $ be such that $ a^\ell = 1 $, and let $ m_i(x) \in \mathbb{F}[x] $ be the irreducible component of $ x^\ell - 1 $ in $ \mathbb{F}[x] $ such that $ m_i(a) = 0 $. Consider the partial evaluation maps defined with the following domains and codomains:
\begin{equation}
{\rm Ev}_{a,z} : \mathcal{S}_i [z;\sigma] = \left( \frac{\mathbb{F}[x]}{(m_i(x))} \right) [z;\sigma] \longrightarrow \mathbb{F}_{q_0^{m d_i}} [z;\sigma],
\label{eq partial eval map as ring isomorphism no quotient}
\end{equation}
\begin{equation}
{\rm Ev}_{a,z} : \mathcal{R}_i = \frac{\left( \frac{\mathbb{F}[x]}{(m_i(x))} \right) [z;\sigma]}{(z^m - 1)} \longrightarrow \frac{\mathbb{F}_{q_0^{m d_i}} [z;\sigma]}{(z^m - 1)},
\label{eq partial eval map as ring isomorphism}
\end{equation}
where $ d_i = \deg_x(m_i(x)) $. Then the maps (\ref{eq partial eval map as ring isomorphism no quotient}) and (\ref{eq partial eval map as ring isomorphism}) are ring isomorphisms, which moreover satisfy that
$$ {\rm Ev}_{a,z} \left( \mathcal{C}^{(i)} \right) = \mathcal{C}(a), $$
for any CSC code $ \mathcal{C} \subseteq \mathcal{R} $ (Definitions \ref{def skew-cyclic component code} and \ref{def skew-cyclic partial eval code}). In particular, it holds that
$$ \dim_\mathbb{F} \left( \mathcal{C}^{(i)} \right) = \dim_\mathbb{F} \left( \mathcal{C}(a) \right). $$
Furthermore, if $ g(x,z) \in \mathcal{S} [z; \sigma] $ is the minimal generator skew polynomial of $ \mathcal{C} $, and $ g_i(x,z) $ is its image in $ \mathcal{S}_i [z; \sigma] $ (the minimal generator skew polynomial of $ \mathcal{C}^{(i)} $ by Theorem \ref{th single generator}), then we have that
$$ g(a,z) = g_i(a,z) \in \mathbb{F}_{q_0^{md_i}} [z; \sigma], $$ 
which moreover is the minimal generator skew polynomial of $ \mathcal{C}(a) $.
\end{proposition}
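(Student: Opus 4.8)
The plan is to prove the four assertions in the order stated; the common thread is that $ {\rm Ev}_{a,z} $ is, coefficient-wise, a field isomorphism. First I would treat the map (\ref{eq partial eval map as ring isomorphism no quotient}). Since $ m_i(x) \in \mathbb{F}[x] $ is irreducible with $ m_i(a) = 0 $, the evaluation $ {\rm Ev}_a : \mathbb{F}[x]/(m_i(x)) \to \mathbb{F}[x]/(x-a) $, $ f(x) \mapsto f(a) $, is a field isomorphism; the isomorphic copy $ \mathbb{F}[x]/(x-a) $ is therefore a degree-$ d_i $ extension of $ \mathbb{F} = \mathbb{F}_{q_0^m} $, i.e. $ \mathbb{F}_{q_0^{md_i}} $, and $ {\rm Ev}_a $ is $ \mathbb{F} $-linear and fixes $ \mathbb{F} $ pointwise. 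The one point that uses the hypothesis $ a \in \mathbb{F}_q $ is $ \sigma $-equivariance: for $ f(x) = \sum_j f_j x^j $ one has $ {\rm Ev}_a(\sigma(f(x))) = \sum_j f_j^q a^j $, while $ \sigma({\rm Ev}_a(f(x))) = \bigl(\sum_j f_j a^j\bigr)^q = \sum_j f_j^q a^{qj} = \sum_j f_j^q a^j $, the last step because $ a^q = a $ for $ a \in \mathbb{F}_q $. A $ \sigma $-equivariant ring isomorphism lifts coefficient-wise, with $ z \mapsto z $, to a ring isomorphism of skew polynomial rings (multiplicativity reducing to $ \sigma $-equivariance via $ z f(x) = \sigma(f(x)) z $); this lift is precisely (\ref{eq partial eval map as ring isomorphism no quotient}), and it is $ \mathbb{F} $-linear. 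Since it sends $ z^m - 1 $ to $ z^m - 1 $, it descends to the quotient by $ (z^m-1) $, giving the ring isomorphism (\ref{eq partial eval map as ring isomorphism}).

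Next I would record the factorization $ {\rm Ev}_{a,z}^{\mathcal{R}} = {\rm Ev}_{a,z}^{\mathcal{R}_i} \circ \pi_i \circ \rho $, where $ \rho $ is the Chinese-remainder isomorphism of Lemma \ref{lemma chinese remainder theorem} and $ \pi_i $ the projection onto $ \mathcal{R}_i $; this holds because $ (x^\ell - 1) \subseteq (m_i(x)) \subseteq (x-a) $ in $ \mathbb{F}[x] $, so evaluating a coefficient at $ a $ factors through reduction modulo $ m_i(x) $, and one checks the identity on a monomial $ f(x) z^j $. Combining this with $ \mathcal{C}^{(i)} = \pi_i(\rho(\mathcal{C})) $ (Definition \ref{def skew-cyclic component code}) yields $ \mathcal{C}(a) = {\rm Ev}_{a,z}^{\mathcal{R}}(\mathcal{C}) = {\rm Ev}_{a,z}^{\mathcal{R}_i}(\mathcal{C}^{(i)}) $; in particular, being the image of a left ideal under a ring isomorphism, $ \mathcal{C}(a) $ is a left ideal (hence a skew-cyclic code), and since the restriction of the $ \mathbb{F} $-linear bijection $ {\rm Ev}_{a,z}^{\mathcal{R}_i} $ to $ \mathcal{C}^{(i)} $ is an $ \mathbb{F} $-linear bijection onto $ \mathcal{C}(a) $, the dimensions over $ \mathbb{F} $ agree. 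The same factorization applied to $ \mathcal{S}[z;\sigma] $ rather than $ \mathcal{R} $ gives $ g(a,z) = g_i(a,z) $: evaluating the coefficients of $ g(x,z) $ at $ a $ is reduction modulo $ m_i(x) $ (which produces $ g_i(x,z) $, by Theorem \ref{th single generator}) followed by evaluation.

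Finally, to see that $ g_i(a,z) $ is the minimal generator skew polynomial of $ \mathcal{C}(a) $, I would transport along the ring isomorphism $ {\rm Ev}_{a,z} $ the three defining properties of $ g_i(x,z) $ from Theorem \ref{th single generator}: it fixes $ z $ (so it preserves degree and leading coefficient in $ z $, keeping the image monic in $ z $), it carries the generator of the left ideal $ \mathcal{C}^{(i)} $ to a generator of $ \mathcal{C}(a) $, and it sends $ z^m - 1 $ to $ z^m - 1 $ (so the relation $ g_i h_i = h_i g_i = z^m - 1 $ is preserved for the image of $ h_i $); by uniqueness of the minimal generator skew polynomial of a skew-cyclic code, $ g_i(a,z) $ is it. I expect no substantive obstacle: the only thing demanding care is the bookkeeping — keeping the various incarnations of $ {\rm Ev}_{a,z} $ (on $ \mathcal{S}[z;\sigma] $, $ \mathcal{S}_i[z;\sigma] $, $ \mathcal{R} $, $ \mathcal{R}_i $, with and without the quotient by $ (z^m-1) $) consistent, together with the $ \sigma $-equivariance check that makes all the coefficient-wise lifts ring homomorphisms.
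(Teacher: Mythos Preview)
Your proposal is correct and follows essentially the same approach as the paper: both hinge on the fact that $ {\rm Ev}_a : \mathbb{F}[x]/(m_i(x)) \to \mathbb{F}_{q_0^{md_i}} $ is a field isomorphism commuting with $ \sigma $, and both deduce the remaining claims from the resulting ring isomorphism. The one cosmetic difference is in the identity $ g(a,z) = g_i(a,z) $: you obtain it by factoring evaluation at $ a $ through reduction modulo $ m_i(x) $, whereas the paper evaluates the idempotent decomposition $ g(x,z) = \sum_j e_j(x)\widetilde{g}_j(x,z) $ at $ a $ and uses $ e_j(a) = \delta_{i,j} $; these are two phrasings of the same Chinese-remainder fact.
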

\begin{proof}
The fact that the maps $ {\rm Ev}_{a,z} $ are well-defined and ring morphisms can be proven in the same way as for the map in Definition \ref{def partial eval map}. The fact that they are ring isomorphisms can be seen from the fact that the natural evaluation map
$$ {\rm Ev}_a : \frac{\mathbb{F}[x]}{(m_i(x))} \longrightarrow \mathbb{F}_{q_0^{m d_i}} $$
is a field isomorphism. The fact that $ g(a,z) = g_i(a,z) $ follows from (\ref{eq generator check pols from components}) and
$$ e_j(a) = \delta_{i,j}, $$
where $ \delta_{i,j} $ is the Kronecker delta, for $ j = 1,2, \ldots, t $, which in turn follows from (\ref{eq bezout identities}) and (\ref{eq primitive idempotent}). The rest of the statements follow directly from $ g(a,z) = g_i(a,z) $, the definitions and the fact that the maps (\ref{eq partial eval map as ring isomorphism no quotient}) and (\ref{eq partial eval map as ring isomorphism}) are ring isomorphisms. 
\end{proof}

What we will want from Proposition \ref{prop partial eval gives isomorphism} is the following important consequence. It follows directly from Lemma \ref{lemma chinese remainder theorem} and Proposition \ref{prop partial eval gives isomorphism}.

\begin{corollary} \label{cor total eval product map is isom}
Assume that $ p $ does not divide $ \ell $. Let $ x^\ell - 1 = m_1(x) m_2(x) \cdots m_t(x) $ be the irreducible decomposition of $ x^\ell - 1 $ in $ \mathbb{F}[x] $, and choose $ a_1, a_2, \ldots, a_t \in \mathbb{F}_q $ such that $ m_i(a_i) = 0 $, for $ i = 1,2, \ldots, t $. Denote $ \mathbf{a} = (a_1, a_2, \ldots, a_t) \in \mathbb{F}_q^t $. Then the map
$$ {\rm Ev}_{\mathbf{a},z} : \equalto{\frac{ \left( \frac{\mathbb{F}[x]}{(x^\ell - 1)} \right) [z; \sigma]}{(z^m - 1)}}{\mathcal{R}} \longrightarrow \frac{\mathbb{F}_{q_0^{m d_1}}[z; \sigma]}{(z^m - 1)} \times \cdots \times \frac{\mathbb{F}_{q_0^{m d_t}}[z; \sigma]}{(z^m - 1)}, $$
given by $ {\rm Ev}_{\mathbf{a},z}(f(x,z)) = (f(a_1,z), $ $ f(a_2,z), $ $ \ldots, $ $ f(a_t,z)) $, for $ f(x,z) \in \mathcal{R} $, is a ring isomorphism. In particular, for any CSC code $ \mathcal{C} \subseteq \mathcal{R} $, it holds that
\begin{equation*}
\begin{split}
{\rm Ev}_{\mathbf{a}, z} (\mathcal{C}) & = \mathcal{C}(a_1) \times \mathcal{C}(a_2) \times \cdots \times \mathcal{C}(a_t), \textrm{ and} \\
\dim_\mathbb{F} (\mathcal{C}) & = \sum_{i = 1}^t \dim_\mathbb{F} (\mathcal{C}(a_i)),
\end{split}
\end{equation*}
and, for a given $ c(x,z) \in \mathcal{R} $, the following are equivalent:
\begin{enumerate}
\item
$ c(x,z) \in \mathcal{C} $.
\item
$ c(a,z) \in \mathcal{C}(a) $, for all $ a \in \mathbb{F}_q $ such that $ a^\ell = 1 $.
\item
$ c(a_i,z) \in \mathcal{C}(a_i) $, for all $ i = 1,2, \ldots, t $.
\end{enumerate}
\end{corollary}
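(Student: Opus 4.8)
The plan is to deduce this corollary directly from the two structural results that immediately precede it, namely Lemma~\ref{lemma chinese remainder theorem} (the Chinese Remainder isomorphism $\rho : \mathcal{R} \to \mathcal{R}_1 \times \cdots \times \mathcal{R}_t$) and Proposition~\ref{prop partial eval gives isomorphism} (the partial evaluation isomorphisms $\mathrm{Ev}_{a_i,z} : \mathcal{R}_i \to \mathbb{F}_{q_0^{md_i}}[z;\sigma]/(z^m-1)$, together with $\mathrm{Ev}_{a_i,z}(\mathcal{C}^{(i)}) = \mathcal{C}(a_i)$). The point is simply that $\mathrm{Ev}_{\mathbf{a},z}$ is the composition of these: for $f(x,z) \in \mathcal{R}$, projecting modulo $m_i(x)$ lands in $\mathcal{R}_i$, and then $\mathrm{Ev}_{a_i,z}$ of that projection equals $f(a_i,z)$ (here one uses $m_i(a_i)=0$, so evaluating at $a_i$ factors through reduction modulo $m_i(x)$). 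Hence the diagram commutes component-wise, and since each factor map is a ring isomorphism, so is the product map $\mathrm{Ev}_{\mathbf{a},z} = (\mathrm{Ev}_{a_1,z}\circ\pi_1, \ldots, \mathrm{Ev}_{a_t,z}\circ\pi_t)\circ\rho$.

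Concretely, the first step is to note that the choice of $a_i$ with $m_i(a_i)=0$ determines $m_i(x)$ uniquely as ``the'' irreducible component vanishing at $a_i$ (by simplicity of the roots, which holds since $p\nmid\ell$), so the $i$th evaluation skew-cyclic code $\mathcal{C}(a_i)$ of Definition~\ref{def skew-cyclic partial eval code} is well-defined and sits inside $\mathbb{F}_{q_0^{md_i}}[z;\sigma]/(z^m-1)$. The second step is the factorization identity $f(a_i,z) = \mathrm{Ev}_{a_i,z}(\pi_i(\rho(f(x,z))))$; this is essentially a restatement of how $\mathrm{Ev}_{a,z}$ was defined (coefficient-wise evaluation at $a_i$) together with the observation that evaluating a polynomial in $x$ at $a_i$ only depends on its class modulo $m_i(x)$. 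The third step is to combine: $\mathrm{Ev}_{\mathbf{a},z}$ is a composition of ring isomorphisms (Lemma~\ref{lemma chinese remainder theorem} gives that $\rho$ is a ring isomorphism, and Proposition~\ref{prop partial eval gives isomorphism} gives that each $\mathrm{Ev}_{a_i,z}$ is), hence itself a ring isomorphism.

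For the consequences: the identity $\mathrm{Ev}_{\mathbf{a},z}(\mathcal{C}) = \mathcal{C}(a_1)\times\cdots\times\mathcal{C}(a_t)$ follows from Theorem~\ref{th single generator}, which says $\rho(\mathcal{C}) = \mathcal{C}^{(1)}\times\cdots\times\mathcal{C}^{(t)}$, followed by applying the component maps $\mathrm{Ev}_{a_i,z}$ and invoking $\mathrm{Ev}_{a_i,z}(\mathcal{C}^{(i)}) = \mathcal{C}(a_i)$ from Proposition~\ref{prop partial eval gives isomorphism}. The dimension formula $\dim_\mathbb{F}(\mathcal{C}) = \sum_{i=1}^t \dim_\mathbb{F}(\mathcal{C}(a_i))$ then follows because $\mathrm{Ev}_{\mathbf{a},z}$ is in particular an $\mathbb{F}$-linear isomorphism carrying $\mathcal{C}$ onto the product of the $\mathcal{C}(a_i)$, so dimensions add (one may alternatively cite Corollary~\ref{cor dimension from generator} together with the last equality of Proposition~\ref{prop partial eval gives isomorphism}, $\dim_\mathbb{F}(\mathcal{C}^{(i)}) = \dim_\mathbb{F}(\mathcal{C}(a_i))$). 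Finally, the equivalence of the three conditions on $c(x,z)$: (1)$\Leftrightarrow$(3) is immediate from $\mathrm{Ev}_{\mathbf{a},z}$ being a bijection onto the product with $\mathrm{Ev}_{\mathbf{a},z}(\mathcal{C})$ equal to $\prod_i \mathcal{C}(a_i)$, so $c\in\mathcal{C}$ iff each coordinate $c(a_i,z)$ lies in $\mathcal{C}(a_i)$; (2)$\Rightarrow$(3) is trivial since each $a_i$ is among the $a\in\mathbb{F}_q$ with $a^\ell=1$, and (3)$\Rightarrow$(2) holds because any $a\in\mathbb{F}_q$ with $a^\ell=1$ is a root of exactly one $m_i(x)$, and then $\mathcal{C}(a)$ and the containment $c(a,z)\in\mathcal{C}(a)$ depend only on that $i$ (via the field isomorphism $\mathbb{F}[x]/(m_i(x))\cong\mathbb{F}_{q_0^{md_i}}$), so condition (3) for index $i$ already forces $c(a,z)\in\mathcal{C}(a)$.

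I do not expect any serious obstacle here; the corollary is a packaging of earlier results. The only mildly delicate point to get right is the bookkeeping in the last equivalence — specifically, verifying that for an arbitrary root $a$ of $x^\ell-1$ in $\mathbb{F}_q$, the condition $c(a,z)\in\mathcal{C}(a)$ is genuinely equivalent to $c(a_i,z)\in\mathcal{C}(a_i)$ for the unique $i$ with $m_i(a)=0$, which amounts to observing that $a$ and $a_i$ are roots of the same irreducible $m_i(x)$ over $\mathbb{F}$ and hence the evaluation maps $\mathrm{Ev}_a$ and $\mathrm{Ev}_{a_i}$ differ by the composition with a field automorphism of $\mathbb{F}_{q_0^{md_i}}$ fixing $\mathbb{F}$, under which both $\mathcal{C}(a)$ and the element $c(a,z)$ transform compatibly. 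This last sentence will require invoking that $\sigma$ commutes with such automorphisms (which it does, all being powers of the Frobenius), so that skew-cyclicity and the containment are preserved.
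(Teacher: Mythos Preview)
Your proposal is correct and follows exactly the approach the paper takes: the paper's proof is the single sentence ``It follows directly from Lemma~\ref{lemma chinese remainder theorem} and Proposition~\ref{prop partial eval gives isomorphism},'' and you have simply unpacked that composition (with Theorem~\ref{th single generator} supplying $\rho(\mathcal{C}) = \mathcal{C}^{(1)}\times\cdots\times\mathcal{C}^{(t)}$). For the implication (3)$\Rightarrow$(2), note that your Frobenius-commutation argument can be streamlined: since both $\mathrm{Ev}_{a,z}$ and $\mathrm{Ev}_{a_i,z}$ are ring isomorphisms out of $\mathcal{R}_i$ sending $\mathcal{C}^{(i)}$ to $\mathcal{C}(a)$ and $\mathcal{C}(a_i)$ respectively (Proposition~\ref{prop partial eval gives isomorphism}), each of $c(a,z)\in\mathcal{C}(a)$ and $c(a_i,z)\in\mathcal{C}(a_i)$ is equivalent to $\pi_i(\rho(c))\in\mathcal{C}^{(i)}$.
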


We conclude this subsection by defining total evaluation maps.

\begin{definition} \label{def total evaluation map}
Given $ a \in \mathbb{F}_q $ and $ \beta \in \mathbb{F}_{q^m} \setminus \{ 0 \} $ such that $ a^\ell = 1 $, we define the \textit{total evaluation map} 
$$ {\rm Ev}_{a,\beta} : \frac{\left( \frac{\mathbb{F}_{q^m}[x]}{(x^\ell - 1)} \right) [z;\sigma]}{(z^m - 1)} \longrightarrow \frac{\left( \frac{\mathbb{F}_{q^m}[x]}{(x - a)} \right) [z;\sigma]}{(z - \sigma(\beta)\beta^{-1})} \cong \mathbb{F}_{q^m} $$
as the composition map
$$ {\rm Ev}_{a,\beta} = {\rm Ev}^\sigma_\beta \circ {\rm Ev}_{a, z}. $$ 
\end{definition}

By Lemma \ref{lemma connection arith eval lin eval}, the total evaluation map is also given by
\begin{equation} \label{eq evaluation equation for diff roots}
\begin{split}
& {\rm Ev}_{a,\beta} \left( f_0(x) + f_1(x) z + \cdots + f_{m-1}(x) z^{m-1} \right) \\
& = f_0(a) + f_1(a) \sigma(\beta)\beta^{-1} + \cdots + f_{m-1}(a) \sigma^{m-1}(\beta)\beta^{-1} \\
 & = \left( f_0(a) \beta + f_1(a) \sigma(\beta) + \cdots + f_{m-1}(a) \sigma^{m-1}(\beta) \right) \beta^{-1} ,
\end{split}
\end{equation}
where $ f_0(x), f_1(x), \ldots, f_{m-1}(x) \in \mathbb{F}_{q^m}[x] / (x^\ell - 1) $. We will sometimes use the notation 
$$ f(a, 1^\beta) = {\rm Ev}_{a,\beta}(f(x,z)) , \quad \textrm{for } f(x,z) \in \frac{\left( \frac{\mathbb{F}_{q^m}[x]}{(x^\ell - 1)} \right) [z;\sigma]}{(z^m - 1)} . $$

\subsection{The Defining Set} \label{subsec roots and generators}

In this subsection, we show that the zeros of the minimal skew polynomial generator define a CSC code, as in the particular cases of classical cyclic codes \cite[Th. 4.4.2]{pless} and skew-cyclic codes \cite[Th. 2]{ontheroots}. We will use such defining sets to obtain the sum-rank BCH bound (Theorem \ref{th SR-BCH bound}) and a lower bound on the dimensions of sum-rank BCH codes (Theorem \ref{th lower bound}).

We start by the following definition.

\begin{definition} [\textbf{Defining Set}] \label{def defining set}
Given a CSC code $ \mathcal{C} \subseteq \mathcal{R} $ with minimal generator skew polynomial $ g(x,z) \in \mathcal{S} [z; \sigma] $, we define the \textit{defining set} of $ \mathcal{C} $ as
\begin{equation*}
\begin{split}
T_\mathcal{C} = \{ & (a, \beta) \in \mathbb{F}_q \times (\mathbb{F}_{q^m} \setminus \{0 \}) \mid \\
& a^\ell = 1, {\rm Ev}_{a,\beta}(g(x,z)) = 0 \} .
\end{split}
\end{equation*}
\end{definition}

As the name suggests, the defining set of a CSC code actually defines the CSC code. This is gathered in Theorem \ref{th defining set} below. We will need the following lemma, which follows from the \textit{product rule} \cite[Th. 2.7]{lam-leroy}, but can easily be proven from scratch.

\begin{lemma} \label{lemma product rule}
Given $ a \in \mathbb{F}_q $ and $ \beta \in \mathbb{F}_{q^m} \setminus \{ 0 \} $ such that $ a^\ell = 1 $, it holds that
$$ {\rm Ev}_{a,\beta} (f(x,z) g(x,z)) = 0 \quad \textrm{if } {\rm Ev}_{a,\beta} (g(x,z)) = 0, $$
\begin{flalign*}
\textrm{for } f(x,z), g(x,z) \in \frac{\left( \frac{\mathbb{F}_{q^m}[x]}{(x^\ell - 1)} \right) [z;\sigma]}{(z^m - 1)}. &&&
\end{flalign*}
\end{lemma}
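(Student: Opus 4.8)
The plan is to decompose $ {\rm Ev}_{a,\beta} $ exactly as in Definition \ref{def total evaluation map}, namely $ {\rm Ev}_{a,\beta} = {\rm Ev}^\sigma_\beta \circ {\rm Ev}_{a,z} $, and to exploit the fact (established just after Definition \ref{def partial eval map}, using $ \sigma(a) = a^q = a $ since $ a \in \mathbb{F}_q $) that $ {\rm Ev}_{a,z} $ is a ring morphism. First I would write
$$ {\rm Ev}_{a,\beta}(f(x,z)g(x,z)) = {\rm Ev}^\sigma_\beta\big( {\rm Ev}_{a,z}(f(x,z)g(x,z)) \big) = {\rm Ev}^\sigma_\beta\big( f(a,z)\, g(a,z) \big), $$
the last equality being multiplicativity of $ {\rm Ev}_{a,z} $. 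Since moreover $ {\rm Ev}^\sigma_\beta(g(a,z)) = {\rm Ev}_{a,\beta}(g(x,z)) = 0 $ by hypothesis, the claim reduces to the purely skew-polynomial statement: for $ F(z), G(z) \in \mathbb{F}_{q^m}[z;\sigma]/(z^m - 1) $ with $ {\rm Ev}^\sigma_\beta(G(z)) = 0 $, one has $ {\rm Ev}^\sigma_\beta(F(z)G(z)) = 0 $.

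To prove this reduced statement I would pass to associated linearized polynomials (Definition \ref{def linearized pols}). Lifting $ F, G $ to representatives $ \widetilde{F}, \widetilde{G} \in \mathbb{F}_{q^m}[z;\sigma] $, the product $ \widetilde{F}\widetilde{G} $ represents $ FG $ modulo $ (z^m - 1) $, and since $ {\rm Ev}^\sigma_\beta $ annihilates $ z^m - 1 $ (its associated linearized polynomial is $ y^{q^m} - y $, which vanishes at $ \beta $ because $ \beta^{q^m} = \beta $ — this is exactly the inclusion $ (z^m - 1) \subseteq (z - \sigma(\beta)\beta^{-1}) $ noted after Definition \ref{def eval maps}), it suffices to argue in $ \mathbb{F}_{q^m}[z;\sigma] $ itself. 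There, Lemma \ref{lemma connection arith eval lin eval} gives $ {\rm Ev}^\sigma_\beta(\widetilde{G}) = \widetilde{G}(1^\beta) = \widetilde{G}^\sigma(\beta)\beta^{-1} $, so the hypothesis is equivalent to $ \widetilde{G}^\sigma(\beta) = 0 $; and since $ f \mapsto f^\sigma $ is a ring isomorphism $ \mathbb{F}_{q^m}[z;\sigma] \to \mathcal{L}_q \mathbb{F}_{q^m}[y] $ with composition as the product on the target, $ (\widetilde{F}\widetilde{G})^\sigma = \widetilde{F}^\sigma \circ \widetilde{G}^\sigma $. Hence $ (\widetilde{F}\widetilde{G})^\sigma(\beta) = \widetilde{F}^\sigma(\widetilde{G}^\sigma(\beta)) = \widetilde{F}^\sigma(0) = 0 $, using that every linearized polynomial has zero constant term, and therefore $ {\rm Ev}^\sigma_\beta(FG) = (\widetilde{F}\widetilde{G})^\sigma(\beta)\beta^{-1} = 0 $. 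Chaining this back through the first paragraph yields $ {\rm Ev}_{a,\beta}(f(x,z)g(x,z)) = 0 $. (Alternatively, one may invoke the product rule \cite[Th. 2.7]{lam-leroy} directly: in the case $ G(1^\beta) = 0 $ it states precisely $ (FG)(1^\beta) = 0 $.)

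The argument is essentially routine once the maps are lined up. The only point needing genuine care is the bookkeeping between the quotient ring $ \mathbb{F}_{q^m}[z;\sigma]/(z^m-1) $ and the full skew polynomial ring $ \mathbb{F}_{q^m}[z;\sigma] $ — that is, checking that $ {\rm Ev}^\sigma_\beta $ factors through the quotient so the linearized-polynomial computation (which naturally lives in $ \mathbb{F}_{q^m}[z;\sigma] $) is legitimate — but this is handled by the inclusion recalled above, so I do not expect a real obstacle.
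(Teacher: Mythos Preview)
Your argument is correct. The paper does not actually write out a proof of this lemma: it merely remarks that the statement follows from the product rule \cite[Th.~2.7]{lam-leroy} and ``can easily be proven from scratch.'' Your proposal supplies precisely that from-scratch argument --- factoring $ {\rm Ev}_{a,\beta} = {\rm Ev}^\sigma_\beta \circ {\rm Ev}_{a,z} $, using multiplicativity of the ring morphism $ {\rm Ev}_{a,z} $, and then handling the remaining skew-polynomial step via the ring isomorphism $ f \mapsto f^\sigma $ with composition on $ \mathcal{L}_q\mathbb{F}_{q^m}[y] $ --- and you also note the product-rule citation as an alternative, so you are fully aligned with the paper's intended route.
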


\begin{theorem} \label{th defining set}
Assume that $ p $ does not divide $ \ell $. Given a CSC code $ \mathcal{C} \subseteq \mathcal{R} $, the following hold:
\begin{enumerate}
\item
Given $ c(x,z) \in \mathcal{R} $, it holds that $ c(x,z) \in \mathcal{C} $ if, and only if, $ c(a, 1^\beta) = 0 $, for all $ (a,\beta) \in T_\mathcal{C} $.
\item
Given another CSC code $ \widetilde{\mathcal{C}} \subseteq \mathcal{R} $, it holds that $ \mathcal{C} = \widetilde{\mathcal{C}} $ if, and only if, $ T_\mathcal{C} = T_{\widetilde{\mathcal{C}}} $.
\end{enumerate}
\end{theorem}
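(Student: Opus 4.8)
The plan is to reduce everything to the skew-cyclic situation over each component via Corollary \ref{cor total eval product map is isom}, and then invoke the known characterization of skew-cyclic codes by the roots of their minimal generator skew polynomial. First I would fix irreducible factors $ m_1(x), \ldots, m_t(x) $ of $ x^\ell - 1 $ in $ \mathbb{F}[x] $ (recall $ p \nmid \ell $, so the roots are simple) and choose $ a_1, \ldots, a_t \in \mathbb{F}_q $ with $ m_i(a_i) = 0 $. The key observation is that for each $ a \in \mathbb{F}_q $ with $ a^\ell = 1 $, the pairs $ (a, \beta) $ in $ T_\mathcal{C} $ with that fixed $ a $ are governed by the skew polynomial $ g(a,z) \in \mathbb{F}_{q_0^{md_i}}[z;\sigma] $, which by Proposition \ref{prop partial eval gives isomorphism} is the minimal generator skew polynomial of the evaluation skew-cyclic code $ \mathcal{C}(a) $. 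Moreover $ {\rm Ev}_{a,\beta}(g(x,z)) = {\rm Ev}^\sigma_\beta(g(a,z)) = g(a,1^\beta) $, so $ (a,\beta) \in T_\mathcal{C} $ iff $ \beta $ is a ``root'' (in the arithmetic-evaluation sense, equivalently via the associated linearized polynomial and Lemma \ref{lemma connection arith eval lin eval}) of the minimal generator skew polynomial of the skew-cyclic code $ \mathcal{C}(a) $.

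For item 1, I would argue as follows. Suppose $ c(x,z) \in \mathcal{C} $. By Lemma \ref{lemma product rule} and Theorem \ref{th single generator} (so $ c(x,z) = f(x,z) g(x,z) $ in $ \mathcal{R} $ for some $ f $), we get $ {\rm Ev}_{a,\beta}(c(x,z)) = 0 $ whenever $ {\rm Ev}_{a,\beta}(g(x,z)) = 0 $, i.e. $ c(a,1^\beta) = 0 $ for all $ (a,\beta) \in T_\mathcal{C} $. Conversely, suppose $ c(a,1^\beta) = 0 $ for all $ (a,\beta) \in T_\mathcal{C} $. By Corollary \ref{cor total eval product map is isom}, it suffices to show $ c(a_i,z) \in \mathcal{C}(a_i) $ for each $ i $. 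Fix $ i $ and write $ a = a_i $. The condition $ c(a,1^\beta) = 0 $ for all relevant $ \beta $ says precisely that $ c(a,z) \in \mathbb{F}_{q_0^{md_i}}[z;\sigma]/(z^m-1) $ vanishes at every conjugate $ 1^\beta $ of $ 1 $ that is a root of $ g(a,z) $; since $ \mathcal{C}(a) $ is the skew-cyclic code with minimal generator skew polynomial $ g(a,z) $, membership $ c(a,z) \in \mathcal{C}(a) = (g(a,z)) $ is equivalent (by the right division algorithm in $ \mathbb{F}_{q_0^{md_i}}[z;\sigma] $ together with the root characterization for skew-cyclic codes, e.g. \cite[Th. 2]{ontheroots} and \cite[Lemma 2.4]{lam-leroy}) to $ c(a,z) $ vanishing at that set of conjugates. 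Here I must be slightly careful: the set $ \{ \beta : (a,\beta) \in T_\mathcal{C} \} $ is $ \{0\} $-free and is the full set of nonzero $ \beta $ with $ g(a,1^\beta) = 0 $, and because $ g(a,z) $ right-divides $ z^m - 1 $, its ``W-closure'' / the right $ \mathbb{F}_{q}^{\ast} $-span argument recovers exactly the left ideal $ (g(a,z)) $; the conclusion $ c(a,z) \in \mathcal{C}(a) $ follows.

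For item 2, the ``only if'' direction is trivial since $ T_\mathcal{C} $ is defined through the (unique, by Theorem \ref{th single generator}) minimal generator skew polynomial of $ \mathcal{C} $. For the ``if'' direction, if $ T_\mathcal{C} = T_{\widetilde{\mathcal{C}}} $, then item 1 applied to both codes gives, for every $ c(x,z) \in \mathcal{R} $, that $ c(x,z) \in \mathcal{C} \iff c(a,1^\beta) = 0 \ \forall (a,\beta) \in T_\mathcal{C} = T_{\widetilde{\mathcal{C}}} \iff c(x,z) \in \widetilde{\mathcal{C}} $, so $ \mathcal{C} = \widetilde{\mathcal{C}} $.

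The main obstacle I expect is item 1's converse: making precise the claim that, component by component, ``$ c(a,z) $ vanishes at all conjugate roots of $ g(a,z) $'' is equivalent to $ g(a,z) $ right-dividing $ c(a,z) $ in $ \mathbb{F}_{q_0^{md_i}}[z;\sigma]/(z^m-1) $. This is essentially the skew-cyclic analogue of the classical fact that a codeword lies in a cyclic code iff it vanishes on the defining set, and it relies on the facts that $ g(a,z) $ right-divides $ z^m - 1 $, that distinct roots of $ z^m-1 $ come in ``conjugacy classes'' each contributing a linear right factor, and that $ \mathbb{F}_{q_0^{md_i}}[z;\sigma] $ is a right Euclidean domain; one cites \cite[Th. 2]{ontheroots} (or \cite[Th. 1, 2]{ontheroots}) for skew-cyclic codes and Lemma \ref{lemma connection arith eval lin eval} to pass between arithmetic evaluation and linearized-polynomial evaluation. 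Once that component-wise statement is in hand, Corollary \ref{cor total eval product map is isom} assembles the pieces and the theorem follows.
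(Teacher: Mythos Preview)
Your proposal is correct and follows essentially the same route as the paper: use Lemma \ref{lemma product rule} together with $\mathcal{C}=(g(x,z))$ for the forward direction of item 1, and for the converse reduce via Corollary \ref{cor total eval product map is isom} to the componentwise statement that $c(a,z)\in\mathcal{C}(a)$, which follows from Proposition \ref{prop partial eval gives isomorphism} (so $g(a,z)$ is the minimal generator of $\mathcal{C}(a)$) and the root characterization of skew-cyclic codes \cite[Th.~2]{ontheroots}; item 2 is then immediate from item 1. The only cosmetic difference is that the paper checks all $\ell$th roots $a$ rather than one $a_i$ per irreducible factor, but Corollary \ref{cor total eval product map is isom} makes these equivalent.
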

\begin{proof}
Item 2 follows immediately from item 1, hence we only prove item 1. 

First, if $ c(x,z) \in \mathcal{C} $, then $ c(a, 1^\beta) = 0 $ by Lemma \ref{lemma product rule}, since $ g(x,z) $ divides $ c(x,z) $ on the right and $ g(a, 1^\beta) = 0 $, for all $ (a,\beta) \in T_\mathcal{C} $.

Conversely, assume that $ c(a, 1^\beta) = 0 $, for all $ (a,\beta) \in T_\mathcal{C} $. Fix an element $ a \in \mathbb{F}_q $ such that $ a^\ell = 1 $. By the assumption on $ c(x,z) $ and the definition of $ T_\mathcal{C} $, it holds that $ c(a, 1^\beta) = 0 $, for all $ \beta \in \mathbb{F}_{q^m} \setminus \{ 0 \} $ such that $ g(a,1^\beta) = {\rm Ev}^\sigma_\beta (g(a,z)) = 0 $. By the corresponding result for skew-cyclic codes \cite[Th. 2]{ontheroots}, we have that $ c(a,z) \in \mathcal{C}(a) $, since $ g(a,z) $ is the minimal generator skew polynomial of $ \mathcal{C}(a) $ (Proposition \ref{prop partial eval gives isomorphism}). Thus, we have shown that $ c(a,z) \in \mathcal{C}(a) $, for all $ a \in \mathbb{F}_q $ such that $ a^\ell = 1 $. By Corollary \ref{cor total eval product map is isom}, we conclude that $ c(x,z) \in \mathcal{C} $, and we are done.
\end{proof}

We conclude by computing the dimension of a CSC code from its defining set. 

\begin{theorem} \label{th dimensions from defining set}
Assume that $ p $ does not divide $ \ell $. Let $ \mathcal{C} \subseteq \mathcal{R} $ be a CSC code. For $ a \in \mathbb{F}_q $ such that $ a^\ell = 1 $, define
\begin{equation}
\begin{split}
T_\mathcal{C}(a) & = \{ \beta \in \mathbb{F}_{q^m} \setminus \{ 0 \} \mid (a, \beta) \in T_\mathcal{C} \} \\
& = \{ \beta \in \mathbb{F}_{q^m} \setminus \{ 0 \} \mid {\rm Ev}^\sigma_\beta(g(a,z)) = 0 \} ,
\end{split}
\label{eq def defining sets of partial ev skew-cyclic codes}
\end{equation}
which satisfies that $ T_\mathcal{C}(a) \cup \{ 0 \} \subseteq \mathbb{F}_{q^m} $ is a vector space over $ \mathbb{F}_q $. Let $ x^\ell - 1 = m_1(x) m_2(x) \cdots m_t(x) $ be the irreducible decomposition of $ x^\ell - 1 $ in $ \mathbb{F}[x] $, and choose $ a_1, a_2, \ldots, a_t \in \mathbb{F}_q $ such that $ m_i(a_i) = 0 $, for $ i = 1,2, \ldots, t $. It holds that
\begin{equation*}
\begin{split}
\dim_\mathbb{F} (\mathcal{C}) & = \sum_{i=1}^t \deg_x (m_i(x)) \left( m - \dim_{\mathbb{F}_q} (T_\mathcal{C}(a_i) \cup \{ 0 \}) \right) \\
& = n - \sum_{i=1}^t \deg_x (m_i(x)) \dim_{\mathbb{F}_q} (T_\mathcal{C}(a_i) \cup \{ 0 \}) .
\end{split}
\end{equation*}
\end{theorem}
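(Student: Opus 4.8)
The plan is to reduce the statement to a single fact about each skew-cyclic component $ \mathcal{C}(a_i) $, and then to prove that fact via the associated linearized polynomial.

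I would start from Corollary~\ref{cor dimension from generator} (valid here since $ p \nmid \ell $, and with $ N = m $), which already gives
$$ \dim_\mathbb{F}(\mathcal{C}) = n - \sum_{i=1}^t \deg_x(m_i(x))\, \deg_z(g_i(x,z)), $$
where $ g_i(x,z) \in \mathcal{S}_i[z;\sigma] $ is the image of the minimal generator skew polynomial $ g(x,z) $. Thus the theorem is equivalent to proving $ \deg_z(g_i(x,z)) = \dim_{\mathbb{F}_q}(T_\mathcal{C}(a_i) \cup \{0\}) $ for each $ i $, after which the first displayed formula follows from $ n = \ell m $ and $ \sum_{i=1}^t \deg_x(m_i(x)) = \ell $. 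By Proposition~\ref{prop partial eval gives isomorphism}, the partial evaluation $ \mathrm{Ev}_{a_i,z}\colon \mathcal{S}_i[z;\sigma] \to \mathbb{F}_{q_0^{m d_i}}[z;\sigma] $ is a ring isomorphism carrying $ g_i(x,z) $ to $ \bar{g} := g(a_i,z) $, the minimal generator skew polynomial of the length-$ m $ skew-cyclic code $ \mathcal{C}(a_i) $ over $ L := \mathbb{F}_{q_0^{m d_i}} $; since an isomorphism of skew polynomial rings over fields preserves the $ z $-degree, $ \deg_z(g_i(x,z)) = \deg_z(\bar{g}) $. Applying the ring morphism $ \mathrm{Ev}_{a_i,z} $ to the identities $ g(x,z)h(x,z) = h(x,z)g(x,z) = z^m - 1 $ of Theorem~\ref{th single generator}(3) also yields $ \bar{g}\,\bar{h} = \bar{h}\,\bar{g} = z^m - 1 $ in $ L[z;\sigma] $, where $ \bar{h} := h(a_i,z) $. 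Note that $ L \subseteq \mathbb{F}_{q^m} $, since $ a_i \in \mathbb{F}_q $ forces $ [\mathbb{F}_{q_0}(a_i):\mathbb{F}_{q_0}] \mid s $, hence $ d_i \mid s $.

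The core of the argument is then to show that $ T_\mathcal{C}(a_i) \cup \{0\} $ is an $ \mathbb{F}_q $-subspace of $ \mathbb{F}_{q^m} $ of dimension $ \deg_z(\bar{g}) $. By Lemma~\ref{lemma connection arith eval lin eval}, for $ \beta \in \mathbb{F}_{q^m}\setminus\{0\} $ one has $ \mathrm{Ev}^\sigma_\beta(\bar{g}) = \bar{g}(1^\beta) = \bar{g}^\sigma(\beta)\beta^{-1} $, so by (\ref{eq def defining sets of partial ev skew-cyclic codes}),
$$ T_\mathcal{C}(a_i) \cup \{0\} = \{\, \gamma \in \mathbb{F}_{q^m} \mid \bar{g}^\sigma(\gamma) = 0 \,\}, $$
the set of roots in $ \mathbb{F}_{q^m} $ of the linearized $ q $-polynomial $ \bar{g}^\sigma(y) \in \mathcal{L}_q \mathbb{F}_{q^m}[y] $; this is automatically an $ \mathbb{F}_q $-subspace because $ \bar{g}^\sigma $ is an $ \mathbb{F}_q $-linear map (which also justifies the parenthetical assertion in the statement). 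Transporting $ \bar{g}\,\bar{h} = \bar{h}\,\bar{g} = z^m - 1 $ through the ring isomorphism $ f \mapsto f^\sigma $ (composition of maps being the product on $ \mathcal{L}_q L[y] $) gives $ \bar{g}^\sigma \circ \bar{h}^\sigma = \bar{h}^\sigma \circ \bar{g}^\sigma = (z^m-1)^\sigma(y) = y^{q^m} - y $. Differentiating $ \bar{g}^\sigma(\bar{h}^\sigma(y)) = y^{q^m} - y $ and using that $ (\bar{g}^\sigma)' $ is the constant $ G_0 $ (the coefficient of $ y $ in $ \bar{g}^\sigma $) forces $ G_0 \neq 0 $; hence $ \bar{g}^\sigma $ is a separable polynomial of degree $ q^{\deg_z \bar{g}} $, so it has exactly $ q^{\deg_z \bar{g}} $ distinct roots in $ \overline{\mathbb{F}_q} $. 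Moreover any root $ \gamma $ of $ \bar{g}^\sigma $ satisfies $ \gamma^{q^m} - \gamma = \bar{h}^\sigma(\bar{g}^\sigma(\gamma)) = \bar{h}^\sigma(0) = 0 $, so $ \gamma \in \mathbb{F}_{q^m} $; thus all $ q^{\deg_z \bar{g}} $ roots lie in $ \mathbb{F}_{q^m} $, the $ \mathbb{F}_q $-subspace $ T_\mathcal{C}(a_i)\cup\{0\} $ has cardinality $ q^{\deg_z \bar{g}} $, and therefore $ \dim_{\mathbb{F}_q}(T_\mathcal{C}(a_i) \cup \{0\}) = \deg_z(\bar{g}) = \deg_z(g_i(x,z)) $. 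Plugging this into Corollary~\ref{cor dimension from generator} finishes the proof.

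I expect the main delicacy to be the field bookkeeping in the tower $ \mathbb{F}_{q_0} \subseteq \mathbb{F} = \mathbb{F}_{q_0^m} \subseteq \mathbb{F}_{q_0^{m d_i}} \subseteq \mathbb{F}_{q^m} $: one has to check the inclusion $ \mathbb{F}_{q_0^{m d_i}} \subseteq \mathbb{F}_{q^m} $ and that $ \sigma $ has order dividing $ m $ on $ \mathbb{F}_{q_0^{m d_i}} $ (so that $ (z^m-1)^\sigma(y) $ is genuinely $ y^{q^m}-y $ and the length-$ m $ skew-cyclic structure on $ \mathcal{C}(a_i) $ is consistent), while remembering throughout that the parameters $ \beta $ are evaluated over the large field $ \mathbb{F}_{q^m} $ even though $ \bar{g} $ has coefficients only in $ \mathbb{F}_{q_0^{m d_i}} $. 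Everything else — the reductions through Corollary~\ref{cor dimension from generator}, Proposition~\ref{prop partial eval gives isomorphism}, and Lemma~\ref{lemma connection arith eval lin eval} — is routine.
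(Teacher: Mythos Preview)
Your proof is correct and follows essentially the same overall strategy as the paper's: both reduce, via Proposition~\ref{prop partial eval gives isomorphism}, to the single skew-cyclic component $\mathcal{C}(a_i)$, identify $T_\mathcal{C}(a_i)\cup\{0\}$ with the root set of the associated linearized polynomial $g(a_i,z)^\sigma$ via Lemma~\ref{lemma connection arith eval lin eval}, and then equate the $\mathbb{F}_q$-dimension of that root set with $\deg_z g(a_i,z)$. The only substantive difference is in how this last equality is established. The paper passes to the extended code $\mathcal{C}(a_i)\otimes\mathbb{F}_{q^m}$ and invokes \cite[Th.~2]{ontheroots} for the degree--root-dimension relation, then does a short scalar-extension/restriction dimension count (equations~(\ref{eq dim-roots eq 1})--(\ref{eq dim-roots eq 2})) before summing via Corollary~\ref{cor total eval product map is isom}. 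You instead start from Corollary~\ref{cor dimension from generator} (so the summation is already done) and prove the degree--root relation directly and self-containedly: transporting $\bar g\,\bar h=z^m-1$ to $\bar g^\sigma\circ\bar h^\sigma=y^{q^m}-y$ forces $\bar g^\sigma$ to be separable with all roots in $\mathbb{F}_{q^m}$. Your route avoids the external citation and the tensor-product bookkeeping; the paper's route is shorter because it defers to the known skew-cyclic result.
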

\begin{proof}
The fact that $ T_\mathcal{C}(a) \cup \{ 0 \} $ is a vector space over $ \mathbb{F}_q $, for $ a \in \mathbb{F}_q $ such that $ a^\ell = 1 $, follows directly from Lemma \ref{lemma connection arith eval lin eval}. 

Fix $ a \in \mathbb{F}_q $ such that $ a^\ell = 1 $. Let $ g(x,z) \in \mathcal{S}[z; \sigma] $ be the minimal generator skew polynomial of $ \mathcal{C} $. By Proposition \ref{prop partial eval gives isomorphism}, $ g(a,z) \in \mathbb{F}_{q_0^{m d_i}} [z; \sigma] $ is the minimal generator skew polynomial of $ \mathcal{C}(a) $. 

Consider the extended code over $ \mathbb{F}_{q^m} $
\begin{equation*}
\begin{split}
\mathcal{C}(a) \otimes \mathbb{F}_{q^m} = \{ & \lambda c(a,z) \mid \\
& \lambda \in \mathbb{F}_{q^m}, c(a,z) \in \mathcal{C}(a) \} \subseteq \frac{\mathbb{F}_{q^m} [z; \sigma]}{(z^m - 1)},
\end{split}
\end{equation*}
which is also a skew-cyclic code. Using item 5 in \cite[Th. 2]{ontheroots}, we deduce that $ g(a,z) \in \mathbb{F}_{q^m}[z; \sigma] $ is also the minimal generator skew polynomial of $ \mathcal{C}(a) \otimes \mathbb{F}_{q^m} $, since it divides $ z^m-1 $ on the right. Furthermore, using Lemma \ref{lemma connection arith eval lin eval}, we can see that $ T_\mathcal{C}(a) \cup \{ 0 \} $ is the set of roots of the linearized polynomial associated to $ g(a,z) \in \mathbb{F}_{q^m} [z;\sigma] $ (Definition \ref{def linearized pols}). Thus, applying item 2 in \cite[Th. 2]{ontheroots} on the skew-cyclic code $ \mathcal{C}(a) \otimes \mathbb{F}_{q^m} $, we deduce that its dimension over $ \mathbb{F}_{q^m} $ is given by
\begin{equation} \label{eq dim-roots eq 1}
\dim_{\mathbb{F}_{q^m}} \left( \mathcal{C}(a) \otimes \mathbb{F}_{q^m} \right) = m - \dim_{\mathbb{F}_q} \left( T_\mathcal{C}(a) \cup \{ 0 \} \right).
\end{equation}
The reader can also verify the identities
\begin{equation} \label{eq dim-roots eq 2}
\begin{array}{lcr}
\dim_{\mathbb{F}_{q^m}} \left( \mathcal{C}(a) \otimes \mathbb{F}_{q^m} \right) & = & \dim_{\mathbb{F}_{q_0^{m d_i}}} \left( \mathcal{C}(a) \right), \\
\dim_\mathbb{F} \left( \mathcal{C}(a) \right) & = & \deg_x (m_i(x)) \dim_{\mathbb{F}_{q_0^{m d_i}}} \left( \mathcal{C}(a) \right).
\end{array}
\end{equation}
Combining Corollary \ref{cor total eval product map is isom} with (\ref{eq dim-roots eq 1}) and (\ref{eq dim-roots eq 2}) for $ a_1, a_2, \ldots, a_t $, we conclude that
\begin{equation*}
\begin{split}
\dim_\mathbb{F} \left( \mathcal{C} \right) & = \sum_{i = 1}^t \dim_\mathbb{F} (\mathcal{C}(a_i)) \\
& = \sum_{i=1}^t \deg_x (m_i(x)) \left( m - \dim_{\mathbb{F}_q} (T_\mathcal{C}(a_i) \cup \{ 0 \}) \right) ,
\end{split}
\end{equation*}
and we are done.
\end{proof}

To conclude, we discuss the cases $ \ell = 1 $ and $ m = 1 $. 

First, in the skew-cyclic case $ \ell = 1 $, Definition \ref{def defining set} coincides with \cite[Def. 2]{ontheroots} after removing the redundant unique root of unity $ a = 1 $. Note that \cite[Def. 2]{ontheroots} uses the associated linearized polynomial (Definition \ref{def linearized pols}) of the minimal generator skew polynomial. Finally, Theorems \ref{th defining set} and \ref{th dimensions from defining set} recover items 3 and 2 in \cite[Th. 2]{ontheroots}, respectively.

We now turn to the classical cyclic case $ m = 1 $. With notation as in (\ref{eq min gen and check skew pol for classical case idemp}), the reader may verify (using that $ e_j(a) = \delta_{i,j} $ if $ m_i(a) = 0 $) that 
\begin{equation*}
\begin{split}
T_\mathcal{C} = \{ & (a, \beta) \in \mathbb{F}_q \times (\mathbb{F}_q \setminus \{ 0 \} ) \mid \\
& m_i(a) = 0, \textrm{ for some } i \notin I \} .
\end{split}
\end{equation*}
In other words, after removing the second component $ \beta \in \mathbb{F}_q \setminus \{ 0 \} $, $ T_\mathcal{C} $ is exactly the set of roots of unity that are roots of the idempotent generator of $ \mathcal{C} $. Such set is precisely the set of roots of the minimal generator polynomial of $ \mathcal{C} $. Thus Definition \ref{def defining set} coincides with the standard definition of defining set of a classical cyclic code \cite[Sec. 7.5]{macwilliamsbook} \cite[Sec. 4.4]{pless}. Furthermore, Theorems \ref{th defining set} and \ref{th dimensions from defining set} recover items (iii) and (iv) in \cite[Th. 4.4.2]{pless}.

\section{Cyclic-Skew-Cyclic Linearized Reed-Solomon Codes} \label{sec CSC LRS codes}

In this subsection, we revisit linearized Reed-Solomon codes \cite{linearizedRS, caruso} and provide one of their subfamilies formed by CSC codes. They will be crucial for proving the sum-rank BCH bound (Theorem \ref{th SR-BCH bound}) and defining sum-rank BCH codes (Definition \ref{def SR-BCH codes}).

Recall that $ \mathbb{F} = \mathbb{F}_{q_0^m} $, by definition (\ref{eq main finite field}), and that $ q = q_0^s $ for some positive integer $ s $. In this section, we will only work with the finite-field extension $ \mathbb{F}_q \subseteq \mathbb{F}_{q^m} $.

\subsection{Revisiting Linearized RS Codes} \label{subsec revisiting lin RS codes}

As before, we also consider the length partition $ n = \ell N $ (Subsection \ref{subsec sum-rank metric}). However, we need to assume that $ 1 \leq \ell \leq q-1 $ and $ 1 \leq N \leq m $. Therefore $ n \leq (q-1) m $.

Recall that we consider $ \sigma : \mathbb{F}_{q^m} \longrightarrow \mathbb{F}_{q^m} $ given by $ \sigma(a) = a^q $, for all $ a \in \mathbb{F}_{q^m} $. We need to define linear operators as in \cite[Def.~20]{linearizedRS}. 

\begin{definition}[\textbf{Linear operators \cite{linearizedRS}}] \label{def linearized operators}
Fix $ a \in \mathbb{F}_{q^m} $, and define its $ i $th norm as $ N_i(a)
= \sigma^{i-1}(a) \cdots \sigma(a)a $, for $ i \in \mathbb{N} $. Define the $ \mathbb{F}_q
$-linear operator $ \mathcal{D}_a^i : \mathbb{F}_{q^m} \longrightarrow
\mathbb{F}_{q^m} $ by
$$ \mathcal{D}_a^i(b) = \sigma^i(b) N_i(a) , $$
for $ b \in \mathbb{F}_{q^m} $ and $ i \in \mathbb{N} $. 
\end{definition}

We say that $ a,b \in \mathbb{F}_{q^m} $ are \textit{conjugate} (with respect to $ \mathbb{F}_{q^m}[z;\sigma] $) if there exists $ c \in \mathbb{F}_{q^m} \setminus \{ 0 \} $ such that $ b = \sigma(c)c^{-1} a $ (see \cite{lam} \cite[Eq.~(2.5)]{lam-leroy}). Let $ \mathcal{A} = \{ a_0, a_1, \ldots, a_{\ell - 1} \} \subseteq \mathbb{F}_{q^m} \setminus \{ 0 \} $ be a set of $ \ell $ pair-wise non-conjugate elements. Let $ \mathcal{B}_i = \{ \beta_{i,0}, \beta_{i,1}, \ldots, \beta_{i, N-1} \} \subseteq \mathbb{F}_{q^m} $ be a set of $ N $ elements of $ \mathbb{F}_{q^m} $ that are linearly independent over $ \mathbb{F}_q $, for $ i = 0,1, \ldots, \ell -1 $. Denote $ \mathcal{B} = (\mathcal{B}_0, \mathcal{B}_1, \ldots, \mathcal{B}_{\ell - 1}) $ and define the matrix 
\begin{equation}
D(\mathcal{A}, \mathcal{B}) = (D_0 | D_1 | \ldots | D_{\ell - 1}) \in \mathbb{F}_{q^m}^{k \times n},
\label{eq gen matrix lin RS codes general}
\end{equation}
where
\begin{equation*}
D_i = \left( \begin{array}{cccc}
\beta_{i,0} & \beta_{i,1} & \ldots & \beta_{i,N-1} \\
\mathcal{D}_{a_i} \left( \beta_{i,0} \right) & \mathcal{D}_{a_i} \left( \beta_{i,1} \right) & \ldots & \mathcal{D}_{a_i} \left( \beta_{i,N-1} \right) \\
\mathcal{D}_{a_i}^2 \left( \beta_{i,0} \right) & \mathcal{D}_{a_i}^2 \left( \beta_{i,1} \right) & \ldots & \mathcal{D}_{a_i}^2 \left( \beta_{i,N-1} \right) \\
\vdots & \vdots & \ddots & \vdots \\
\mathcal{D}_{a_i}^{k-1} \left( \beta_{i,0} \right) & \mathcal{D}_{a_i}^{k-1} \left( \beta_{i,1} \right) & \ldots & \mathcal{D}_{a_i}^{k-1} \left( \beta_{i,N-1} \right) \\
\end{array} \right),
\end{equation*}
for $ i = 0,1, \ldots, \ell - 1 $, for $ k = 1,2, \ldots, n $. Then the following definition is a particular case of \cite[Def.~31]{linearizedRS}.

\begin{definition} [\textbf{Linearized Reed-Solomon codes \cite{linearizedRS}}] \label{def lin RS codes}
For $ k = 1,2, \ldots, n $, we define the linearized Reed-Solomon code of dimension $ k $, set of pair-wise non-conjugate elements $ \mathcal{A} $ and linearly independent sets $ \mathcal{B} $, as the linear code $
\mathcal{C}^\sigma_{k}(\mathcal{A}, \mathcal{B}) \subseteq \mathbb{F}_{q^m}^n $ with generator matrix $ D(\mathcal{A}, \mathcal{B}) \in  \mathbb{F}_{q^m}^{k \times n} $ as in (\ref{eq gen matrix lin RS codes general}).
\end{definition}

The following result is \cite[Th. 4]{linearizedRS} and states that linearized Reed-Solomon codes attain equality in (\ref{eq singleton bound}), expressing size in terms of dimension.

\begin{proposition}[\textbf{\cite{linearizedRS}}] \label{prop linRS are MSRD}
For $ k = 1,2, \ldots, n $, the linearized Reed-Solomon code $ \mathcal{C}^\sigma_{k}(\mathcal{A}, \mathcal{B}) \subseteq \mathbb{F}_{q^m}^n $ in Definition \ref{def lin RS codes} is a $ k $-dimensional $ \mathbb{F}_{q^m} $-linear MSRD code for the finite-field extension $ \mathbb{F}_q \subseteq \mathbb{F}_{q^m} $ and length partition $ n = \ell N $ as in (\ref{eq sum-rank length partition}). That is, it satisfies that
$$ {\rm d}_{SR}(\mathcal{C}^\sigma_{k}(\mathcal{A}, \mathcal{B})) = n - k + 1. $$
\end{proposition}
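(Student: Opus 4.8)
The strategy is the standard one for MSRD codes: the upper bound $ {\rm d}_{SR}(\mathcal{C}^\sigma_k(\mathcal{A},\mathcal{B})) \le n-k+1 $ is immediate from the Singleton bound (\ref{eq singleton bound}) once one knows that $ \dim_{\mathbb{F}_{q^m}} \mathcal{C}^\sigma_k(\mathcal{A},\mathcal{B}) = k $, since then $ |\mathcal{C}^\sigma_k(\mathcal{A},\mathcal{B})| = |\mathbb{F}_{q^m}|^k $. So the real work is (a) to show that the dimension is exactly $ k $ and (b) to show that every nonzero codeword has sum-rank weight at least $ n-k+1 $; I would obtain both at once by identifying codewords with skew polynomials.

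First I would set up the skew-polynomial description of the code. A message vector $ (f_0, f_1, \ldots, f_{k-1}) \in \mathbb{F}_{q^m}^k $ corresponds to the skew polynomial $ f(z) = \sum_{u=0}^{k-1} f_u z^u \in \mathbb{F}_{q^m}[z;\sigma] $, and unwinding the definition of $ D(\mathcal{A},\mathcal{B}) $ together with Definition \ref{def linearized operators}, the entry of the codeword $ \mathbf{c} $ in block $ i $ and column $ j $ equals $ \sum_{u=0}^{k-1} f_u \mathcal{D}_{a_i}^u(\beta_{i,j}) $. Using the interplay between the operators $ \mathcal{D}_{a_i} $, the conjugates of $ a_i $, and skew-polynomial evaluation (the mechanism behind Lemma \ref{lemma connection arith eval lin eval} and the theory of \cite{lam, lam-leroy}), this quantity is an $ \mathbb{F}_q $-linear function of $ \beta_{i,j} $, so block $ i $ of $ \mathbf{c} $, read in the basis $ \mathcal{B}_i $, is the image of $ \mathbb{F}_q^N $ under an $ \mathbb{F}_q $-linear map $ L_{i,f} $ whose kernel is exactly the space of $ \mathcal{B}_i $-coordinate vectors of the roots of $ f $ lying in the conjugacy class of $ a_i $. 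Hence $ N - \Rk_{\mathbb{F}_q}(\text{block } i) $ equals the $ \mathbb{F}_q $-dimension of the root space of $ f $ in the class of $ a_i $.

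Summing over $ i $ gives $ n - {\rm wt}_{SR}(\mathbf{c}) = \sum_{i=0}^{\ell-1} \dim_{\mathbb{F}_q}(\text{root space of } f \text{ in the class of } a_i) $. The crucial ingredient is the structural fact from skew-polynomial theory that roots of $ f $ lying in pairwise distinct conjugacy classes combine independently, so that this sum is bounded by $ \deg_z f \le k-1 $; since $ \mathcal{A} $ consists of $ \ell $ pairwise non-conjugate elements, the $ \ell $ classes are indeed distinct. Therefore $ {\rm wt}_{SR}(\mathbf{c}) \ge n-k+1 $ for every nonzero $ f $ of $ z $-degree at most $ k-1 $, which is (b); in particular the encoding map $ f \mapsto \mathbf{c} $ is injective (a nonzero $ f $ yields a codeword of positive weight, as $ k \le n $), so $ \dim_{\mathbb{F}_{q^m}} \mathcal{C}^\sigma_k(\mathcal{A},\mathcal{B}) = k $ because the generator matrix has $ k $ rows. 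Combined with the Singleton upper bound this gives $ {\rm d}_{SR}(\mathcal{C}^\sigma_k(\mathcal{A},\mathcal{B})) = n-k+1 $.

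The main obstacle I anticipate is the precise form of the two root-counting facts: (i) that the $ \mathbb{F}_q $-dimension of the kernel of $ b \mapsto \sum_u f_u \mathcal{D}_{a_i}^u(b) $ equals the number of independent roots of $ f $ in the conjugacy class of $ a_i $, and (ii) that the total number of independent roots of a skew polynomial, taken over finitely many pairwise disjoint conjugacy classes, does not exceed its degree. Both follow from the Lam--Leroy machinery (the product rule and the analysis of root spaces, also exploited in \cite{linearizedRS}), and the cleanest route is to invoke those results rather than re-derive them; indeed, the statement is exactly \cite[Th. 4]{linearizedRS}.
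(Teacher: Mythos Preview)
The paper does not actually prove this proposition: it is stated as a citation, with the sentence ``The following result is \cite[Th. 4]{linearizedRS}'' and no \texttt{proof} environment. Your sketch is a correct outline of the argument underlying that cited theorem (Singleton for the upper bound; the Lam--Leroy root-counting bound across pairwise non-conjugate classes for the lower bound and for injectivity of the encoding), and you yourself close by recognizing that the statement is exactly \cite[Th.~4]{linearizedRS}. So there is nothing to compare against in the paper beyond the bare citation, and your proposal is consistent with it.
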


As observed in \cite[Sec. 3]{linearizedRS} and \cite[Subsec. IV-A]{secure-multishot}, linearized Reed-Solomon codes recover (generalized) Reed-Solomon codes \cite{reed-solomon} when $ m = N = 1 $, and they recover Gabidulin codes \cite{gabidulin, roth} when $ \ell=1 $. These are the cases when the sum-rank metric particularizes to the Hamming metric and the rank metric, respectively.

\subsection{Non-Conjugate Roots of Unity} \label{subsec lin RS codes conj repr lie in base field}

The next lemma is a general result on $ \ell $th roots of unity and conjugacy with respect to $ \mathbb{F}_{q^m}[z; \sigma] $, but we have not been able to find it in the literature. It will enable us to obtain linearized Reed-Solomon codes that are also CSC codes (Subsection \ref{subsec finding CSC lin RS codes}), which in turn will allow us to obtain Sum-Rank BCH codes (Subsection \ref{subsec SR BCH bound}). 

\begin{lemma} \label{lemma non-conj over F_q}
Recall that we are assuming throughout the manuscript that $ \mathbb{F}_q $ contains all $ \ell $th roots of unity. Assume also that there are exactly $ \ell $ distinct $ \ell $th roots of unity, that is, $ p $ does not divide $ \ell $. Then the following conditions are equivalent:
\begin{enumerate}
\item
The roots of $ x^\ell - 1 \in \mathbb{F}_p[x] $ are pair-wise non-conjugate with respect to $ \mathbb{F}_{q^m}[z; \sigma] $ (see Subsection \ref{subsec revisiting lin RS codes}).
\item
$ \ell $ and $ m $ are coprime.
\end{enumerate}
\end{lemma}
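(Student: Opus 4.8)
The plan is to translate the conjugacy condition into a statement about the norm map on $ \mathbb{F}_{q^m}^* $ and then use elementary number theory on the orders of the roots of unity. First I would recall the basic fact (see \cite{lam, lam-leroy}) that two nonzero elements $ a, b \in \mathbb{F}_{q^m} $ are conjugate with respect to $ \mathbb{F}_{q^m}[z;\sigma] $, with $ \sigma(a) = a^q $, if and only if $ b = \sigma(c) c^{-1} a = c^{q-1} a $ for some $ c \in \mathbb{F}_{q^m}^* $; that is, $ a $ and $ b $ are conjugate if and only if $ b a^{-1} $ lies in the subgroup $ (\mathbb{F}_{q^m}^*)^{q-1} = \{ c^{q-1} \mid c \in \mathbb{F}_{q^m}^* \} $ of $ \mathbb{F}_{q^m}^* $. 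Since $ \mathbb{F}_{q^m}^* $ is cyclic of order $ q^m - 1 $, this subgroup is exactly the unique subgroup of order $ (q^m-1)/\gcd(q-1, q^m-1) = (q^m-1)/(q-1) $, using that $ q - 1 $ divides $ q^m - 1 $; equivalently, $ b a^{-1} \in (\mathbb{F}_{q^m}^*)^{q-1} $ if and only if the order of $ b a^{-1} $ divides $ (q^m-1)/(q-1) $.

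Next I would set up the specific situation: the $ \ell $th roots of unity form the unique cyclic subgroup $ G \subseteq \mathbb{F}_q^* \subseteq \mathbb{F}_{q^m}^* $ of order $ \ell $ (here I use $ p \nmid \ell $ and $ \ell \mid q-1 $). Two roots $ a, b \in G $ are conjugate precisely when $ b a^{-1} \in G \cap (\mathbb{F}_{q^m}^*)^{q-1} $. So the roots of $ x^\ell - 1 $ are pairwise non-conjugate if and only if $ G \cap (\mathbb{F}_{q^m}^*)^{q-1} = \{ 1 \} $. Since $ G $ has order $ \ell $ and $ (\mathbb{F}_{q^m}^*)^{q-1} $ has order $ (q^m - 1)/(q-1) $, and both are subgroups of the cyclic group $ \mathbb{F}_{q^m}^* $, their intersection is the unique subgroup whose order is $ \gcd\!\big(\ell, (q^m-1)/(q-1)\big) $. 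Hence non-conjugacy is equivalent to
$$ \gcd\!\left( \ell, \frac{q^m - 1}{q - 1} \right) = 1. $$

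The final step is purely arithmetic: I must show that this gcd equals $ 1 $ if and only if $ \gcd(\ell, m) = 1 $, under the standing hypothesis $ \ell \mid q - 1 $. Write $ \frac{q^m-1}{q-1} = 1 + q + q^2 + \cdots + q^{m-1} $. Since $ q \equiv 1 \pmod{\ell} $ (because $ \ell \mid q-1 $), we get $ 1 + q + \cdots + q^{m-1} \equiv m \pmod{\ell} $, so for any prime power $ r^e $ dividing $ \ell $ one has $ \frac{q^m-1}{q-1} \equiv m \pmod{r^e} $ only after a little care — more precisely I would argue prime by prime: for a prime $ r \mid \ell $, we have $ r \mid \frac{q^m-1}{q-1} \iff r \mid m $, since $ \frac{q^m-1}{q-1} \equiv m \pmod{r} $. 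Therefore $ \gcd(\ell, \frac{q^m-1}{q-1}) > 1 $ iff some prime $ r $ divides both $ \ell $ and $ m $, i.e.\ iff $ \gcd(\ell,m) > 1 $. Combining with the previous paragraph gives the equivalence of conditions 1 and 2.

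The main obstacle I anticipate is being careful with the order-of-subgroup arithmetic: one must justify that the intersection of two subgroups of a cyclic group has order equal to the gcd of their orders (true, but worth stating cleanly), and that $ (\mathbb{F}_{q^m}^*)^{q-1} $ really has order $ (q^m-1)/(q-1) $ rather than something smaller — this is exactly where $ (q-1) \mid (q^m-1) $ is used, so that $ \gcd(q-1, q^m-1) = q-1 $. The reduction $ 1 + q + \cdots + q^{m-1} \equiv m \pmod r $ for $ r \mid \ell $ is the only genuinely number-theoretic input and is immediate from $ q \equiv 1 \pmod r $. Everything else is bookkeeping with cyclic groups.
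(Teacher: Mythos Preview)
Your argument is correct. It is essentially the approach the paper itself sketches in the Remark immediately following the lemma: you bypass Hilbert's Theorem~90 and instead identify the conjugacy class of $1$ with the subgroup $(\mathbb{F}_{q^m}^*)^{q-1}$ of order $(q^m-1)/(q-1)$, reduce pairwise non-conjugacy of the $\ell$th roots to $\gcd\bigl(\ell,\tfrac{q^m-1}{q-1}\bigr)=1$, and then use $q\equiv 1\pmod{\ell}$ to show this gcd condition is equivalent to $\gcd(\ell,m)=1$. The paper's own proof takes a slightly different route: it invokes Hilbert~90 to say $a^i$ and $a^j$ are conjugate iff they have the same norm, then computes $N_{\mathbb{F}_{q^m}/\mathbb{F}_q}(a^i)=a^{im}$ directly (since $a\in\mathbb{F}_q$), reducing to $im\equiv jm\pmod{\ell}$ for some $i\neq j$, which happens iff $\gcd(\ell,m)>1$. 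Your version is a bit more self-contained (no citation of Hilbert~90 needed, just orders in cyclic groups), while the paper's version reaches the condition on $m$ one step sooner without passing through $\tfrac{q^m-1}{q-1}$; the two are equivalent and the paper explicitly acknowledges your route in the subsequent Remark.
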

\begin{proof}
By assumption, we have $ \ell $ distinct roots of $ x^\ell - 1 $, all inside $ \mathbb{F}_q $. Let $ a \in \mathbb{F}_q $ be a primitive $ \ell $th root of unity, meaning that all the $ \ell $th roots of unity are given by $ a^0, a^1, \ldots, a^{\ell - 1} $. Such primitive roots always exist \cite[Page 105]{pless} \cite[Sec. 2.4]{lidl} (see also the next subsection). By Hilbert's Theorem 90 \cite[Ex. 2.33]{lidl}, there exist two distinct $ \ell $th roots of unity $ a^i $ and $ a^j $ that are conjugate if, and only if,
$$ a^{im} = (a^i)^{\frac{q^m-1}{q-1}} = (a^j)^{\frac{q^m-1}{q-1}} = a^{jm}, $$ 
where $ 0 \leq i < j \leq \ell - 1 $. Now, since $ a $ is a primitive $ \ell $th root of unity, such a condition may happen if, and only if, $ \ell $ and $ m $ share a common factor. 
\end{proof}

\begin{remark} \label{remark q-1 and (q^m-1)/(q-1)}
Note that, as a particular case of Lemma \ref{lemma non-conj over F_q}, the $ q-1 $ elements in $ \mathbb{F}_q \setminus \{ 0 \} $ are pair-wise non-conjugate with respect to $ \mathbb{F}_{q^m}[z;\sigma] $ if, and only if, $ q-1 $ and $ m $ are coprime (note that the assumptions in Lemma \ref{lemma non-conj over F_q} above are trivially satisfied for $ \ell = q-1 $). 
\end{remark}

\begin{remark}
Applying the definition of conjugacy directly (see Subsection \ref{subsec revisiting lin RS codes}), without making use of Hilbert's Theorem 90, the reader may arrive at the fact that item 1 in Lemma \ref{lemma non-conj over F_q} is also equivalent to $ \ell $ and $ \frac{q^m - 1}{q-1} $ being coprime. Indeed, this latter condition is equivalent to $ \ell $ and $ m $ being coprime, under the assumption that $ \ell $ divides $ q-1 $ (i.e., $ \mathbb{F}_q $ contains all $ \ell $th roots of unity), as in Lemma \ref{lemma non-conj over F_q} above. To see this, note that
\begin{equation*}
\begin{split}
\frac{q^m-1}{q-1} - m & = \sum_{i=1}^{m-1} (q^i-1) \\
& = \left( \sum_{i=1}^{m-1} \frac{q^i-1}{q-1} \right) (q-1)
\end{split}
\end{equation*}
is divisible by $ \ell $ if $ q-1 $ is divisible by $ \ell $. In such a case, a factor of $ \ell $ divides $ \frac{q^m-1}{q-1} $ if, and only if, it divides $ m $. 
\end{remark}

\subsection{Finding CSC Linearized RS Codes} \label{subsec finding CSC lin RS codes}

Combining Definition \ref{def lin RS codes} and Lemma \ref{lemma non-conj over F_q}, we will describe a subfamily of linearized Reed-Solomon codes formed by CSC codes, which in addition recovers classical cyclic (generalized) Reed-Solomon codes and skew-cyclic Gabidulin codes when setting $ m = N = 1 $ and $ \ell = 1 $, respectively.

In this subsection, we assume that: (1) $ N = m $; (2) $ \ell $ and $ m $ are coprime; (3) $ \ell $ and $ q $ are coprime; and (4) $ x^\ell - 1 $ has all of its $ \ell $ distinct roots in $ \mathbb{F}_q $ ($ q = q_0^s $), i.e., $ \ell $ divides $ q-1 $. In the classical cyclic case $ m = 1 $, condition 2 is trivially satisfied, whereas in the skew-cyclic case $ \ell = 1 $, conditions 2, 3 and 4 are all trivially satisfied. Note that condition 2 is satisfied whether $ \ell = 1 $ or $ m = 1 $.

Let $ a \in \mathbb{F}_q $ be a \textit{primitive $ \ell $th root of unity}, meaning that the set of roots of $ x^\ell - 1 $ is given by
$$ \mathcal{A} = \left\lbrace a^0, a^1, a^2, \ldots, a^{\ell - 1} \right\rbrace \subseteq \mathbb{F}_q \setminus \{ 0 \} . $$
Such a primitive root always exists \cite[Page 105]{pless} \cite[Sec. 2.4]{lidl}. Since we are assuming that $ \ell $ is coprime with both $ q $ and $ m $, the set $ \mathcal{A} $ is formed by $ \ell $ distinct and pair-wise non-conjugate elements by Lemma \ref{lemma non-conj over F_q}.

Next, let $ \beta \in \mathbb{F}_{q^m} $ be a \textit{normal element} of the extension $ \mathbb{F}_q \subseteq \mathbb{F}_{q^m} $. In other words,
$$ \left\lbrace \beta, \sigma(\beta), \sigma^2(\beta), \ldots, \sigma^{m-1}(\beta) \right\rbrace \subseteq \mathbb{F}_{q^m} $$
forms a basis of $ \mathbb{F}_{q^m} $ over $ \mathbb{F}_q $. Recall that normal elements exist for any extension of finite fields \cite[Th. 3.73]{lidl}. We fix an integer $ b \geq 0 $. For $ i = 0,1, \ldots, \ell - 1 $, note that the element $ \beta a^{bi} \in \mathbb{F}_{q^m} $ is also normal. Denote the corresponding basis by
\begin{equation}
\mathcal{B}_i = \left\lbrace \beta a^{bi}, \sigma(\beta) a^{bi}, \ldots, \sigma^{m-1}(\beta) a^{bi} \right\rbrace \subseteq \mathbb{F}_{q^m}
\label{eq normal basis for SR-BCH codes}
\end{equation}
(recall that $ a^{bi} \in \mathbb{F}_q $, thus $ \sigma(a^{bi}) = a^{bi} $). We define $ \mathcal{B} = (\mathcal{B}_0, \mathcal{B}_1, \ldots, \mathcal{B}_{\ell - 1}) $.

For $ k = 1,2, \ldots, n $, the corresponding linearized Reed-Solomon code $
\mathcal{C}^\sigma_{k}(\mathcal{A}, \mathcal{B}) \subseteq \mathbb{F}_{q^m}^n $ (Definition \ref{def lin RS codes}) is given by the generator matrix $ D(\mathcal{A}, \mathcal{B}) = (D_0 | D_1 | \ldots | D_{\ell - 1}) \in \mathbb{F}_{q^m}^{k \times n} $, where $ n = \ell m $ and $ D_i \in \mathbb{F}_{q^m}^{k \times m} $ is given by
\begin{equation*}
\scalebox{0.90}{$
\left( \begin{array}{cccc}
 \beta a^{bi} & \sigma(\beta) a^{bi} & \ldots & \sigma^{m-1}(\beta) a^{bi} \\
 \sigma(\beta) a^{(b+1)i} & \sigma^2(\beta) a^{(b+1)i} & \ldots & \beta a^{(b+1)i} \\ 
 \sigma^2(\beta) a^{(b+2)i} & \sigma^3(\beta) a^{(b+2)i} & \ldots & \sigma(\beta) a^{(b+2)i} \\
 \vdots & \vdots & \ddots & \vdots \\
 \sigma^{k-1}(\beta) a^{(b+k-1)i} & \sigma^k(\beta) a^{(b+k-1)i} & \ldots & \sigma^{k-2}(\beta) a^{(b+k-1)i} \\
\end{array} \right),
$}
\end{equation*}
for all $ i = 0, 1, \ldots, \ell - 1 $.

We conclude with the main result of this section, whose proof is left to the reader.

\begin{theorem} \label{th CSC linearized RS codes}
For $ k = 1,2, \ldots, n $, the linearized Reed-Solomon code $
\mathcal{C}^\sigma_{k}(\mathcal{A}, \mathcal{B}) \subseteq \mathbb{F}_{q^m}^n $ as above is a CSC code in $ \mathbb{F}_{q^m}^n $ with field automorphism $ \sigma : \mathbb{F}_{q^m} \longrightarrow \mathbb{F}_{q^m} $ (Definition \ref{def CSC codes}).
\end{theorem}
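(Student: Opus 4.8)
The plan is to argue directly from the explicit generator matrix $ D(\mathcal{A},\mathcal{B}) $, parametrizing each codeword by its message vector and reading off how $ \varphi $ and $ \phi $ act on that message. Since $ \mathcal{C}^\sigma_{k}(\mathcal{A},\mathcal{B}) $ is $ \mathbb{F}_{q^m} $-linear by Definition \ref{def lin RS codes}, by Definition \ref{def CSC codes} (applied over the field $ L = \mathbb{F}_{q^m} $ with the automorphism $ \sigma $) it remains only to show that $ \varphi $ and $ \phi $ map $ \mathcal{C}^\sigma_{k}(\mathcal{A},\mathcal{B}) $ into itself.

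First I would simplify the entries of $ D(\mathcal{A},\mathcal{B}) $ using the special choices. Because $ a_i = a^i \in \mathbb{F}_q $, we have $ \sigma(a_i) = a_i $, hence $ N_j(a_i) = a_i^j = a^{ij} $; combined with $ \beta_{i,h} = \sigma^h(\beta) a^{bi} $ from (\ref{eq normal basis for SR-BCH codes}) and $ \sigma^j(a^{bi}) = a^{bi} $, this gives $ \mathcal{D}_{a_i}^j(\beta_{i,h}) = \sigma^j(\beta_{i,h}) a^{ij} = \sigma^{j+h}(\beta)\, a^{(b+j)i} $. Consequently, the codeword attached to a message $ \mathbf{f} = (f_0,\ldots,f_{k-1}) \in \mathbb{F}_{q^m}^k $ has $ (i,h) $-coordinate $ c^{(i)}_h = \sum_{j=0}^{k-1} f_j\, \sigma^{j+h}(\beta)\, a^{(b+j)i} $, where the powers of $ \sigma $ on $ \beta $ may be taken modulo $ m $ since $ \sigma^m = {\rm Id} $ on $ \mathbb{F}_{q^m} $.

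Next I would compute the two operators on such a $ \mathbf{c} $. Since $ \varphi(\mathbf{c})^{(i)} = \mathbf{c}^{((i-1)\bmod\ell)} $ and $ a^\ell = 1 $, one finds $ \varphi(\mathbf{c})^{(i)}_h = \sum_j f_j\, \sigma^{j+h}(\beta)\, a^{(b+j)(i-1)} = \sum_j \left( f_j a^{-(b+j)} \right) \sigma^{j+h}(\beta)\, a^{(b+j)i} $, so $ \varphi(\mathbf{c}) $ is the codeword of the message $ \left( f_0 a^{-b}, f_1 a^{-(b+1)}, \ldots, f_{k-1} a^{-(b+k-1)} \right) \in \mathbb{F}_{q^m}^k $, hence lies in $ \mathcal{C}^\sigma_{k}(\mathcal{A},\mathcal{B}) $. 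Similarly $ \phi $ acts blockwise by $ \psi $ with $ \psi(\mathbf{c}^{(i)})_h = \sigma\!\left( c^{(i)}_{(h-1)\bmod m} \right) $; using that $ \sigma $ fixes the scalars $ a^{(b+j)i} \in \mathbb{F}_q $ and that $ \sigma^{j+m}(\beta) = \sigma^j(\beta) $ absorbs the wrap-around at $ h = 0 $, one gets $ \phi(\mathbf{c})^{(i)}_h = \sum_j \sigma(f_j)\, \sigma^{j+h}(\beta)\, a^{(b+j)i} $ for all $ h $, i.e.\ $ \phi(\mathbf{c}) $ is the codeword of $ \left( \sigma(f_0), \ldots, \sigma(f_{k-1}) \right) \in \mathbb{F}_{q^m}^k $, hence again lies in $ \mathcal{C}^\sigma_{k}(\mathcal{A},\mathcal{B}) $. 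Together with $ \mathbb{F}_{q^m} $-linearity this proves the theorem.

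These computations are routine; the only delicate point is the index bookkeeping at the two wrap-around positions — invoking $ \sigma^m = {\rm Id} $ at the column transition $ m-1 \to 0 $ and $ a^\ell = 1 $ at the block transition $ \ell-1 \to 0 $ — together with the observation that $ \varphi $ and $ \phi $ merely re-weight the existing message coordinates (by the nonzero scalars $ a^{-(b+j)} $, respectively by $ \sigma(\cdot) $) and hence preserve the degree bound $ j \leq k-1 $; this is exactly what the normal-element and primitive-root-of-unity choices were arranged to guarantee. An alternative, more structural route would be to transport everything through $ \nu $ into the ring $ \mathcal{R} $, where $ \varphi $ and $ \phi $ correspond to left multiplication by $ x $ and $ z $ (Theorem \ref{th algebraic char}), and verify that $ \nu(\mathcal{C}^\sigma_{k}(\mathcal{A},\mathcal{B})) $ is a left ideal; the direct coordinate computation above is shorter.
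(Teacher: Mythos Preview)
Your proof is correct. The paper does not actually give a proof of this theorem --- it states ``whose proof is left to the reader'' --- so there is nothing to compare against beyond noting that your direct coordinate verification is precisely the routine computation the paper is inviting. Your simplification of the entries of $D_i$ to $\sigma^{j+h}(\beta)\,a^{(b+j)i}$ matches the explicit matrix displayed in Subsection~\ref{subsec finding CSC lin RS codes}, and your identification of the new message vectors $\bigl(f_j a^{-(b+j)}\bigr)_j$ and $\bigl(\sigma(f_j)\bigr)_j$ under $\varphi$ and $\phi$ is clean and correct, including the wrap-around checks via $a^\ell=1$ and $\sigma^m={\rm Id}$.
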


\section{Sum-Rank BCH Codes} \label{sec SR BCH codes}

Throughout this section, we will make the same assumptions as in the beginning of Subsection \ref{subsec finding CSC lin RS codes}, that is: (1) $ N = m $; (2) $ \ell $ and $ m $ are coprime; (3) $ \ell $ and $ q $ are coprime; and (4) $ \ell $ divides $ q-1 $, i.e., $ x^\ell - 1 $ has all of its $ \ell $ distinct roots in $ \mathbb{F}_q $ ($ q = q_0^s $).

\subsection{The Sum-Rank BCH Bound Leading to SR-BCH Codes} \label{subsec SR BCH bound}

In this subsection, we provide a lower bound on the minimum sum-rank distance of CSC codes based on their defining set (Definition \ref{def defining set}), which will allow us to define Sum-Rank BCH codes (Definition \ref{def SR-BCH codes}). 

In order to prove our bound, we need the following lemma, which is \cite[Th. 7]{universal-lrc}. Recall from Subsection \ref{subsec sum-rank metric} that we denote by $ {\rm d}_{SR} $ and $ {\rm d}_{SR}^0 $ the sum-rank metrics over the finite-field extensions $ \mathbb{F}_q \subseteq \mathbb{F}_{q^m} $ and $ \mathbb{F}_{q_0} \subseteq \mathbb{F}_{q_0^m} $, respectively, for the length partition $ n = \ell m $ as in (\ref{eq sum-rank length partition}) for $ N = m $. 

\begin{lemma} [\textbf{\cite{universal-lrc}}] \label{lemma min sum-rank distance subextension subcodes}
For a code $ \mathcal{C} \subseteq \mathbb{F}_{q^m}^n $ (we do not assume that it is linear), it holds that
\begin{equation}
{\rm d}_{SR}^0 \left( \mathcal{C} \cap \mathbb{F}^n \right) \geq {\rm d}_{SR}(\mathcal{C}).
\label{eq subext subcode inherits sum-rank distance}
\end{equation}
\end{lemma}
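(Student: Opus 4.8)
The plan is to exploit the fact that restricting to the subfield $\mathbb{F} = \mathbb{F}_{q_0^m}$ can only \emph{decrease} the rank of each block, since a matrix over a subfield has the same rank whether computed over the subfield or over an extension. More precisely, I would first take a nonzero codeword $\mathbf{c} \in \mathcal{C} \cap \mathbb{F}^n$ and subdivide it according to the length partition $n = \ell m$ as $\mathbf{c} = (\mathbf{c}^{(0)}, \ldots, \mathbf{c}^{(\ell-1)})$ with $\mathbf{c}^{(i)} \in \mathbb{F}^m$. The key elementary observation is that for each block index $i$, the entries $c^{(i)}_0, \ldots, c^{(i)}_{m-1}$ all lie in $\mathbb{F} = \mathbb{F}_{q_0^m}$, which is a subfield of $\mathbb{F}_{q^m}$ containing $\mathbb{F}_{q_0}$ and contained in $\mathbb{F}_{q^m}$ (consult the field diagram in Subsection~\ref{subsec finite fields this work}: $\mathbb{F}_{q_0} \subseteq \mathbb{F}_{q_0^m} \subseteq \mathbb{F}_{q^m}$ and $\mathbb{F}_{q_0} \subseteq \mathbb{F}_q \subseteq \mathbb{F}_{q^m}$).

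Next I would compare the two block weights. Using the matrix-rank description of the sum-rank weight from Subsection~\ref{subsec sum-rank metric}, the $i$th block contributes $\mathrm{Rk}_{\mathbb{F}_{q_0}}(C_i^0)$ to $\mathrm{wt}_{SR}^0(\mathbf{c})$, where $C_i^0 \in \mathbb{F}_{q_0}^{m \times m}$ is the matrix of $\mathbf{c}^{(i)}$ expanded in an $\mathbb{F}_{q_0}$-basis of $\mathbb{F}_{q_0^m}$, and it contributes $\mathrm{Rk}_{\mathbb{F}_q}(C_i)$ to $\mathrm{wt}_{SR}(\mathbf{c})$, where $C_i \in \mathbb{F}_q^{?\times m}$ is the matrix of $\mathbf{c}^{(i)}$ expanded in an $\mathbb{F}_q$-basis of $\mathbb{F}_{q^m}$. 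Equivalently, in the span formulation, $\mathrm{wt}_{SR}^0$ uses $\dim_{\mathbb{F}_{q_0}}\langle c^{(i)}_0, \ldots, c^{(i)}_{m-1}\rangle_{\mathbb{F}_{q_0}}$ while $\mathrm{wt}_{SR}$ uses $\dim_{\mathbb{F}_q}\langle c^{(i)}_0, \ldots, c^{(i)}_{m-1}\rangle_{\mathbb{F}_q}$. The claim I need for each $i$ is
$$ \dim_{\mathbb{F}_q}\left\langle c^{(i)}_0, \ldots, c^{(i)}_{m-1}\right\rangle_{\mathbb{F}_q} \;\leq\; \dim_{\mathbb{F}_{q_0}}\left\langle c^{(i)}_0, \ldots, c^{(i)}_{m-1}\right\rangle_{\mathbb{F}_{q_0}}. $$
This holds because an $\mathbb{F}_{q_0}$-linear dependence among the $c^{(i)}_j$ (with coefficients in $\mathbb{F}_{q_0} \subseteq \mathbb{F}_q$) is in particular an $\mathbb{F}_q$-linear dependence, so the $\mathbb{F}_q$-rank cannot exceed the $\mathbb{F}_{q_0}$-rank; said differently, if $\{c^{(i)}_{j_1}, \ldots, c^{(i)}_{j_r}\}$ is $\mathbb{F}_q$-independent then it is $\mathbb{F}_{q_0}$-independent. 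Summing this inequality over $i = 0, 1, \ldots, \ell-1$ gives $\mathrm{wt}_{SR}(\mathbf{c}) \leq \mathrm{wt}_{SR}^0(\mathbf{c})$ for every $\mathbf{c} \in \mathbb{F}^n$.

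Finally I would assemble the distance inequality. For any two distinct codewords $\mathbf{c}, \mathbf{d} \in \mathcal{C} \cap \mathbb{F}^n$, the difference $\mathbf{c} - \mathbf{d}$ is a nonzero element of $\mathbb{F}^n$, so by the block-weight inequality $\mathrm{d}_{SR}(\mathbf{c}, \mathbf{d}) = \mathrm{wt}_{SR}(\mathbf{c} - \mathbf{d}) \leq \mathrm{wt}_{SR}^0(\mathbf{c} - \mathbf{d}) = \mathrm{d}_{SR}^0(\mathbf{c}, \mathbf{d})$. Note $\mathbf{c} - \mathbf{d}$ is also a codeword of $\mathcal{C}$ (since $\mathcal{C}$, as a linearized Reed-Solomon code, is $\mathbb{F}_{q^m}$-linear — though strictly only $\mathbf{c}, \mathbf{d} \in \mathcal{C}$ and the difference being in $L^n$ is enough to bound $\mathrm{d}_{SR}(\mathcal{C})$ from above), so $\mathrm{d}_{SR}(\mathcal{C}) \leq \mathrm{d}_{SR}(\mathbf{c}, \mathbf{d})$. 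Taking the minimum of the right-hand side over all such pairs in $\mathcal{C} \cap \mathbb{F}^n$ yields $\mathrm{d}_{SR}(\mathcal{C}) \leq \mathrm{d}_{SR}^0(\mathcal{C} \cap \mathbb{F}^n)$, which is (\ref{eq subext subcode inherits sum-rank distance}). I expect the only subtle point — and the main thing to state carefully — is the direction of the rank inequality under field restriction and the bookkeeping with which pair of fields (the ``inner'' pair $(\mathbb{F}_{q_0}, \mathbb{F}_{q_0^m})$ versus the ``outer'' pair $(\mathbb{F}_q, \mathbb{F}_{q^m})$) governs which weight; everything else is routine. Since this is exactly \cite[Th.~7]{universal-lrc}, I would also simply cite that reference.
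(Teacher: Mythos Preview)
The paper does not actually prove this lemma; it simply cites it as \cite[Th.~7]{universal-lrc} and moves on. Your argument is correct and is essentially the standard one: for entries lying in $\mathbb{F}=\mathbb{F}_{q_0^m}$, any $\mathbb{F}_{q_0}$-linear dependence is an $\mathbb{F}_q$-linear dependence (since $\mathbb{F}_{q_0}\subseteq\mathbb{F}_q$), hence each block satisfies $\dim_{\mathbb{F}_q}\langle\cdot\rangle_{\mathbb{F}_q}\le\dim_{\mathbb{F}_{q_0}}\langle\cdot\rangle_{\mathbb{F}_{q_0}}$, summing gives ${\rm wt}_{SR}(\mathbf{c})\le{\rm wt}_{SR}^0(\mathbf{c})$ for all $\mathbf{c}\in\mathbb{F}^n$, and then the distance inequality follows by taking pairs in $\mathcal{C}\cap\mathbb{F}^n\subseteq\mathcal{C}$.

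One small clean-up: drop the aside about $\mathcal{C}$ being a linearized Reed--Solomon code and hence linear. The lemma is stated for an arbitrary (possibly nonlinear) code, and your proof does not use linearity of $\mathcal{C}$ anywhere --- you only need that $\mathbf{c},\mathbf{d}\in\mathcal{C}$ gives ${\rm d}_{SR}(\mathcal{C})\le{\rm d}_{SR}(\mathbf{c},\mathbf{d})$ and that $\mathbf{c}-\mathbf{d}\in\mathbb{F}^n$, both of which are immediate. You already note this in your parenthetical, so just delete the distracting remark.
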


\begin{theorem} [\textbf{Sum-Rank BCH bound}] \label{th SR-BCH bound}
Let $ a \in \mathbb{F}_q $ be a primitive $ \ell $th root of unity and let $ \beta \in \mathbb{F}_{q^m} $ be a normal element of the extension $ \mathbb{F}_q \subseteq \mathbb{F}_{q^m} $ (see Subsection \ref{subsec finding CSC lin RS codes}). Let $ \mathcal{C} \subseteq \mathbb{F}^n $ be a CSC code. If the defining set $ T_\mathcal{C} $ of $ \mathcal{C} $ (Definition \ref{def defining set}) contains the consecutive pairs in $ \mathbb{F}_q \times ( \mathbb{F}_{q^m} \setminus \{ 0 \} ) $,
\begin{equation}
\left( a^b, \beta \right), \left( a^{b+1}, \sigma(\beta) \right), \ldots, \left( a^{b + \delta - 2}, \sigma^{\delta - 2}(\beta) \right),
\label{eq pairs for SR-BCH codes}
\end{equation}
for integers $ b \geq 0 $ and $ 2 \leq \delta \leq n $, then it holds that
$$ {\rm d}_{SR}^0 (\mathcal{C}) \geq \delta . $$
\end{theorem}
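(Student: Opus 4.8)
The plan is to deduce the bound by relating $\mathcal{C}$ to one of the cyclic-skew-cyclic linearized Reed-Solomon codes constructed in Theorem~\ref{th CSC linearized RS codes} and then invoking Lemma~\ref{lemma min sum-rank distance subextension subcodes}. Concretely, fix the primitive $\ell$th root of unity $a$ and the normal element $\beta$, and form the linearized Reed-Solomon code $\mathcal{C}' = \mathcal{C}^\sigma_{k}(\mathcal{A},\mathcal{B}) \subseteq \mathbb{F}_{q^m}^n$ of dimension $k = n - \delta + 1$, using $\mathcal{A} = \{a^0,\dots,a^{\ell-1}\}$ and the normal bases $\mathcal{B}_i$ of~(\ref{eq normal basis for SR-BCH codes}) (with the same shift parameter $b$ appearing in~(\ref{eq pairs for SR-BCH codes})). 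By Proposition~\ref{prop linRS are MSRD}, $\mathrm{d}_{SR}(\mathcal{C}') = n - k + 1 = \delta$. The goal is then to show $\mathcal{C} \subseteq \mathcal{C}' \cap \mathbb{F}^n$, because then Lemma~\ref{lemma min sum-rank distance subextension subcodes} gives $\mathrm{d}_{SR}^0(\mathcal{C}) \geq \mathrm{d}_{SR}^0(\mathcal{C}' \cap \mathbb{F}^n) \geq \mathrm{d}_{SR}(\mathcal{C}') = \delta$, which is exactly the claim (note $\mathcal{C} \subseteq \mathbb{F}^n$ already, so $\mathcal{C} = \mathcal{C} \cap \mathbb{F}^n \subseteq \mathcal{C}' \cap \mathbb{F}^n$).

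The heart of the argument is therefore the inclusion $\mathcal{C} \subseteq \mathcal{C}'$ inside $\mathbb{F}_{q^m}^n$. First I would extend scalars: set $\widetilde{\mathcal{C}} = \mathcal{C} \otimes \mathbb{F}_{q^m}$, the $\mathbb{F}_{q^m}$-linear CSC code in $\mathbb{F}_{q^m}^n$ spanned by $\mathcal{C}$; since $\mathcal{C} \subseteq \widetilde{\mathcal{C}}$, it suffices to show $\widetilde{\mathcal{C}} \subseteq \mathcal{C}'$. Using the coordinate isomorphism $\nu$ and the evaluation theory of Section~\ref{sec roots}, a codeword $c(x,z) \in \mathcal{R}$ (now over $\mathbb{F}_{q^m}$) lies in $\widetilde{\mathcal{C}}$ only if $c(a^{b+j}, 1^{\sigma^j(\beta)}) = 0$ for $j = 0,1,\dots,\delta-2$, by the definition of the defining set and Theorem~\ref{th defining set}. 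On the other hand, I would identify the parity-check conditions cutting out $\mathcal{C}'$: a standard computation with the generator matrix $D(\mathcal{A},\mathcal{B})$ of a linearized Reed-Solomon code shows that membership in $\mathcal{C}'$ is equivalent to a system of $\delta-1$ linear equations which, after translating through $\nu$ and formula~(\ref{eq evaluation equation for diff roots}), are precisely the equations $c(a^{b+j}, 1^{\sigma^j(\beta)}) = 0$ for $j = 0,\dots,\delta-2$. Matching these two descriptions gives $\widetilde{\mathcal{C}} \subseteq \mathcal{C}'$.

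The main obstacle I anticipate is making the last matching precise, i.e., showing that the "dual/parity-check" description of the linearized Reed-Solomon code $\mathcal{C}^\sigma_{n-\delta+1}(\mathcal{A},\mathcal{B})$ really is the vanishing of the total evaluation maps $\mathrm{Ev}_{a^{b+j},\sigma^j(\beta)}$ on the skew polynomial $c(x,z)$. This requires: (i) unwinding how the $i$th block of $D(\mathcal{A},\mathcal{B})$ pairs against the $i$th sub-block $\mathbf{c}^{(i)}$ of a codeword, (ii) recognizing the linear-operator entries $\mathcal{D}_{a^i}^r(\sigma^j(\beta)a^{bi})$ in terms of conjugates $1^{\sigma^j(\beta)}$ and powers $a^{(b+j)i}$, and (iii) checking that the resulting expression is exactly the arithmetic/linearized evaluation of $c(x,z)$ at the pair $(a^{b+j}, \sigma^j(\beta))$, which is where Lemma~\ref{lemma connection arith eval lin eval} and the product rule (Lemma~\ref{lemma product rule}) do the work. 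A clean way to organize this is to note that the rows of $D(\mathcal{A},\mathcal{B})$ are precisely the sum-rank evaluation vectors of the skew monomials $1, z, z^2, \ldots, z^{k-1}$ at the data $(\mathcal{A},\mathcal{B})$, so orthogonality of $c$ to the appropriate shifted evaluation data amounts to the vanishing of $c(a^{b+j}, 1^{\sigma^j(\beta)})$. Once this dictionary is in place the inclusion, and hence the bound, follows; the dimension count $k = n-\delta+1$ is forced so that exactly the $\delta-1$ prescribed pairs appear.
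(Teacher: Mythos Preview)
Your overall architecture matches the paper's: embed $\mathcal{C}$ in a big MSRD code over $\mathbb{F}_{q^m}$ via the evaluation conditions, then apply Lemma~\ref{lemma min sum-rank distance subextension subcodes}. The computation you sketch in the last paragraph is also correct and is exactly what the paper uses: the condition $c(a^{b+j},1^{\sigma^j(\beta)})=0$ is, via (\ref{eq evaluation equation for diff roots}), the orthogonality of $\mathbf{c}$ to the vector whose $(i,h)$-entry is $\sigma^{h+j}(\beta)\,a^{(b+j)i}$, and that vector is precisely row $j$ of $D(\mathcal{A},\mathcal{B})$.

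The gap is in the identification of the containing code. Orthogonality to rows $0,\ldots,\delta-2$ of $D(\mathcal{A},\mathcal{B})$ means $\mathbf{c}\in\mathcal{C}^\sigma_{\delta-1}(\mathcal{A},\mathcal{B})^\perp$, \emph{not} $\mathbf{c}\in\mathcal{C}^\sigma_{n-\delta+1}(\mathcal{A},\mathcal{B})$. These two $(n-\delta+1)$-dimensional codes are different in general: linearized Reed--Solomon codes are not nested with their duals for the same data $(\mathcal{A},\mathcal{B})$ (Proposition~\ref{prop duals of CSC lin RS codes} shows the dual is $\mathcal{C}^\sigma_{n-\delta+1}(\mathcal{A},\mathcal{B}')$ for a \emph{different} $\mathcal{B}'$). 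So your ``standard computation'' does not produce the parity-check equations of your $\mathcal{C}'$; it produces a parity-check description of $\mathcal{C}^\sigma_{\delta-1}(\mathcal{A},\mathcal{B})^\perp$. The paper's proof takes exactly this route: it concludes $\mathcal{C}\subseteq\mathcal{C}^\sigma_{\delta-1}(\mathcal{A},\mathcal{B})^\perp\cap\mathbb{F}^n$, then invokes the separate fact (cited from \cite{gsrws}) that duals of MSRD codes are MSRD, giving ${\rm d}_{SR}\bigl(\mathcal{C}^\sigma_{\delta-1}(\mathcal{A},\mathcal{B})^\perp\bigr)=\delta$, and finishes with Lemma~\ref{lemma min sum-rank distance subextension subcodes}. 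Your argument is one correction away from this: replace $\mathcal{C}'=\mathcal{C}^\sigma_{n-\delta+1}(\mathcal{A},\mathcal{B})$ by $\mathcal{C}^\sigma_{\delta-1}(\mathcal{A},\mathcal{B})^\perp$ and add the ``dual of MSRD is MSRD'' step, which Proposition~\ref{prop linRS are MSRD} alone does not give you.
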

\begin{proof}
If $ \mathcal{C} $ satisfies the hypothesis, then by Theorem \ref{th defining set} and equation (\ref{eq evaluation equation for diff roots}), it holds that 
\begin{equation}
\mathcal{C} \subseteq \left( \mathcal{C}^\sigma_{\delta - 1}(\mathcal{A}, \mathcal{B})^\perp \right) \cap \mathbb{F}^n ,
\label{eq proof BCH bound 1}
\end{equation}
where $ \mathcal{C}^\sigma_{\delta - 1}(\mathcal{A}, \mathcal{B}) $ is the $ (\delta - 1) $-dimensional linearized Reed-Solomon code (Definition \ref{def lin RS codes}) with $ \mathcal{A} = \{  1, a, a^2, $ $ \ldots, $ $ a^{\ell - 1} \} $ and $ \mathcal{B} = (\mathcal{B}_0, \mathcal{B}_1, \ldots, \mathcal{B}_{\ell-1}) $ given by (\ref{eq normal basis for SR-BCH codes}). 

By \cite[Th. 5]{gsrws}, the dual $ \mathcal{C}^\sigma_{\delta - 1}(\mathcal{A}, \mathcal{B})^\perp \subseteq \mathbb{F}_{q^m}^n $ is also an MSRD code, hence
\begin{equation}
{\rm d}_{SR} \left( \mathcal{C}^\sigma_{\delta - 1}(\mathcal{A}, \mathcal{B})^\perp \right) = n - (n - \delta + 1) + 1 = \delta.
\label{eq proof BCH bound 2}
\end{equation}

Combining (\ref{eq proof BCH bound 1}), (\ref{eq proof BCH bound 2}) and Lemma \ref{lemma min sum-rank distance subextension subcodes}, we conclude that $ {\rm d}_{SR}^0 (\mathcal{C}) \geq \delta $.
\end{proof}

This bound recovers the well-known BCH bound \cite[Sec. 7.6, Th. 8]{macwilliamsbook} \cite[Th. 4.5.3]{pless} (originally \cite{bose, hocquenghem}) in the classical cyclic case $ m = 1 $ and its rank-metric version \cite[Prop. 1]{skewcyclic3} in the skew-cyclic case $ \ell = 1 $. Note that \cite[Prop. 1]{skewcyclic3} is given for lengths $ N \geq m $, whereas here we only consider $ N = m $.

We may now define Sum-Rank BCH codes with prescribed distance.

\begin{definition} [\textbf{Sum-Rank BCH codes}] \label{def SR-BCH codes}
Let $ a \in \mathbb{F}_q $ be a primitive $ \ell $th root of unity and let $ \beta \in \mathbb{F}_{q^m} $ be a normal element of the finite-field extension $ \mathbb{F}_q \subseteq \mathbb{F}_{q^m} $ (see Subsection \ref{subsec finding CSC lin RS codes}). Fix integers $ b \geq 0 $ and $ 2 \leq \delta \leq n $. We define the corresponding Sum-Rank BCH code (or SR-BCH code for short) with prescribed distance $ \delta $ as
$$ \mathcal{C}_\delta(a^b,\beta) = \left( \mathcal{C}^\sigma_{\delta - 1}(\mathcal{A}, \mathcal{B})^\perp \right) \cap \mathbb{F}^n , $$
where $ \mathcal{C}^\sigma_{\delta - 1}(\mathcal{A}, \mathcal{B}) \subseteq \mathbb{F}_{q^m}^n $ is as in Definition \ref{def lin RS codes}, for $ \mathcal{A} = \{  1, a, $ $ a^2, $ $ \ldots, $ $ a^{\ell - 1} \} $ and $ \mathcal{B} $ $ = $ $ (\mathcal{B}_0, \mathcal{B}_1, \ldots, \mathcal{B}_{\ell-1}) $ given by $ \mathcal{B}_i = \{ \beta a^{bi}, \sigma(\beta) a^{bi}, \ldots, \sigma^{m-1}(\beta) a^{bi} \} $, for $ i = 0,1, \ldots, \ell -1 $.
\end{definition}

SR-BCH codes recover classical BCH codes \cite[Sec. 7.6]{macwilliamsbook} \cite[Sec. 5.1]{pless} when $ m = 1 $ (in that case, $ \mathcal{C}_\delta(a^b,\beta) = \mathcal{C}_\delta(a^b,1) $ for any $ \beta \in \mathbb{F}_{q^m} \setminus \{ 0 \} $), and they recover rank-metric BCH codes \cite[Page 272]{skewcyclic3} \cite[Def. 7]{ontheroots} when $ \ell = 1 $ for the code length $ N = m $. The latter are full-length skew-cyclic Gabidulin codes \cite[Sec. 5.2]{ontheroots} if $ s = 1 $. Moreover, SR-BCH codes form a subfamily of sum-rank alternant codes \cite[Def. 12]{universal-lrc}, which in turn recover classical alternant codes \cite[Ch. 12]{macwilliamsbook} when $ m = 1 $. 

The motivation for defining SR-BCH codes as in Definition \ref{def SR-BCH codes} is to obtain the largest CSC code (hence hopefully having maximum possible code size) for a prescribed distance in view of Theorem \ref{th SR-BCH bound}. This is shown in the following result.

\begin{proposition} \label{prop SR-BCH are largest containing pairs}
With assumptions and notation as in Definition \ref{def SR-BCH codes}, the SR-BCH code $ \mathcal{C}_\delta(a^b,\beta) \subseteq \mathbb{F}^n $ is a CSC code. Moreover, it is the largest CSC code in $ \mathbb{F}^n $, with respect to set inclusion, whose defining set contains the pairs in (\ref{eq pairs for SR-BCH codes}).
\end{proposition}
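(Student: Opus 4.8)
The plan is to verify the two claims separately: first that $\mathcal{C}_\delta(a^b,\beta)$ is a CSC code, and second that it is maximal among CSC codes whose defining set contains the pairs in (\ref{eq pairs for SR-BCH codes}). For the first claim, I would use Theorem \ref{th CSC linearized RS codes}: the linearized Reed-Solomon code $\mathcal{C}^\sigma_{\delta-1}(\mathcal{A},\mathcal{B})$ with $\mathcal{A}$ and $\mathcal{B}$ as in Definition \ref{def SR-BCH codes} is precisely of the form considered there (primitive $\ell$th root of unity $a$, normal element $\beta$, shift parameter $b$), hence it is a CSC code over $\mathbb{F}_{q^m}$ with automorphism $\sigma$. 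The operators $\varphi$ and $\phi$ preserve the standard inner product up to the automorphism $\sigma$ and a permutation, so the dual of a CSC code is again a CSC code; alternatively one can argue that $\varphi$ and $\phi$ are, up to sign/conjugation, their own adjoints. Then intersecting with $\mathbb{F}^n = \mathbb{F}_{q_0^m}^n$ preserves the CSC property, since $\varphi$ and $\phi$ restrict to the subfield (the only subtlety is that $\sigma$ restricts to $\mathbb{F}_{q_0^m}$, which it does as $\sigma(a)=a^q$ and $q$ is a power of $q_0$). Thus $\mathcal{C}_\delta(a^b,\beta)$ is a CSC code.

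For the maximality claim, suppose $\mathcal{D}\subseteq\mathbb{F}^n$ is any CSC code whose defining set $T_\mathcal{D}$ contains the pairs $(a^{b+j},\sigma^j(\beta))$ for $j=0,1,\ldots,\delta-2$. By Theorem \ref{th defining set}, item 1, every codeword $c(x,z)\in\mathcal{D}$ satisfies $c(a^{b+j},1^{\sigma^j(\beta)})=0$ for all such $j$. Using (\ref{eq evaluation equation for diff roots}), this says exactly that the coordinate vector of $c$ is orthogonal (over $\mathbb{F}_{q^m}$) to the $j$th row of the generator matrix $D(\mathcal{A},\mathcal{B})$ of $\mathcal{C}^\sigma_{\delta-1}(\mathcal{A},\mathcal{B})$, for $j=0,\ldots,\delta-2$ — this is the same computation that appears in the proof of Theorem \ref{th SR-BCH bound} leading to (\ref{eq proof BCH bound 1}). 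Hence $c\in\mathcal{C}^\sigma_{\delta-1}(\mathcal{A},\mathcal{B})^\perp$, and since also $c\in\mathbb{F}^n$, we get $c\in\mathcal{C}_\delta(a^b,\beta)$. Therefore $\mathcal{D}\subseteq\mathcal{C}_\delta(a^b,\beta)$, which together with the fact that $\mathcal{C}_\delta(a^b,\beta)$ itself contains the pairs (again by (\ref{eq evaluation equation for diff roots}), its defining set contains (\ref{eq pairs for SR-BCH codes})) establishes maximality.

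The only genuine point requiring care — the ``hard part'' — is bookkeeping the correspondence between the condition $c(a^{b+j},1^{\sigma^j(\beta)})=0$ and orthogonality to the $j$th row of $D(\mathcal{A},\mathcal{B})$: one must check that evaluating a skew polynomial at the conjugate $1^{\sigma^j(\beta)}=\sigma^{j+1}(\beta)\sigma^j(\beta)^{-1}$, via the total evaluation map of Definition \ref{def total evaluation map} and formula (\ref{eq evaluation equation for diff roots}), reproduces precisely the pairing of the coefficient vector with the row $\big(\mathcal{D}_{a^i}^{\,b+j}(\beta a^{bi})\big)_i$ appearing in the matrix $D_i$ of Subsection \ref{subsec finding CSC lin RS codes}, using $N_i(a)=a^i$ for $a\in\mathbb{F}_q$ and the normal-basis structure of $\mathcal{B}_i$. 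This is exactly the identification already used implicitly in the proof of Theorem \ref{th SR-BCH bound}, so here it suffices to invoke that argument. Everything else is a direct application of Theorems \ref{th CSC linearized RS codes} and \ref{th defining set}.
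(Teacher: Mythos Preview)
Your proposal is correct and follows essentially the same route as the paper's proof: invoke Theorem \ref{th CSC linearized RS codes} for the CSC property of $\mathcal{C}^\sigma_{\delta-1}(\mathcal{A},\mathcal{B})$, observe that taking duals and restricting to $\mathbb{F}^n$ preserve the CSC property, and then appeal to the computation (\ref{eq proof BCH bound 1}) from the proof of Theorem \ref{th SR-BCH bound} for maximality. The paper is even terser than you are on both steps, leaving the dual/restriction check to the reader and simply citing the proof of Theorem \ref{th SR-BCH bound} for the inclusion $\mathcal{D}\subseteq\mathcal{C}_\delta(a^b,\beta)$.
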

\begin{proof}
First, the linearized Reed-Solomon code $ \mathcal{C}^\sigma_{\delta - 1}(\mathcal{A}, \mathcal{B}) $ is a CSC code by Theorem \ref{th CSC linearized RS codes}. The reader may check that its dual code and the restriction of such a dual code to $ \mathbb{F}^n $ are also CSC codes. Thus $ \mathcal{C}_\delta(a^b,\beta) $ is a CSC code. 

Finally, if $ \mathcal{C} \subseteq \mathbb{F}^n $ is another CSC code whose defining set contains the pairs in (\ref{eq pairs for SR-BCH codes}), then it holds that $ \mathcal{C} \subseteq \mathcal{C}_\delta(a^b,\beta) $ by the proof of Theorem \ref{th SR-BCH bound}, and we are done.
\end{proof}

\subsection{The Defining Set of a SR-BCH Code} \label{subsec SR BCH codes}

In this subsection, we will give a method for finding the defining set of a SR-BCH code directly from the pairs in (\ref{eq pairs for SR-BCH codes}), without explicitly computing its minimal generator skew polynomial (Theorem \ref{th SR-BCH structure}). Thanks to Theorem \ref{th SR-BCH structure}, we will give, in the following subsection, a lower bound on the dimension of SR-BCH codes that is easy to compute from the pairs (\ref{eq pairs for SR-BCH codes}).

We will need the notion of minimal linearized polynomial, which we consider in a slightly more general form than in \cite[Sec. 3.4]{lidl}.

\begin{definition} \label{def minimum linearized polynomial}
Given an extension $ K \subseteq L $ of finite fields of characteristic $ p $ and an arbitrary set $ \mathcal{B} \subseteq L $, we say that $ G(y) \in \mathcal{L}_q K [y] $ (Definition \ref{def linearized pols}) is the minimal linearized polynomial of $ \mathcal{B} $ in $ \mathcal{L}_q K [y] $ if it is the linearized polynomial of minimum degree in $ \mathcal{L}_q K [y] $ such that it is monic and $ G(\beta) = 0 $, for all $ \beta \in \mathcal{B} $.
\end{definition}

The minimal linearized polynomial of a given set exists for any finite-field extension of characteristic $ p $ (to prove this, consider, e.g., $ y^{p^e} - y \in \mathcal{L}_q K[y] $, where $ e = \log_p(q) \log_p|L| $). Its uniqueness follows from the next lemma, whose proof follows the same lines as \cite[Th. 3.68]{lidl} and is left to the reader.

\begin{lemma} \label{lemma minimal lin pol divides on the right}
With notation as in Definition \ref{def minimum linearized polynomial}, let $ F(y), G(y) \in \mathcal{L}_q K[y] $ be such that $ F(\beta) = 0 $, for all $ \beta \in \mathcal{B} $, and $ G(y) $ is the minimal linearized polynomial of $ \mathcal{B} $ in $ \mathcal{L}_q K[y] $. If $ F(y) $ and $ G(y) $ are the associated linearized polynomials of $ f(z),g(z) \in K[z;\sigma] $, respectively (see Definition \ref{def linearized pols}), then $ g(z) $ divides $ f(z) $ on the right in $ K[z;\sigma] $.
\end{lemma}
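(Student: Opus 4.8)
The plan is to mimic the classical proof of uniqueness of the minimal polynomial (the division-algorithm argument), but carried out in the ring $\mathcal{L}_q K[y]$ of linearized polynomials over $K$, using the ring isomorphism between $\mathcal{L}_q K[y]$ and the skew polynomial ring $K[z;\sigma]$ recorded after Definition \ref{def linearized pols}. First I would transport the problem: since $f(z),g(z)\in K[z;\sigma]$ have associated linearized polynomials $F(y),G(y)$, and this correspondence is a ring isomorphism sending composition of linearized maps to skew-polynomial multiplication, it suffices to show $g(z)$ divides $f(z)$ on the right in $K[z;\sigma]$, which is equivalent to showing $G(y)$ is right-divisible by $G(y)$... rather, that $F(y)$ lies in the left ideal generated by $G(y)$ under composition.

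Next I would invoke right Euclidean division in $K[z;\sigma]$ (valid since $K$ is a field, as stated in Subsection \ref{subsec skew pols}): write $f(z) = q(z) g(z) + r(z)$ with $r(z) = 0$ or $\deg_z(r) < \deg_z(g)$. Translating back through the isomorphism, this reads $F(y) = Q(G(y)) + R(y)$ where $Q,R\in\mathcal{L}_q K[y]$, $R = F - Q\circ G$, and either $R = 0$ or $\deg R < \deg G$ (degrees here meaning the $q$-degree, i.e., the largest $i$ with $y^{q^i}$ appearing). The key step is then to evaluate at an arbitrary $\beta\in\mathcal{B}$: since $F(\beta) = 0$ by hypothesis and $G(\beta) = 0$ because $G$ is the minimal linearized polynomial of $\mathcal{B}$, and since composition evaluates as $(Q\circ G)(\beta) = Q(G(\beta)) = Q(0) = 0$ (linearized polynomials have no constant term, so $Q(0)=0$), we get $R(\beta) = F(\beta) - Q(G(\beta)) = 0$ for every $\beta\in\mathcal{B}$. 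Thus $R(y)$ is a linearized polynomial in $\mathcal{L}_q K[y]$ vanishing on all of $\mathcal{B}$; if $R\neq 0$, after normalizing it to be monic (dividing by its leading coefficient, which stays in $K$) it would be a monic linearized polynomial vanishing on $\mathcal{B}$ of $q$-degree strictly less than that of $G$, contradicting the minimality of $G$. Hence $R = 0$, so $F(y) = Q(G(y))$, which back in $K[z;\sigma]$ says exactly that $g(z)$ divides $f(z)$ on the right.

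The main obstacle — really the only point requiring care — is keeping the two parallel languages (linearized polynomials under composition versus skew polynomials under multiplication) aligned, in particular making sure that right Euclidean division in $K[z;\sigma]$ corresponds to the correct ``side'' of composition in $\mathcal{L}_q K[y]$, and that the notion of degree matches up ($\deg_z$ of a skew polynomial equals the $q$-degree of its associated linearized polynomial). One could equally well run the entire argument directly in $K[z;\sigma]$: take the right Euclidean division $f = qg + r$, note that $\mathrm{Ev}^\sigma_\beta$-type evaluation (equivalently, evaluating the associated linearized polynomial at $\beta$) is compatible with the product rule so that $r$ vanishes wherever both $f$ and $g$ do, and conclude $r=0$ by minimality. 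Since the paper has already set up the dictionary between $K[z;\sigma]$ and $\mathcal{L}_q K[y]$ and cited \cite[Th. 3.68]{lidl} as the model, the argument is routine and the write-up can be left to the reader as the statement indicates; the essential content is the evaluation step $R(\beta)=0$ for all $\beta\in\mathcal{B}$ together with the minimality of the $q$-degree of $G$.
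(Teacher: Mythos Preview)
Your proposal is correct and follows exactly the approach the paper indicates: the paper's own ``proof'' merely says the argument follows the same lines as \cite[Th.~3.68]{lidl} and is left to the reader, and you have faithfully written out that standard division-with-remainder argument in $K[z;\sigma]$ (equivalently, in $\mathcal{L}_q K[y]$ under composition), using minimality of $G$ to force the remainder to vanish. Your care about the side of the division and the matching of degrees is appropriate and correct.
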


We will also need the following three auxiliary lemmas. 

\begin{lemma} \label{lemma largest CSC code for given roots}
Let $ x^\ell - 1 = m_1(x) m_2(x) \cdots m_t(x) $ be the irreducible decomposition of $ x^\ell - 1 $ in $ \mathbb{F}[x] $, and choose $ a_i \in \mathbb{F}_q $ such that $ m_i(a_i) = 0 $, for $ i = 1,2, \ldots, t $. For each $ i = 1,2, \ldots, t $, let $ \mathcal{B}_i = \{ \beta_{i,1}, \beta_{i,2}, \ldots, \beta_{i,k_i} \} \subseteq \mathbb{F}_{q^m} \setminus \{ 0 \} $ be a set that does not need to be linearly independent over $ \mathbb{F}_q $. 

Let $ G_i(y) $ be the minimal linearized polynomial of $ \mathcal{B}_i $ in $ \mathcal{L}_q \mathbb{F}_{q_0^{md_i}}[y] $ (Definition \ref{def minimum linearized polynomial}), and assume that it is the associated linearized polynomial of $ g_i(z) \in \mathbb{F}_{q_0^{md_i}}[z;\sigma] $ (Definition \ref{def linearized pols}), for $ i = 1,2, \ldots, t $. Finally, let $ \widetilde{g}_i(x,z) \in \mathcal{S} [z;\sigma] $ be such that its projection onto $ \mathcal{S}_i [z;\sigma] $ is $ g_i(x,z) = {\rm Ev}_{a_i,z}^{-1} (g_i(z)) $, where $ {\rm Ev}_{a_i,z} : \mathcal{S}_i [z;\sigma] \longrightarrow \mathbb{F}_{q_0^{md_i}}[z ; \sigma] $ is the ring isomorphism from (\ref{eq partial eval map as ring isomorphism no quotient}), for $ i = 1,2, \ldots, t $. Then the skew polynomial
\begin{equation}
g(x,z) = \sum_{i=1}^t e_i(x) \widetilde{g}_i(x,z) \in \mathcal{S}[z; \sigma]
\label{eq g(x,z) for BCH}
\end{equation}
is the minimal generator skew polynomial of the largest CSC code $ \mathcal{C} \subseteq \mathcal{R}$, with respect to set inclusion, whose defining set $ T_\mathcal{C} $ contains the pairs
\begin{equation}
\left( a_i, \beta_{i,j} \right) \in \mathbb{F}_q \times ( \mathbb{F}_{q^m} \setminus \{ 0 \} ), 
\label{eq pairs for not necessarily BCH}
\end{equation}
for $ j = 1,2, \ldots, k_i $, for $ i = 1,2, \ldots, t $.
\end{lemma}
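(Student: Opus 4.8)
The plan is to construct $\mathcal{C}$ explicitly from the prescribed data through the Chinese-remainder decomposition $\rho\colon\mathcal{R}\to\mathcal{R}_1\times\cdots\times\mathcal{R}_t$ of Lemma \ref{lemma chinese remainder theorem}, to read off its minimal generator skew polynomial from Theorem \ref{th single generator}, and then to verify the two asserted properties separately: that $T_\mathcal{C}$ contains the pairs (\ref{eq pairs for not necessarily BCH}), and that $\mathcal{C}$ is the largest CSC code with that property. A preliminary observation makes the construction legitimate: each $g_i(z)$ right-divides $z^m-1$ in $\mathbb{F}_{q_0^{md_i}}[z;\sigma]$. Indeed $y^{q^m}-y$ is a monic linearized polynomial in $\mathcal{L}_q \mathbb{F}_{q_0^{md_i}}[y]$ that vanishes on $\mathcal{B}_i\subseteq\mathbb{F}_{q^m}\setminus\{0\}$, and it is the associated linearized polynomial of $z^m-1$; so Lemma \ref{lemma minimal lin pol divides on the right}, applied with $F(y)=y^{q^m}-y$ and $G(y)=G_i(y)$, shows that $g_i(z)$ divides $z^m-1$ on the right. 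In particular $\deg_z(g_i(z))\le m$, and being monic in $z$, $g_i(z)$ is the minimal generator skew polynomial of the skew-cyclic code $(g_i(z))\subseteq\mathbb{F}_{q_0^{md_i}}[z;\sigma]/(z^m-1)$.

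First I would set $\mathcal{C}^{(i)}={\rm Ev}_{a_i,z}^{-1}\big((g_i(z))\big)\subseteq\mathcal{R}_i$, using the ring isomorphism ${\rm Ev}_{a_i,z}$ of (\ref{eq partial eval map as ring isomorphism}). Since ${\rm Ev}_{a_i,z}$ is the identity on $z$ and acts on coefficients through the field isomorphism $\mathcal{S}_i\cong\mathbb{F}_{q_0^{md_i}}$, it transports monic right-divisors of $z^m-1$, hence minimal generators; thus $\mathcal{C}^{(i)}$ is a skew-cyclic code whose minimal generator skew polynomial is $g_i(x,z)={\rm Ev}_{a_i,z}^{-1}(g_i(z))$, as in the statement. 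Now put $\mathcal{C}=\rho^{-1}\big(\mathcal{C}^{(1)}\times\cdots\times\mathcal{C}^{(t)}\big)$. Because $\rho$ is a ring isomorphism, $\mathcal{C}$ is a left ideal of $\mathcal{R}$, hence a CSC code by Theorem \ref{th algebraic char}; its $i$th skew-cyclic component (Definition \ref{def skew-cyclic component code}) is $\mathcal{C}^{(i)}$, and by Proposition \ref{prop partial eval gives isomorphism} its evaluation skew-cyclic code on $a_i$ is $\mathcal{C}(a_i)=(g_i(z))$. By Theorem \ref{th single generator}, in particular formula (\ref{eq generator check pols from components}), the minimal generator skew polynomial of $\mathcal{C}$ is $\sum_{i=1}^t e_i(x)\widetilde{g}_i(x,z)=g(x,z)$, the skew polynomial of (\ref{eq g(x,z) for BCH}).

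Next I would check that $T_\mathcal{C}$ contains the pairs (\ref{eq pairs for not necessarily BCH}). Since $e_k(a_i)=\delta_{i,k}$ (this follows from (\ref{eq bezout identities}) and (\ref{eq primitive idempotent}), as in the proof of Proposition \ref{prop partial eval gives isomorphism}), only the $i$th summand survives under ${\rm Ev}_{a_i,z}$, so ${\rm Ev}_{a_i,z}(g(x,z))=\widetilde{g}_i(a_i,z)=g_i(z)$. Hence, for each $j$, by Definition \ref{def total evaluation map} and Lemma \ref{lemma connection arith eval lin eval},
$$ {\rm Ev}_{a_i,\beta_{i,j}}(g(x,z))={\rm Ev}^\sigma_{\beta_{i,j}}\big(g_i(z)\big)=g_i\big(1^{\beta_{i,j}}\big)=g_i^\sigma(\beta_{i,j})\,\beta_{i,j}^{-1}=G_i(\beta_{i,j})\,\beta_{i,j}^{-1}=0, $$
where I used $g_i^\sigma=G_i$ and that $G_i$ vanishes on $\mathcal{B}_i\ni\beta_{i,j}$. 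Thus $(a_i,\beta_{i,j})\in T_\mathcal{C}$ for all $i$ and $j$.

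Finally, for maximality, let $\mathcal{C}'\subseteq\mathcal{R}$ be any CSC code whose defining set contains the pairs (\ref{eq pairs for not necessarily BCH}); I would prove $\mathcal{C}'\subseteq\mathcal{C}$. By Corollary \ref{cor total eval product map is isom}, which turns this inclusion into the componentwise condition, it suffices to show $\mathcal{C}'(a_i)\subseteq\mathcal{C}(a_i)=(g_i(z))$ for each $i$. Let $g'_i(z)\in\mathbb{F}_{q_0^{md_i}}[z;\sigma]$ be the minimal generator skew polynomial of the skew-cyclic code $\mathcal{C}'(a_i)$; by Proposition \ref{prop partial eval gives isomorphism} it equals $g'(a_i,z)$, where $g'(x,z)$ is the minimal generator skew polynomial of $\mathcal{C}'$. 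From $(a_i,\beta_{i,j})\in T_{\mathcal{C}'}$ one has $g'_i\big(1^{\beta_{i,j}}\big)={\rm Ev}_{a_i,\beta_{i,j}}(g'(x,z))=0$, so by Lemma \ref{lemma connection arith eval lin eval} the linearized polynomial $(g'_i)^\sigma\in\mathcal{L}_q \mathbb{F}_{q_0^{md_i}}[y]$ satisfies $(g'_i)^\sigma(\beta_{i,j})=g'_i\big(1^{\beta_{i,j}}\big)\,\beta_{i,j}=0$ for $j=1,\ldots,k_i$, i.e.\ it vanishes on all of $\mathcal{B}_i$. Applying Lemma \ref{lemma minimal lin pol divides on the right} with $F=(g'_i)^\sigma$ and $G=G_i$ shows that $g_i(z)$ divides $g'_i(z)$ on the right, whence $\mathcal{C}'(a_i)=(g'_i(z))\subseteq(g_i(z))=\mathcal{C}(a_i)$, and therefore $\mathcal{C}'\subseteq\mathcal{C}$. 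The only real subtleties are the bookkeeping around the construction — checking $g_i(z)\mid z^m-1$ on the right so that $(g_i(z))$ is a genuine skew-cyclic code, and tracking minimal generators through the isomorphisms ${\rm Ev}_{a_i,z}$ and $\rho$ — and the reduction of maximality to the componentwise statement; once these are in place, Lemma \ref{lemma minimal lin pol divides on the right} is the engine of both the ``contains the pairs'' and the ``largest'' directions.
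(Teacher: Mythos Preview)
Your proof is correct and follows essentially the same route as the paper's: both construct $\mathcal{C}$ via the Chinese-remainder decomposition $\rho$, use Lemma~\ref{lemma minimal lin pol divides on the right} to see that $g_i(z)$ right-divides $z^m-1$ and hence is a minimal generator, invoke Theorem~\ref{th single generator} to identify $g(x,z)$, and establish maximality componentwise through Proposition~\ref{prop partial eval gives isomorphism} and Corollary~\ref{cor total eval product map is isom} together with Lemma~\ref{lemma minimal lin pol divides on the right}. Your version is slightly more explicit in verifying that $T_\mathcal{C}$ contains the prescribed pairs, whereas the paper handles maximality element-wise rather than via the minimal generator of $\mathcal{C}'$, but these are cosmetic differences.
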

\begin{proof}
Fix an index $ i = 1,2, \ldots, t $. By Lemma \ref{lemma minimal lin pol divides on the right} and the paragraph prior to it, $ g_i(z) \in \mathbb{F}_{q_0^{md_i}}[z;\sigma] $ exists, it is unique and furthermore, it divides $ z^m - 1 $ on the right in $ \mathbb{F}_{q_0^{md_i}}[z;\sigma] $, since $ \beta_{i,j}^{q^m} = \beta_{i,j} $, for $ j = 1,2, \ldots, k_i $. Since $ g_i(z) $ divides $ z^m - 1 $ on the right in $ \mathbb{F}_{q_0^{md_i}}[z;\sigma] $, it is the minimal generator skew polynomial of the skew-cyclic code that it generates, $ \mathcal{C}_i = (g_i(z)) \subseteq \mathbb{F}_{q_0^{md_i}}[z;\sigma] / (z^m - 1) $, by item 5 in \cite[Th. 2]{ontheroots}. Therefore, $ g_i(x,z) \in \mathcal{S}_i [z;\sigma] $ is the minimal generator skew polynomial of 
$$ \mathcal{C}^{(i)} = {\rm Ev}_{a_i,z}^{-1} \left( \mathcal{C}_i \right) \subseteq \mathcal{R}_i, $$
since $ {\rm Ev}_{a_i,z} $, given as in (\ref{eq partial eval map as ring isomorphism}), is a ring isomorphism preserving degrees in $ z $.

By Theorem \ref{th single generator}, $ g(x,z) \in \mathcal{S} [z;\sigma] $, given as in (\ref{eq g(x,z) for BCH}), is the minimal generator skew polynomial of the CSC code 
$$ \mathcal{C} = \rho^{-1} \left( \mathcal{C}^{(1)} \times \mathcal{C}^{(2)} \times \cdots \times \mathcal{C}^{(t)} \right) \subseteq \mathcal{R}. $$

It is only left to prove that $ \mathcal{C} $ is the largest CSC code in $ \mathcal{R} $ whose defining set $ T_\mathcal{C} $ contains the pairs in (\ref{eq pairs for not necessarily BCH}). By Proposition \ref{prop partial eval gives isomorphism}, it holds that $ \mathcal{C}_i = \mathcal{C}(a_i) $, for $ i = 1,2, \ldots, t $. Let $ c(x,z) \in \mathcal{R} $ be such that $ c(a_i, 1^{\beta_{i,j}}) = 0 $, for $ j = 1,2, \ldots, k_i $, for $ i = 1,2, \ldots, t $. By Lemma \ref{lemma minimal lin pol divides on the right}, $ g_i(z) $ divides $ c(a_i,z) $ on the right in $ \mathcal{R} $, thus $ c(a_i,z) \in \mathcal{C}(a_i) $, for $ i = 1,2, \ldots, t $. By Corollary \ref{cor total eval product map is isom}, we conclude that $ c(x,z) \in \mathcal{C} $, and we are done.
\end{proof}

\begin{lemma} \label{lemma min degree lin polynomial}
Let $ \mathcal{B} = \{ \beta_1, \beta_2, \ldots, \beta_k \} \subseteq \mathbb{F}_{q^m} $ be an arbitrary set, let $ d \in \mathbb{Z} $ be a divisor of $ s $, and define
\begin{equation*}
\begin{split}
\mathcal{U} & = \bigcup_{j = 1}^k \left\lbrace \beta_j^{q_0^{um d}} \mid u = 0 ,1, \ldots, \frac{s}{d} - 1 \right\rbrace , \textrm{ and} \\
\mathcal{V} & = \left\langle \mathcal{U} \right\rangle _{\mathbb{F}_q} \subseteq \mathbb{F}_{q^m} .
\end{split}
\end{equation*}
Then the conventional polynomial
$$ G(y) = \prod_{\beta \in \mathcal{V}} (y - \beta) $$
is the minimal linearized polynomial of $ \mathcal{B} $ in $ \mathcal{L}_q \mathbb{F}_{q_0^{md}}[y] $.
\end{lemma}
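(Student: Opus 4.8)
The plan is to describe the minimal linearized polynomial of $\mathcal{B}$ through its set of roots, using the standard correspondence between monic $q$-linearized polynomials and finite $\mathbb{F}_q$-subspaces of an algebraic closure (see \cite[Ch.~3]{lidl}), and then to identify $\mathcal{V}$ as the smallest such subspace compatible with the two constraints built into $\mathcal{L}_q\mathbb{F}_{q_0^{md}}[y]$: having coefficients in $\mathbb{F}_{q_0^{md}}$ and vanishing on $\mathcal{B}$.

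First I would fix the algebraic set-up. Since $d\mid s$ we have $md\mid sm$, so $\mathbb{F}_{q_0^{md}}\subseteq\mathbb{F}_{q_0^{sm}}=\mathbb{F}_{q^m}$, and $\mathbb{F}_{q_0^{md}}$ is precisely the fixed field of the automorphism $\tau\colon x\mapsto x^{q_0^{md}}$ of a fixed algebraic closure $\overline{\mathbb{F}_q}$. Two observations about $\tau$ are key. First, $\tau^{s/d}$ is the identity on $\mathbb{F}_{q^m}$, because $q_0^{(s/d)md}=q_0^{sm}=q^m$ and $\beta^{q^m}=\beta$ for all $\beta\in\mathbb{F}_{q^m}$; hence for each $j$ the list $\beta_j,\beta_j^{q_0^{md}},\ldots,\beta_j^{q_0^{(s/d-1)md}}$ runs through the entire $\tau$-orbit of $\beta_j$, so $\mathcal{U}$ is a union of $\tau$-orbits and $\tau(\mathcal{U})=\mathcal{U}$. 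Second, $\tau$ restricts to a (not necessarily trivial) automorphism of the subfield $\mathbb{F}_q$, so $\tau(\mathbb{F}_q)=\mathbb{F}_q$. Combining these with additivity of $\tau$ gives $\tau(\mathcal{V})=\tau(\langle\mathcal{U}\rangle_{\mathbb{F}_q})=\langle\tau(\mathcal{U})\rangle_{\tau(\mathbb{F}_q)}=\langle\mathcal{U}\rangle_{\mathbb{F}_q}=\mathcal{V}$; thus $\mathcal{V}$ is a finite, $\tau$-stable $\mathbb{F}_q$-subspace of $\mathbb{F}_{q^m}$.

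Next I would check that $G(y)=\prod_{\beta\in\mathcal{V}}(y-\beta)$ has the required properties. It is monic by construction, and $\mathcal{B}\subseteq\mathcal{U}\subseteq\mathcal{V}$ (take $u=0$), so $G(\beta)=0$ for every $\beta\in\mathcal{B}$. Because $\mathcal{V}$ is a finite $\mathbb{F}_q$-subspace, the theory of linearized polynomials \cite[Ch.~3]{lidl} shows that $G$ is a ($q$-)linearized polynomial, separable, of degree $|\mathcal{V}|$, with root set exactly $\mathcal{V}$. Its coefficients are, up to sign, the elementary symmetric functions of the elements of $\mathcal{V}$; hence they lie in $\mathbb{F}_{q^m}$ and, since $\tau$ permutes $\mathcal{V}$, they are fixed by $\tau$ and therefore lie in $\mathbb{F}_{q_0^{md}}$. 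So $G\in\mathcal{L}_q\mathbb{F}_{q_0^{md}}[y]$ (Definition~\ref{def linearized pols}) is monic and vanishes on $\mathcal{B}$.

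Finally I would establish minimality of the degree. Let $F\in\mathcal{L}_q\mathbb{F}_{q_0^{md}}[y]$ be any monic linearized polynomial with $F(\beta_j)=0$ for all $j$, and let $W\subseteq\overline{\mathbb{F}_q}$ be its set of roots. Since $F$ is $q$-linearized, $W$ is an $\mathbb{F}_q$-subspace; since the coefficients of $F$ lie in $\mathbb{F}_{q_0^{md}}$, applying $\tau$ to the identity $F(\gamma)=0$ shows $F(\tau(\gamma))=0$, so $W$ is $\tau$-stable. From $\mathcal{B}\subseteq W$, the $\tau$-stability of $W$, and its $\mathbb{F}_q$-linearity, I obtain $\mathcal{U}\subseteq W$ and hence $\mathcal{V}=\langle\mathcal{U}\rangle_{\mathbb{F}_q}\subseteq W$. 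Since a polynomial of degree $\deg F$ has at most $\deg F$ roots, $\deg F\geq|W|\geq|\mathcal{V}|=\deg G$. Thus $G$ is a monic linearized polynomial in $\mathcal{L}_q\mathbb{F}_{q_0^{md}}[y]$ of minimum degree vanishing on $\mathcal{B}$, i.e., the minimal linearized polynomial of $\mathcal{B}$ (uniqueness of such a polynomial following from Lemma~\ref{lemma minimal lin pol divides on the right}). The step I expect to require the most care is keeping the three fields $\mathbb{F}_q$, $\mathbb{F}_{q_0^{md}}$ and $\mathbb{F}_{q^m}$ apart — note that $\mathbb{F}_q$ and $\mathbb{F}_{q_0^{md}}$ need not be comparable — and verifying that $\tau$ stabilizes the whole span $\mathcal{V}$, not merely the generating set $\mathcal{U}$.
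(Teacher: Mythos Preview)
Your proof is correct and follows essentially the same route as the paper's: both show that $\mathcal{V}$ is an $\mathbb{F}_q$-subspace stable under the map $x\mapsto x^{q_0^{md}}$ (your $\tau$), conclude that $G$ is $q$-linearized with coefficients in $\mathbb{F}_{q_0^{md}}$, and then argue that any competitor $F\in\mathcal{L}_q\mathbb{F}_{q_0^{md}}[y]$ vanishing on $\mathcal{B}$ must vanish on all of $\mathcal{V}$, forcing $\deg F\ge|\mathcal{V}|=\deg G$. The only difference is cosmetic: you phrase the coefficient argument via ``$\tau$ permutes $\mathcal{V}$, hence fixes its elementary symmetric functions,'' whereas the paper writes the same fact as ``raising all coefficients of $G$ to the $q_0^{md}$th power yields a polynomial with the same roots, hence equal to $G$.''
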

\begin{proof}
First, since $ \mathcal{B} \subseteq \mathbb{F}_{q^m} $, then $ \mathcal{V} \subseteq \mathbb{F}_{q^m} $, hence $ G(y) \in \mathbb{F}_{q^m} [y] $. Since $ \mathcal{V} \subseteq \mathbb{F}_{q^m} $ is an $ \mathbb{F}_q $-linear vector space, we deduce from \cite[Th. 3.52]{lidl} that $ G(y) \in \mathcal{L}_q \mathbb{F}_{q^m}[y] $.

Next, let $ F(y) \in \mathcal{L}_q \mathbb{F}_{q^m}[y] $ be obtained from $ G(y) $ by raising each coefficient of $ G(y) $ to the $ q_0^{md} $th power. Then $ F(\beta^{q_0^{md}}) = G(\beta)^{q_0^{md}} = 0 $, for all $ \beta \in \mathcal{V} $. Since $ \mathcal{U}^{q_0^{md}} = \mathcal{U} $, then $ \mathcal{V}^{q_0^{md}} = \mathcal{V} $, and therefore we deduce that $ F(\beta) = 0 $, for all $ \beta \in \mathcal{V} $. Hence $ F(y) = G(y) $, thus the coefficients of $ G(y) $ lie in $ \mathbb{F}_{q_0^{md}} $. In other words, it holds that $ G(y) \in \mathcal{L}_q \mathbb{F}_{q_0^{md}}[y] $.

Now, let $ E(y) \in \mathcal{L}_q \mathbb{F}_{q_0^{md}}[y] $ be such that $ E(\beta_i) = 0 $, for $ i = 1,2, \ldots, k $. Then 
$$ 0 = E(\beta_i)^{q_0^{umd}} = E \left( \beta_i^{q_0^{umd}} \right), $$
for $ i = 1,2, \ldots, k $, for $ u = 0,1, \ldots, s/d - 1 $. In other words, $ E(\beta) = 0 $, for all $ \beta \in \mathcal{U} $. Since the map $ \beta \mapsto E(\beta) $ is linear over $ \mathbb{F}_q $, we conclude that $ E(\beta) = 0 $, for all $ \beta \in \mathcal{V} $. By the definition of $ G(y) $, it must holds that $ \deg_y(G(y)) \leq \deg_y(E(y)) $, and we are done.
\end{proof}

Next, we relate the defining sets of evaluation skew-cyclic codes on roots of the same irreducible component of $ x^\ell - 1 $. 

\begin{lemma} \label{lemma defining set when changin root unity}
Let $ g(x,z) \in \mathcal{S} [z; \sigma] $ be of degree less than $ m $, and choose $ a, a^\prime \in \mathbb{F}_q $ such that $ m_i(a) = m_i(a^\prime) = 0 $, for some $ i $, $ 1 \leq i \leq t $, where $ x^\ell - 1 = m_1(x) m_2(x) \cdots m_t(x) $ is the irreducible decomposition of $ x^\ell - 1 $ in $ \mathbb{F}[x] $. The following hold:
\begin{enumerate}
\item
There exists $ j = 0, 1, 2, \ldots, d_i - 1 $, where $ d_i = \deg_x(m_i(x)) $, such that $ a^\prime = a^{q_0^{jm}} $.
\item
For $ \beta \in \mathbb{F}_{q^m} \setminus \{ 0 \} $, we have that
$$ g(a, 1^\beta) = 0 \quad \Longleftrightarrow \quad g \left( a^{q_0^{jm}} , 1^{\beta^{q_0^{jm}}} \right) = 0. $$
As a consequence, for a CSC code $ \mathcal{C} \subseteq \mathcal{R} $, we deduce that
$$ T_\mathcal{C} \left( a^{q_0^{jm}} \right) = T_\mathcal{C}(a) ^{q_0^{jm}} , $$
with notation as in (\ref{eq def defining sets of partial ev skew-cyclic codes}). 
\end{enumerate}
\end{lemma}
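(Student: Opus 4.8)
The plan is to establish part 1 by a Galois‑orbit argument and part 2 by pushing the explicit evaluation formula (\ref{eq evaluation equation for diff roots}) through the field map $\xi \mapsto \xi^{q_0^{jm}}$; the stated consequence is then a set‑theoretic rephrasing. For part 1, I would first note that $p \nmid \ell$ (since $\ell$ and $q$ are coprime by the standing assumptions of this section), so $x^\ell - 1$, and hence every $m_i(x)$, has only simple roots. As $m_i(x)$ is irreducible over $\mathbb{F} = \mathbb{F}_{q_0^m}$ of degree $d_i$, its splitting field is $\mathbb{F}_{q_0^{m d_i}}$ and its $d_i$ distinct roots form a single orbit of ${\rm Gal}(\mathbb{F}_{q_0^{m d_i}} / \mathbb{F})$, a cyclic group of order $d_i$ generated by $\xi \mapsto \xi^{q_0^{m}}$. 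Hence the roots of $m_i(x)$ are precisely $a, a^{q_0^m}, \ldots, a^{q_0^{(d_i - 1)m}}$, and since $a'$ is one of them, $a' = a^{q_0^{jm}}$ for some $j \in \{0, 1, \ldots, d_i - 1\}$; in particular $a^{q_0^{jm}} = a' \in \mathbb{F}_q$.

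For part 2, I would write $g(x,z) = g_0(x) + g_1(x) z + \cdots + g_{m-1}(x) z^{m-1}$ with $g_\nu(x) \in \mathcal{S} = \mathbb{F}[x]/(x^\ell - 1)$, possible since $\deg_z g < m$, and recall $\mathbb{F} = \mathbb{F}_{q_0^m} \subseteq \mathbb{F}_{q^m}$. By (\ref{eq evaluation equation for diff roots}) and $\sigma(\xi) = \xi^q$, for $\beta \in \mathbb{F}_{q^m} \setminus \{0\}$ we have $g(a, 1^\beta) = \big(\sum_{\nu = 0}^{m-1} g_\nu(a)\, \beta^{q^\nu}\big)\beta^{-1}$, so $g(a, 1^\beta) = 0$ if and only if $\sum_{\nu} g_\nu(a)\,\beta^{q^\nu} = 0$. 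Now raise this sum to the $q_0^{jm}$‑th power. Using that this power map is additive in characteristic $p$, that each coefficient of $g_\nu(x)$ lies in $\mathbb{F} = \mathbb{F}_{q_0^m}$ and is therefore fixed by $\xi \mapsto \xi^{q_0^m}$, hence by $\xi \mapsto \xi^{q_0^{jm}}$ (so that, writing $g_\nu(x) = \sum_h c_{\nu, h} x^h$, one gets $g_\nu(a)^{q_0^{jm}} = \sum_h c_{\nu,h}\,(a^{q_0^{jm}})^h = g_\nu(a^{q_0^{jm}})$), and that $(\beta^{q^\nu})^{q_0^{jm}} = (\beta^{q_0^{jm}})^{q^\nu}$, I obtain $\big(\sum_\nu g_\nu(a)\beta^{q^\nu}\big)^{q_0^{jm}} = \sum_\nu g_\nu(a^{q_0^{jm}})(\beta^{q_0^{jm}})^{q^\nu}$. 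Since $\xi \mapsto \xi^{q_0^{jm}}$ is injective, one side vanishes exactly when the other does; as $a' = a^{q_0^{jm}}$ and $\beta^{q_0^{jm}} \in \mathbb{F}_{q^m} \setminus \{0\}$, applying (\ref{eq evaluation equation for diff roots}) once more gives $g(a, 1^\beta) = 0 \Leftrightarrow g(a^{q_0^{jm}}, 1^{\beta^{q_0^{jm}}}) = 0$.

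For the consequence, let $g(x,z) \in \mathcal{S}[z;\sigma]$ be the minimal generator skew polynomial of $\mathcal{C}$; by Proposition \ref{prop partial eval gives isomorphism}, $g(a,z) = g_i(a,z)$ has $z$‑degree at most $m$, equal to $m$ only when $g_i(x,z) = z^m - 1$, in which case $T_\mathcal{C}(a) = T_\mathcal{C}(a^{q_0^{jm}}) = \mathbb{F}_{q^m} \setminus \{0\}$ (permuted by the $q_0^{jm}$‑th power map) and the identity is trivial. Otherwise $\deg_z g_i < m$, and I would apply the equivalence just proven to a degree‑$(<m)$ representative of $g_i(x,z)$ in $\mathcal{S}[z;\sigma]$ — whose evaluation at $a$, respectively at $a^{q_0^{jm}}$, is $g(a,z)$, respectively $g(a^{q_0^{jm}},z)$, because ${\rm Ev}_{a,z}$ factors through projection onto $\mathcal{S}_i[z;\sigma]$ and $m_i(a) = m_i(a^{q_0^{jm}}) = 0$ — to conclude that $\beta \in T_\mathcal{C}(a)$ if and only if $\beta^{q_0^{jm}} \in T_\mathcal{C}(a^{q_0^{jm}})$ for every $\beta \in \mathbb{F}_{q^m} \setminus \{0\}$, i.e., $T_\mathcal{C}(a^{q_0^{jm}}) = T_\mathcal{C}(a)^{q_0^{jm}}$. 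The computation is essentially exponent bookkeeping with $q = q_0^s$ and $q_0^{jm}$, so I expect no real obstacle; the one point demanding care is that $\xi \mapsto \xi^{q_0^{jm}}$ is in general only $\mathbb{F}_{q_0}$‑semilinear, not $\mathbb{F}_q$‑linear (it fixes $\mathbb{F}_{q_0^m}$ but may move $\mathbb{F}_q = \mathbb{F}_{q_0^s}$ when $s \nmid jm$), so one cannot conclude that $T_\mathcal{C}(a) \cup \{0\}$ and $T_\mathcal{C}(a^{q_0^{jm}}) \cup \{0\}$ are "the same" $\mathbb{F}_q$‑subspace — but the statement only asserts equality of sets, which the pointwise equivalence of part 2 delivers directly.
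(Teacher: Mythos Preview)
Your proposal is correct and follows essentially the same route as the paper: the Galois-orbit argument for item~1, and for item~2 the explicit evaluation formula (\ref{eq evaluation equation for diff roots}) together with the observation $g_h(a)^{q_0^{jm}} = g_h(a^{q_0^{jm}})$ (since $g_h$ has coefficients in $\mathbb{F} = \mathbb{F}_{q_0^m}$), followed by raising the vanishing sum to the $q_0^{jm}$-th power. Your treatment of the consequence is in fact more careful than the paper's --- which simply says ``the result follows'' --- in that you separate out the degenerate case $g_i(x,z) = z^m - 1$ and note the semilinearity caveat, but these are refinements rather than a different approach.
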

\begin{proof}
Item 1 is a well-known result \cite[Sec. 4.1]{pless} that follows by observing that $ \prod_{h=0}^{d_i-1} (x - a^{q_0^{hm}}) $ is the polynomial of minimum degree in $ \mathbb{F}[x] $ that has $ a $ as a root, and hence it must coincide with $ m_i(x) $. 

We now prove item 2. By equation (\ref{eq evaluation equation for diff roots}), it holds that $ g(a, 1^\beta) = 0 $ if, and only if, 
\begin{equation}
g_0(a) \beta + g_1(a) \sigma(\beta) + \cdots + g_{m-1}(a) \sigma^{m-1}(\beta) = 0,
\label{eq g to zero proof}
\end{equation}
and similarly in the case of $ a^\prime $, where $ g(x,z) = g_0(x) + g_1(x)z + \cdots + g_{m-1}(x) z^{m-1} $ and $ g_h(x) \in \mathcal{S} $, for $ h = 0,1,\ldots, m-1 $. Since $ g_h(x) \in \mathcal{S} $, it holds that 
$$ g_h \left( a^{q_0^{j m}} \right) = g_h(a)^{q_0^{j m}}, $$
for $ h = 0,1, \ldots , m-1 $. Thus, equation (\ref{eq g to zero proof}) holds if, and only if, we have that
\begin{equation*}
\begin{split}
0 & = \left( g_0(a) \beta + g_1(a) \sigma(\beta) + \cdots + g_{m-1}(a) \sigma^{m-1}(\beta) \right)^{q_0^{j m}} \\
 & = g_0(a)^{q_0^{j m}} \beta^{q_0^{j m}} + \cdots + g_{m-1}(a)^{q_0^{j m}} \sigma^{m-1}(\beta)^{q_0^{j m}} \\
 & = g_0 \left( a^{q_0^{j m}} \right) \beta^{q_0^{j m}} + \cdots + g_{m-1} \left( a^{q_0^{j m}} \right) \sigma^{m-1} \left( \beta^{q_0^{j m}} \right) ,
\end{split}
\end{equation*}
and the result follows.
\end{proof}

We may finally find the defining set of a SR-BCH in terms of the pairs (\ref{eq pairs for SR-BCH codes}), without explicitly computing its minimal generator skew polynomial.

\begin{theorem} \label{th SR-BCH structure}
Let $ a \in \mathbb{F}_q $ be a primitive $ \ell $th root of unity and let $ \beta \in \mathbb{F}_{q^m} $ be a normal element of the extension $ \mathbb{F}_q \subseteq \mathbb{F}_{q^m} $ (see Subsection \ref{subsec finding CSC lin RS codes}). Fix integers $ b \geq 0 $ and $ 2 \leq \delta \leq n $. Let $ x^\ell - 1 = m_1(x) m_2(x) \cdots m_t(x) $ be the irreducible decomposition of $ x^\ell - 1 $ in $ \mathbb{F}[x] $. Define
\begin{equation*}
\begin{split}
\mathcal{J}_i & = \left\lbrace j \in \mathbb{N} \mid 0 \leq j \leq \delta - 2, m_i ( a^{b+j} ) = 0 \right\rbrace \\
 & = \left\lbrace j_1, j_2, \ldots, j_{k_i} \right\rbrace ,
\end{split}
\end{equation*}
where $ k_i = | \mathcal{J}_i | $, and choose an arbitrary $ \widetilde{j}_i \in \mathcal{J}_i $, for $ i = 1,2, \ldots, t $. There exist integers $ h_1, h_2, \ldots, h_{k_i} \in \mathbb{Z} $, satisfying that $ 0 \leq h_\lambda \leq d_i - 1 $ and
\begin{equation}
b + \widetilde{j}_i \equiv (b + j_\lambda) q_0^{h_\lambda m} \quad (\textrm{mod } \ell),
\label{eq congruence for defining set SR-BCH codes}
\end{equation}
for $ \lambda = 1,2, \ldots, k_i $, for $ i = 1,2, \ldots, t $. Define the $ \mathbb{F}_q $-linear vector subspace of $ \mathbb{F}_{q^m} $:
\begin{equation*}
\begin{split}
\mathcal{V}_i = \left\langle \beta^{q_0^v} \mid v \in \{ \right. & sj_\lambda + m (ud_i + h_\lambda) \textrm{ (mod } sm) \mid \\
& \left. u = 0,1, \ldots, \frac{s}{d_i} - 1, \lambda = 1,2, \ldots, k_i \} \right\rangle_{\mathbb{F}_q} ,
\end{split}
\end{equation*}
for $ i = 1,2, \ldots, t $. The following properties hold for the corresponding SR-BCH code $ \mathcal{C}_\delta(a^b,\beta) \subseteq \mathbb{F}^n $ (Definition \ref{def SR-BCH codes}):
\begin{enumerate}
\item
The defining set of $ \mathcal{C}_\delta(a^b,\beta) $ satisfies that
$$ T_{\mathcal{C}_\delta(a^b,\beta)} \left( a^{b + \widetilde{j}_i} \right) = \mathcal{V}_i \setminus \{ 0 \} , $$
with notation as in (\ref{eq def defining sets of partial ev skew-cyclic codes}), for $ i = 1,2, \ldots, t $.
\item
The dimension of $ \mathcal{C}_\delta(a^b,\beta) $ over $ \mathbb{F} $ is
$$ \dim_\mathbb{F} \left( \mathcal{C}_\delta(a^b,\beta) \right) = n - \sum_{i = 1}^t \deg_x(m_i(x)) \dim_{\mathbb{F}_q} (\mathcal{V}_i). $$
\end{enumerate}
\end{theorem}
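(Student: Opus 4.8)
The plan is to reduce the statement to the three auxiliary lemmas preceding it (Lemmas~\ref{lemma largest CSC code for given roots}, \ref{lemma min degree lin polynomial} and~\ref{lemma defining set when changin root unity}) together with the dimension formula of Theorem~\ref{th dimensions from defining set}. By Proposition~\ref{prop SR-BCH are largest containing pairs}, $ \mathcal{C}_\delta(a^b,\beta) $ is the largest CSC code whose defining set contains the pairs $ (a^{b+j},\sigma^j(\beta)) = (a^{b+j},\beta^{q_0^{sj}}) $, for $ j = 0,1,\ldots,\delta-2 $. Grouping these pairs according to which irreducible factor $ m_i(x) $ of $ x^\ell-1 $ over $ \mathbb{F} $ has $ a^{b+j} $ as a root yields exactly the index sets $ \mathcal{J}_i=\{j_1,\ldots,j_{k_i}\} $ (an index $ i $ with $ \mathcal{J}_i=\emptyset $ has $ k_i=0 $, $ \mathcal{V}_i=\{0\} $ and contributes nothing, so assume $ \mathcal{J}_i\neq\emptyset $ from now on). Since $ m_i(a^{b+\widetilde{j}_i})=m_i(a^{b+j_\lambda})=0 $, item~1 of Lemma~\ref{lemma defining set when changin root unity} produces $ h_\lambda\in\{0,1,\ldots,d_i-1\} $ (unique, as $ m_i(x) $ is irreducible of degree $ d_i $, so its roots form a single orbit under $ x\mapsto x^{q_0^m} $) with $ a^{b+\widetilde{j}_i}=(a^{b+j_\lambda})^{q_0^{h_\lambda m}} $; because $ a $ has order $ \ell $, this is precisely the congruence~(\ref{eq congruence for defining set SR-BCH codes}).

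Next I would transport each pair to the chosen representative root $ a^{b+\widetilde{j}_i} $. For any CSC code $ \mathcal{C} $, item~2 of Lemma~\ref{lemma defining set when changin root unity} gives $ T_\mathcal{C}(a^{b+\widetilde{j}_i})=T_\mathcal{C}(a^{b+j_\lambda})^{q_0^{h_\lambda m}} $, so, since $ x\mapsto x^{q_0^{h_\lambda m}} $ is a bijection of $ \mathbb{F}_{q^m} $ and $ (\beta^{q_0^{sj_\lambda}})^{q_0^{h_\lambda m}}=\beta^{q_0^{sj_\lambda+h_\lambda m}} $, the defining set of a CSC code contains $ (a^{b+j_\lambda},\beta^{q_0^{sj_\lambda}}) $ if and only if it contains $ (a^{b+\widetilde{j}_i},\beta^{q_0^{sj_\lambda+h_\lambda m}}) $. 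Hence $ \mathcal{C}_\delta(a^b,\beta) $ is the largest CSC code whose defining set contains the pairs $ (a^{b+\widetilde{j}_i},\beta^{q_0^{sj_\lambda+h_\lambda m}}) $, for $ \lambda=1,\ldots,k_i $ and $ i=1,\ldots,t $. Applying Lemma~\ref{lemma largest CSC code for given roots} with $ a_i=a^{b+\widetilde{j}_i} $ and $ \mathcal{B}_i=\{\beta^{q_0^{sj_\lambda+h_\lambda m}}:\lambda=1,\ldots,k_i\} $, the minimal generator skew polynomial $ g(x,z) $ of $ \mathcal{C}_\delta(a^b,\beta) $ satisfies that the associated linearized polynomial of $ g(a^{b+\widetilde{j}_i},z)\in\mathbb{F}_{q_0^{md_i}}[z;\sigma] $ is the minimal linearized polynomial $ G_i(y) $ of $ \mathcal{B}_i $ in $ \mathcal{L}_q\mathbb{F}_{q_0^{md_i}}[y] $.

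To evaluate $ G_i(y) $ I would apply Lemma~\ref{lemma min degree lin polynomial} with $ d=d_i $ and $ \mathcal{B}=\mathcal{B}_i $; this is legitimate because $ d_i\mid s $, since $ \ell\mid q_0^s-1 $ forces $ \mathrm{ord}_\ell(q_0)\mid s $, whence $ d_i\mid\mathrm{ord}_\ell(q_0^m)\mid\mathrm{ord}_\ell(q_0)\mid s $. Reducing exponents of $ \beta $ modulo $ sm $ via $ \beta^{q_0^{sm}}=\beta^{q^m}=\beta $, the lemma gives $ G_i(y)=\prod_{\gamma\in\mathcal{V}_i}(y-\gamma) $ with $ \mathcal{V}_i=\langle\beta^{q_0^{sj_\lambda+m(ud_i+h_\lambda)}}:u=0,\ldots,s/d_i-1,\ \lambda=1,\ldots,k_i\rangle_{\mathbb{F}_q} $, which is exactly the subspace in the statement. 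As $ \mathcal{V}_i\subseteq\mathbb{F}_{q^m} $ and $ |\mathcal{V}_i|=q^{\dim_{\mathbb{F}_q}(\mathcal{V}_i)}=\deg_y G_i(y) $, the roots of $ G_i(y) $ in $ \mathbb{F}_{q^m} $ are exactly the elements of $ \mathcal{V}_i $; by the identification of $ T_\mathcal{C}(a)\cup\{0\} $ with the root set of the linearized polynomial associated to $ g(a,z) $ (established in the proof of Theorem~\ref{th dimensions from defining set}), this gives $ T_{\mathcal{C}_\delta(a^b,\beta)}(a^{b+\widetilde{j}_i})=\mathcal{V}_i\setminus\{0\} $, which is item~1. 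Item~2 then follows at once from Theorem~\ref{th dimensions from defining set} applied with the root choices $ a_i=a^{b+\widetilde{j}_i} $.

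The main obstacle will be purely bookkeeping: one must keep straight three distinct Frobenius ``levels'' --- $ q_0 $ acting on powers of $ \beta $, $ q=q_0^s $ coming from $ \sigma $, and $ q_0^m $ (resp.\ $ q_0^{md_i} $) coming from the conjugacy orbit of the roots of $ x^\ell-1 $ --- and check that the exponent $ sj_\lambda+m(ud_i+h_\lambda) $ obtained by composing Lemmas~\ref{lemma defining set when changin root unity} and~\ref{lemma min degree lin polynomial} is reduced consistently modulo $ sm $. The only genuinely substantive points beyond the three lemmas are the divisibility $ d_i\mid s $ and the well-definedness of $ h_\lambda $ in the range $ \{0,\ldots,d_i-1\} $; with those in hand the argument is a direct composition of the cited results.
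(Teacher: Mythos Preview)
Your proof is correct and follows essentially the same route as the paper: invoke Proposition~\ref{prop SR-BCH are largest containing pairs}, transport the pairs via Lemma~\ref{lemma defining set when changin root unity}, apply Lemma~\ref{lemma largest CSC code for given roots} to identify the minimal generator at $a^{b+\widetilde{j}_i}$, compute its associated linearized polynomial via Lemma~\ref{lemma min degree lin polynomial}, and finish with Theorem~\ref{th dimensions from defining set}. Your explicit verification that $d_i\mid s$ (which the paper uses without comment) and your remark on the well-definedness of $h_\lambda$ are useful additions.
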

\begin{proof}
Denote $ \mathcal{C} = \mathcal{C}_\delta(a^b,\beta) $ for simplicity. 

The existence of $ h_1, h_2, \ldots, h_{k_i} \in \mathbb{Z} $ satisfying (\ref{eq congruence for defining set SR-BCH codes}) and $ 0 \leq h_\lambda \leq d_i - 1 $, for $ \lambda = 1,2, \ldots, k_i $, for $ i = 1,2, \ldots, t $, follows from item 1 in Lemma \ref{lemma defining set when changin root unity} and the fact that $ a $ is a primitive $ \ell $th root of unity. By item 2 in Lemma \ref{lemma defining set when changin root unity}, it holds that
\begin{equation*}
\begin{split}
\beta^{q_0^{s j_\lambda}} & \in T_\mathcal{C} \left( a^{b + j_\lambda} \right) \quad \Longleftrightarrow \\
 \beta^{q_0^{s j_\lambda + mh_\lambda}} & \in T_\mathcal{C} \left( \left( a^{b + j_\lambda} \right)^{q_0^{h_\lambda m}} \right) = T_\mathcal{C} \left( a^{b + \widetilde{j}_i} \right).
\end{split}
\end{equation*}
In other words, the fact that $ T_\mathcal{C} $ contains the pairs in (\ref{eq pairs for SR-BCH codes}) is equivalent to the fact that $ T_\mathcal{C} $ contains the pairs
$$ \left( a^{b + \widetilde{j}_i} , \beta^{q_0^{sj_\lambda + mh_\lambda}} \right), $$
for $ \lambda = 1,2, \ldots, k_i $, for $ i = 1,2, \ldots, t $. Hence, by Lemma \ref{lemma largest CSC code for given roots}, since $ \mathcal{C} $ is the largest CSC code containing such pairs (Proposition \ref{prop SR-BCH are largest containing pairs}), we deduce that the linearized polynomial associated to the minimal generator skew polynomial of $ \mathcal{C}(a^{b + \widetilde{j}_i}) $ is the minimal linearized polynomial of 
\begin{equation*}
\begin{split}
\mathcal{B}_i = \lbrace \beta^{q_0^v} \mid v \in \{ & sj_\lambda + mh_\lambda  \textrm{ }(mod \textrm{ } sm) \mid \\
& \lambda = 1,2, \ldots, k_i \} \rbrace \subseteq \mathbb{F}_{q^m}
\end{split}
\end{equation*}
in $ \mathcal{L}_q \mathbb{F}_{q_0^{md_i}}[y] $, for $ i = 1,2, \ldots, t $. By Lemma \ref{lemma min degree lin polynomial}, such a minimal linearized polynomial is 
$$ G_i(y) = \prod_{\widetilde{\beta} \in \mathcal{V}_i} \left( y - \widetilde{\beta} \right) , $$
for $ i = 1,2, \ldots, t $. By the definition of $ T_\mathcal{C}(a^{b + \widetilde{j}_i}) $ (see (\ref{eq def defining sets of partial ev skew-cyclic codes})), Lemma \ref{lemma connection arith eval lin eval} and Proposition \ref{prop partial eval gives isomorphism}, we conclude that item 1 holds, i.e., $ T_\mathcal{C}(a^{b + \widetilde{j}_i}) = \mathcal{V}_i \setminus \{ 0 \} $, for $ i = 1,2, \ldots, t $. 

Finally, item 2 follows from item 1 and Theorem \ref{th dimensions from defining set}.
\end{proof}

\begin{remark}
If, for some $ i = 1,2, \ldots, t $, it holds that $ \mathcal{J}_i = \varnothing $, then $ \mathcal{B}_i = \varnothing $ and $ \mathcal{V}_i = \{ 0 \} $. Hence such a term does not appear in the sum in item 2 in Theorem \ref{th SR-BCH structure}.
\end{remark}

As usual, we conclude by discussing the cases $ m = 1 $ and $ \ell = 1 $. In the classical cyclic case $ m = 1 $, Theorem \ref{th SR-BCH structure} simply says that the defining set of a BCH code is the union of the cyclotomic sets that have a non-empty intersection with the pairs in (\ref{eq pairs for SR-BCH codes}) \cite[Eq. (5.1)]{pless}. In the skew-cyclic case $ \ell = 1 $, rank-metric BCH codes were also defined in terms of defining sets in \cite[Def. 7]{ontheroots}. However, the description in Theorem \ref{th SR-BCH structure} is new in this case to the best of our knowledge. Note that, setting $ s = 1 $ if $ \ell = 1 $, then $ t=1 $ and $ \mathcal{V}_1 = \langle \beta, \sigma(\beta), \ldots, \sigma^{\delta-2}(\beta) \rangle_{\mathbb{F}_q} $, hence Theorem \ref{th SR-BCH structure} is consistent with \cite[Th. 6]{ontheroots} for full-length skew-cyclic Gabidulin codes.

\subsection{A Lower Bound on the Dimension of a SR-BCH Code} \label{subsec estimates}

In this subsection, we will make use of Theorem \ref{th SR-BCH structure} to obtain a simple lower bound on the dimension of SR-BCH codes. This bound only makes use of the first component of the pairs in (\ref{eq pairs for SR-BCH codes}), and can be easily computed by using the corresponding cyclotomic sets. We will show how to use it in Example \ref{ex lower bound}, and we will provide tables for a wide range of parameters in Appendix \ref{app tables}. In those tables, it can be seen how our lower bound provides codes with a higher dimension for a given minimum sum-rank distance than previously known (\ref{eq lower bound delsarte}).

\begin{theorem} \label{th lower bound}
With assumptions and notation as in Theorem \ref{th SR-BCH structure}, it holds that
\begin{equation}
\dim_\mathbb{F} \left( \mathcal{C}_\delta (a^b, \beta) \right) \geq n - \sum_{i = 1}^t d_i \min \left\lbrace m , \frac{s k_i}{d_i} \right\rbrace,
\label{eq lower bound}
\end{equation}
where $ d_i = \deg_x(m_i(x)) $ as in Theorem \ref{th SR-BCH structure}, and
$$ k_i = \left| \left\lbrace j \in \mathbb{N} \mid 0 \leq j \leq \delta - 2, m_i ( a^{b+j} ) = 0 \right\rbrace \right| \geq 0 , $$
for $ i = 1,2, \ldots, t $.
\end{theorem}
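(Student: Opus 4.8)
The plan is to read the bound straight off the exact dimension formula in item~2 of Theorem~\ref{th SR-BCH structure}, namely
$$ \dim_\mathbb{F}\left( \mathcal{C}_\delta(a^b,\beta) \right) = n - \sum_{i=1}^t \deg_x(m_i(x)) \dim_{\mathbb{F}_q}(\mathcal{V}_i), $$
so that (\ref{eq lower bound}) will follow as soon as one shows $ \dim_{\mathbb{F}_q}(\mathcal{V}_i) \leq \min\{ m, sk_i/d_i \} $ for each $ i = 1,2,\ldots,t $, recalling that $ \deg_x(m_i(x)) = d_i $.

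To obtain this estimate I would argue in two ways, one for each term of the minimum. First, by its definition in Theorem~\ref{th SR-BCH structure}, $ \mathcal{V}_i $ is an $ \mathbb{F}_q $-linear subspace of $ \mathbb{F}_{q^m} $; since $ \dim_{\mathbb{F}_q}(\mathbb{F}_{q^m}) = m $, this already gives $ \dim_{\mathbb{F}_q}(\mathcal{V}_i) \leq m $. Second, I would simply count the spanning set: $ \mathcal{V}_i $ is generated over $ \mathbb{F}_q $ by the elements $ \beta^{q_0^v} $ indexed by the pairs $ (u,\lambda) $ with $ u = 0,1,\ldots, s/d_i - 1 $ and $ \lambda = 1,2,\ldots,k_i $ (here $ d_i $ divides $ s $, as is already used in Theorem~\ref{th SR-BCH structure} and Lemma~\ref{lemma min degree lin polynomial}, so $ s/d_i \in \mathbb{N} $). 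There are at most $ (s/d_i)k_i $ such pairs, hence $ \dim_{\mathbb{F}_q}(\mathcal{V}_i) \leq sk_i/d_i $. Combining the two estimates, $ \dim_{\mathbb{F}_q}(\mathcal{V}_i) \leq \min\{ m, sk_i/d_i \} $; this also covers the degenerate case $ k_i = 0 $, where $ \mathcal{V}_i = \{ 0 \} $ and the bound reads $ 0 \leq \min\{ m, 0 \} $.

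It then remains to assemble the pieces: multiplying $ \dim_{\mathbb{F}_q}(\mathcal{V}_i) \leq \min\{ m, sk_i/d_i \} $ by $ d_i $ (noting $ d_i \min\{ m, sk_i/d_i \} = \min\{ md_i, sk_i \} $, a nonnegative integer), summing over $ i = 1,2,\ldots,t $, and substituting into the formula above yields
$$ \dim_\mathbb{F}\left( \mathcal{C}_\delta(a^b,\beta) \right) \geq n - \sum_{i=1}^t d_i \min\left\{ m, \frac{sk_i}{d_i} \right\}, $$
which is exactly (\ref{eq lower bound}). I do not anticipate a genuine obstacle here; the only points deserving care are that both upper bounds on $ \dim_{\mathbb{F}_q}(\mathcal{V}_i) $ are genuinely needed (neither alone suffices in general), and that the generator count rests on the integrality of $ s/d_i $, which is part of the standing assumptions recalled at the beginning of Section~\ref{sec SR BCH codes}.
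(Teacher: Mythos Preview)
Your proposal is correct and is precisely the argument the paper has in mind: the paper states Theorem~\ref{th lower bound} without an explicit proof, relying on the reader to extract it from item~2 of Theorem~\ref{th SR-BCH structure} by bounding $\dim_{\mathbb{F}_q}(\mathcal{V}_i)$ by $m$ (ambient dimension) and by $sk_i/d_i$ (generator count), exactly as you do. One small correction: $d_i \mid s$ is not literally listed among the standing assumptions at the start of Section~\ref{sec SR BCH codes}; rather, it is a consequence of them (any root $a_i$ of $m_i(x)$ lies in $\mathbb{F}_{q_0^s}$, so its degree $e_i$ over $\mathbb{F}_{q_0}$ divides $s$, and $d_i = e_i/\gcd(m,e_i)$ divides $e_i$), which is why Lemma~\ref{lemma min degree lin polynomial} can be invoked with $d = d_i$ in the proof of Theorem~\ref{th SR-BCH structure}.
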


We now briefly discuss the bound (\ref{eq lower bound}). First, in the classical cyclic case $ m = 1 $, the bound (\ref{eq lower bound}) is an equality and becomes the well-known formula 
$$ \dim_\mathbb{F} \left( \mathcal{C}_\delta(a^b, 1) \right) = n - \sum_{i=1}^t d_i \varepsilon_i, $$
where $ \varepsilon_i = 1 $ if there exists an integer $ j $ such that $ 0 \leq j \leq \delta - 2 $ and $ m_i(a^{b+j}) = 0 $, and $ \varepsilon_i = 0 $ otherwise, for $ i = 1,2, \ldots, t $. In the skew-cyclic case $ \ell = 1 $, setting $ s = 1 $ (which can always be done), we recover the dimension of full-length ($ N = m $) skew-cyclic Gabidulin codes
$$ \dim_\mathbb{F} \left( \mathcal{C}_\delta (1, \beta) \right) = n - k_1 = n - \delta + 1, $$
since $ t = s = d_1 = 1 $ and $ k_1 = \delta - 1 < n = m $.

As in the classical cyclic case, SR-BCH are in general subfield subcodes of duals of linearized Reed-Solomon codes. Therefore, we may also apply Delsarte's lower bound \cite{delsarte} (see also \cite[Cor. 9]{universal-lrc}) on the dimension of SR-BCH codes, obtaining
\begin{equation}
\dim_\mathbb{F} \left( \mathcal{C}_\delta (a^b, \beta) \right) \geq n - s (\delta - 1).
\label{eq lower bound delsarte}
\end{equation}
However, the bound (\ref{eq lower bound}) is always tighter, since $ \sum_{i=1}^t k_i = \delta - 1 $, thus
\begin{equation}
n - \sum_{i = 1}^t d_i \min \left\lbrace m , \frac{s k_i}{d_i} \right\rbrace \geq n - \sum_{i = 1}^t s k_i = n - s (\delta - 1).
\label{eq comparison lower bound with delsate}
\end{equation}
Observe that equality holds in (\ref{eq comparison lower bound with delsate}) if, and only if, $ s k_i \leq m d_i $, for $ i = 1,2, \ldots ,t $.

Finally, the bound (\ref{eq lower bound}) is always at least $ 0 $ and at most the Singleton bound (\ref{eq singleton bound}) for the prescribed distance $ \delta $:
\begin{equation}
0 \leq n - \sum_{i = 1}^t d_i \min \left\lbrace m , \frac{s k_i}{d_i} \right\rbrace \leq n - \delta + 1.
\label{eq comparison lower bound with singleton}
\end{equation}
The first inequality in (\ref{eq comparison lower bound with singleton}) can be deduced from
$$ \sum_{i = 1}^t d_i \min \left\lbrace m , \frac{s k_i}{d_i} \right\rbrace \leq m \sum_{i=1}^t d_i = m \ell = n, $$
where equality holds if, and only if, $ md_i \leq sk_i $, for $ i = 1,2, \ldots, t $. The second inequality in (\ref{eq comparison lower bound with singleton}) can be deduced from
$$ n - \sum_{i = 1}^t d_i \min \left\lbrace m , \frac{s k_i}{d_i} \right\rbrace \leq n -  \sum_{i = 1}^t k_i = n - \delta + 1, $$
since $ k_i \leq d_i m $, for $ i = 1,2, \ldots, t $. It is left to the reader to show that (\ref{eq lower bound}) may be equal to the Singleton bound (\ref{eq singleton bound}) for the prescribed distance $ \delta $ if, and only if, $ s = 1 $, which is the case in which SR-BCH codes coincide with CSC linearized Reed-Solomon codes as in Theorem \ref{th CSC linearized RS codes}.

We conclude by giving an example of how to compute the bound (\ref{eq lower bound}). This method can easily be automated, and in Appendix \ref{app tables} we provide some tables with values for this bound obtained by a simple implementation in C++.

\begin{example} \label{ex lower bound}
Let $ q_0 = 2 $, $ m = 2 $, thus $ q_0^m = 4 $, $ s = 4 $, $ \ell = 15 $, thus $ n = 30 $. We first compute the $ 4 $-cyclotomic sets modulo $ 15 $, i.e., the sets of powers of $ a $ in the roots of $ m_i(x) = \prod_{h=0}^{d_i-1} (x - a^{q_0^{hm}}) $, for $ i = 1,2, \ldots , t $. Note that we do not need to know $ t $ nor $ m_i(x) $. These cyclotomic sets are $ \{ 0 \} $, $ \{ 1, 4 \} $, $ \{ 2, 8 \} $, $ \{ 3, 12 \} $, $ \{ 5 \} $, $ \{ 6, 9 \} $, $ \{ 7,13 \} $, $ \{ 10 \} $ and $ \{ 11, 14 \} $ (see \cite[Sec. 4.1]{pless}). Choose $ \delta = 5 $ and $ b = 1 $. Then the first components of the pairs in (\ref{eq pairs for SR-BCH codes}) are $ a^1 $, $ a^2 $, $ a^3 $ and $ a^4 $. Thus we may assume that $ k_1 = 2 $ and $ d_1 = 2 $ (corresponding to $ a^1 $ and $ a^4 $), $ k_2 = k_3 = 1 $ and $ d_2 = d_3 = 2 $ (corresponding to $ a^2 $ and $ a^3 $, respectively), being all other $ k_i = 0 $. Thus the lower bound (\ref{eq lower bound}) on the dimension of $ \mathcal{C}_\delta(a^b,\beta) $ yields 
$$ n - d_1 m - s k_2 - s k_3 = 18. $$ 
On the other hand, the Singleton bound (\ref{eq singleton bound}) and Delsarte's bound (\ref{eq lower bound delsarte}) provide $ 26 $ and $ 14 $ as upper and lower bounds on the dimension of the corresponding SR-BCH (see Table \ref{table primitive narrow-sense s=4}). In particular, we beat the previous known lower bound (\ref{eq lower bound delsarte}).
\end{example}

\subsection{Decoding SR-BCH Codes with respect to the Sum-Rank Metric} \label{subsec SR-BCH decoding}

We conclude this manuscript by noting that we may decode SR-BCH codes up to half their prescribed distance by considering them as subfield subcodes of an appropriate linearized Reed-Solomon code.

We start with the following result, whose proof follows the same lines as \cite[Th. 4]{secure-multishot}, and can be of interest by itself.

\begin{proposition} \label{prop duals of CSC lin RS codes}
Let the assumptions and notation be as in Definition \ref{def SR-BCH codes}. In such a case, there exist an integer $ c \geq 0 $ and a normal element $ \gamma \in \mathbb{F}_{q^m} $ of the finite-field extension $ \mathbb{F}_q \subseteq \mathbb{F}_{q^m} $ such that
$$ \mathcal{C}^\sigma_{\delta - 1}(\mathcal{A}, \mathcal{B})^\perp = \mathcal{C}^\sigma_{n - \delta + 1}(\mathcal{A}, \mathcal{B}^\prime), $$
where $ \mathcal{C}^\sigma_{n - \delta + 1}(\mathcal{A}, \mathcal{B}^\prime) \subseteq \mathbb{F}_{q^m}^n $ is as in Definition \ref{def lin RS codes}, for $ \mathcal{A} = \{  1, a, a^2, \ldots,  a^{\ell - 1} \} $ and $ \mathcal{B}^\prime = (\mathcal{B}^\prime_0, \mathcal{B}^\prime_1, \ldots, \mathcal{B}^\prime_{\ell-1}) $ given by $ \mathcal{B}^\prime_i = \{ \gamma a^{ci}, \sigma(\gamma) a^{ci}, \ldots, \sigma^{m-1}(\gamma) a^{ci} \} $, for $ i = 0,1, \ldots, \ell -1 $.
\end{proposition}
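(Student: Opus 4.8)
The plan is to reduce the statement to the duality theory of linearized Reed-Solomon codes and then to normalize the resulting dual data into the prescribed shape. First, recall (as already used in the proof of Theorem \ref{th SR-BCH bound}) that the dual of a linearized Reed-Solomon code is again a linearized Reed-Solomon code, of complementary dimension and built on the \emph{same} tuple of pairwise non-conjugate evaluation points; see \cite[Th. 5]{gsrws} and \cite{linearizedRS, caruso}. Applied to $ \mathcal{C}^\sigma_{\delta-1}(\mathcal{A},\mathcal{B}) $ with $ \mathcal{A}=\{1,a,a^2,\ldots,a^{\ell-1}\} $, this yields a tuple $ \widetilde{\mathcal{B}}=(\widetilde{\mathcal{B}}_0,\ldots,\widetilde{\mathcal{B}}_{\ell-1}) $ of $ \mathbb{F}_q $-linearly independent $ m $-subsets of $ \mathbb{F}_{q^m} $ with $ \mathcal{C}^\sigma_{\delta-1}(\mathcal{A},\mathcal{B})^\perp=\mathcal{C}^\sigma_{n-\delta+1}(\mathcal{A},\widetilde{\mathcal{B}}) $. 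Hence the entire content of the proposition is that $ \widetilde{\mathcal{B}} $ can be chosen of the form $ \widetilde{\mathcal{B}}_i=\{\gamma a^{ci},\sigma(\gamma)a^{ci},\ldots,\sigma^{m-1}(\gamma)a^{ci}\} $ for one integer $ c\geq 0 $ and one normal element $ \gamma\in\mathbb{F}_{q^m} $, simultaneously for all $ i=0,1,\ldots,\ell-1 $.

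Second, I would make the dual data explicit and track how the structure of $ \mathcal{B} $ propagates, following the same computation as in \cite[Th. 4]{secure-multishot}. Writing $ \mathcal{B}_i=a^{bi}\mathcal{B}_0 $ with $ \mathcal{B}_0=\{\beta,\sigma(\beta),\ldots,\sigma^{m-1}(\beta)\} $ a normal basis and $ a_i=a^i\in\mathbb{F}_q $, the explicit dual-basis description of linearized Reed-Solomon codes expresses $ \widetilde{\mathcal{B}}_i $ as $ \lambda_i $ times the trace-dual basis of $ \mathcal{B}_i $ over $ \mathbb{F}_q $, where $ \lambda_i\in\mathbb{F}_{q^m}\setminus\{0\} $ is a Moore-determinant-type correction factor attached to block $ i $. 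Two elementary observations then apply: (i) scaling a basis by $ a^{bi}\in\mathbb{F}_q $ scales its trace-dual basis by $ a^{-bi} $; and (ii) the trace-dual of a normal basis $ \{\sigma^h(\beta)\}_{h=0}^{m-1} $ is again a normal basis, necessarily of the form $ \{\sigma^h(\beta^\ast)\}_{h=0}^{m-1} $ for some normal $ \beta^\ast\in\mathbb{F}_{q^m} $ --- this follows by applying $ \sigma $ to the defining relations $ \mathrm{Tr}_{\mathbb{F}_{q^m}/\mathbb{F}_q}(\sigma^i(\beta)\,\gamma_j)=\delta_{i,j} $, which forces $ \gamma_{j+1}=\sigma(\gamma_j) $. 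Thus $ \widetilde{\mathcal{B}}_i=\lambda_i\,a^{-bi}\,\{\sigma^h(\beta^\ast)\}_{h=0}^{m-1} $.

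Third --- and I expect this to be the crux, exactly as in \cite[Th. 4]{secure-multishot} --- I would verify that the correction factors form a geometric progression in the block index, $ \lambda_i=\mu\,a^{ri} $ for a fixed $ \mu\in\mathbb{F}_{q^m}\setminus\{0\} $ and a fixed $ r\in\mathbb{Z} $ independent of $ i $. This is where one uses that the evaluation points are the geometric progression $ a_i=a^i $ (so their norms $ N_k(a_i)=a^{ik} $ and their mutual differences $ \prod_{j\neq i}(a^i-a^j) $ are, up to a factor independent of $ i $, geometric in $ i $) together with the consistent choice $ \mathcal{B}_i=a^{bi}\mathcal{B}_0 $ (so the Moore determinant of $ \mathcal{B}_i $ is $ a^{bmi} $ times that of $ \mathcal{B}_0 $); assembling these pieces through the explicit dual formula gives the geometric dependence of $ \lambda_i $ on $ i $. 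It then remains to absorb the constant $ \mu $: replacing every element of every block by its $ \mu $-multiple multiplies the $ t $-th row of the generator matrix of $ \mathcal{C}^\sigma_{n-\delta+1}(\mathcal{A},\widetilde{\mathcal{B}}) $ by $ \sigma^t(\mu) $ uniformly across all blocks, which is left multiplication by an invertible diagonal matrix and hence leaves the code unchanged; so we may drop $ \mu $ and are left with $ \widetilde{\mathcal{B}}_i=a^{(r-b)i}\{\sigma^h(\beta^\ast)\}_{h=0}^{m-1} $. Taking $ \gamma:=\beta^\ast $ (a normal element) and $ c $ equal to the residue of $ r-b $ modulo $ \ell $ with $ 0\leq c<\ell $ (so that $ a^{(r-b)i}=a^{ci} $ because $ a^\ell=1 $) yields $ \widetilde{\mathcal{B}}_i=\{\gamma a^{ci},\sigma(\gamma)a^{ci},\ldots,\sigma^{m-1}(\gamma)a^{ci}\} $, as claimed. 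The only genuine obstacle is the notation-heavy bookkeeping establishing that $ \lambda_i $ is geometric in $ i $; this is routine given the explicit dual formula for linearized Reed-Solomon codes and is precisely the argument carried out in \cite[Th. 4]{secure-multishot}.
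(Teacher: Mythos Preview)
Your approach is correct in outline but takes a genuinely different route from the paper. You work \emph{extrinsically}: you invoke an explicit dual formula for linearized Reed--Solomon codes (trace-dual bases together with Moore-determinant correction factors $\lambda_i$), then verify block by block that the structured choice $\mathcal{B}_i=a^{bi}\mathcal{B}_0$ forces $\lambda_i$ to be a geometric progression in $i$, and finally normalize by absorbing the common factor $\mu$ via the row-scaling identity $\mathcal{D}_{a_i}^t(\mu\beta)=\sigma^t(\mu)\,\mathcal{D}_{a_i}^t(\beta)$.

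The paper instead argues \emph{intrinsically}, using the CSC property rather than any explicit dual formula. It first passes to the one-dimensional dual $\mathcal{C}^\sigma_{n-1}(\mathcal{A},\mathcal{B})^\perp=\langle\boldsymbol\alpha\rangle$; since this dual is itself a CSC code (Theorem \ref{th CSC linearized RS codes} plus closure of CSC under duals), the shifting operators $\varphi$ and $\phi$ send $\boldsymbol\alpha$ to scalar multiples of itself, yielding scalars $\mu,\lambda$ with $\alpha^{(i-1)}_j=\mu\,\alpha^{(i)}_j$ and $\sigma(\alpha^{(i)}_{j-1})=\lambda\,\alpha^{(i)}_j$. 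From $\mu^\ell=1$ one reads off $c$; from $\lambda^{\frac{q^m-1}{q-1}}=1$ and Hilbert's Theorem~90 one absorbs $\lambda$ into a twist $\nu$ and obtains the normal element $\widetilde\gamma=\nu^{-1}\alpha^{(0)}_0$. Finally, applying $\sigma^{-u}$ for $u=0,\ldots,n-\delta$ to the single orthogonality relation produces the full generator matrix of $\mathcal{C}^\sigma_{n-\delta+1}(\mathcal{A},\mathcal{B}')$.

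What each buys: your approach is self-contained once one has the explicit dual description, and it makes the parameters $c,\gamma$ computable in principle; however the ``crux'' step (that $\lambda_i=\mu\,a^{ri}$) is exactly the bookkeeping you defer, and it implicitly re-derives the CSC invariance through determinant identities. The paper's approach bypasses all of that bookkeeping by exploiting the CSC symmetry directly---the geometric progression you need falls out in one line from $\varphi(\boldsymbol\alpha)\in\langle\boldsymbol\alpha\rangle$---and replaces the trace-dual/normal-basis discussion by a short Hilbert~90 argument. In this paper's context the intrinsic route is more natural, since the CSC structure is already established and is the main theme.
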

\begin{proof}
As in the proof of \cite[Th. 4]{secure-multishot}, there exist $ \alpha^{(i)}_0, $ $ \alpha^{(i)}_1, $ $ \ldots,$ $ \alpha^{(i)}_{m-1} \in \mathbb{F}_{q^m} $ that are linearly independent over $ \mathbb{F}_q $, for $ i = 0 ,1, \ldots , \ell - 1 $, such that
$$ \mathcal{C}^\sigma_{n - 1}(\mathcal{A}, \mathcal{B})^\perp = \left\langle \boldsymbol\alpha \right\rangle_{\mathbb{F}_{q^m}}, $$
where $ \boldsymbol\alpha = (\boldsymbol\alpha^{(0)}, \boldsymbol\alpha^{(1)}, \ldots, \boldsymbol\alpha^{(\ell - 1)}) \in \mathbb{F}_{q^m}^n $, and $ \boldsymbol\alpha^{(i)} = (\alpha^{(i)}_0, \alpha^{(i)}_1, \ldots, \alpha^{(i)}_{m-1}) \in \mathbb{F}_{q^m}^m $, for $ i = 0,1,\ldots, \ell - 1 $. It holds that $ \mathcal{C}^\sigma_{n - 1}(\mathcal{A}, \mathcal{B})^\perp $ is a CSC code, thus there exist elements $ \lambda, \mu \in \mathbb{F}_{q^m} $ such that
$$ \sigma(\alpha^{(i)}_{j-1}) = \lambda \alpha^{(i)}_j \quad \textrm{and} \quad \alpha^{(i-1)}_j = \mu \alpha^{(i)}_j, $$
for $ j = 0,1, \ldots, m-1 $ and $ i = 0,1, \ldots, \ell - 1 $. 

The reader may check that $ \mu^\ell = 1 $, $ \lambda^m = 1 $ and $ \sigma(\lambda) = \lambda $. From $ \mu^\ell = 1 $, we deduce that there exists an integer $ c \geq 0 $ such that $ \mu^{-1} = a^c $ ($ a $ is a primitive $ \ell $th root of unity). From $ \lambda^m = 1 $ and $ \sigma(\lambda) = \lambda $, we deduce that $ \lambda^{\frac{q^m-1}{q-1}} = 1 $. By Hilbert's Theorem 90 \cite[Ex. 2.33]{lidl}, there exists $ \nu \in \mathbb{F}_{q^m} \setminus \{ 0 \} $ such that $ \lambda = \sigma(\nu) / \nu $. Next, define $ \widetilde{\gamma} = \nu^{-1} \alpha^{(0)}_0 \in \mathbb{F}_{q^m} $. It holds that $ \widetilde{\gamma} $ is a normal element of the extension $ \mathbb{F}_q \subseteq \mathbb{F}_{q^m} $, since $ \alpha^{(0)}_0, \alpha^{(0)}_1, \ldots, \alpha^{(0)}_{m-1} \in \mathbb{F}_{q^m} $ are linearly independent over $ \mathbb{F}_q $. We also deduce that
\begin{equation}
\sum_{i = 0}^{\ell - 1} \sum_{j = 0}^{m-1} \sigma^j(\widetilde{\gamma}) a^{ci} a^{li} \sigma^{l+i}(\beta) = 0,
\label{eq proof dual CSC lin RS code}
\end{equation}
for $ l = 0,1, \ldots, n-2 $. Finally, let $ \gamma = \sigma^{-n+\delta}(\widetilde{\gamma}) $, which is also a normal element of the extension $ \mathbb{F}_q \subseteq \mathbb{F}_{q^m} $. Applying $ \sigma^{-u} $ on (\ref{eq proof dual CSC lin RS code}), for $ u = 0,1, \ldots, n-\delta $, we conclude, as in the proof of \cite[Th. 4]{secure-multishot}, that $ \mathcal{C}^\sigma_{\delta - 1}(\mathcal{A}, \mathcal{B})^\perp = \mathcal{C}^\sigma_{n - \delta + 1}(\mathcal{A}, \mathcal{B}^\prime) $, and we are done.
\end{proof}

A general form of dual codes of linearized Reed-Solomon codes was recently given in \cite[Th. 3.2.10]{caruso-dual}. We have decided to include Proposition \ref{prop duals of CSC lin RS codes} since it explicitly treats the particular case of CSC linearized Reed-Solomon codes. Furthermore, we have presented a direct proof that does not need the general theory in \cite{caruso-dual}. 

We conclude that $ \mathcal{C}_\delta (a^b, \beta) = \mathcal{C}_{n-\delta+1}^\sigma(\mathcal{A}, \mathcal{B}^\prime) \cap \mathbb{F}^n $, with assumptions and notation as in Definition \ref{def SR-BCH codes} and Proposition \ref{prop duals of CSC lin RS codes}. Now, we may simply use a decoder for $ \mathcal{C}_{n-\delta+1}^\sigma(\mathcal{A}, \mathcal{B}^\prime) \subseteq \mathbb{F}_{q^m}^n $ with respect to the sum-rank metric for $ \mathbb{F}_q \subseteq \mathbb{F}_{q^m} $, to decode $ \mathcal{C}_\delta (a^b, \beta) \subseteq \mathbb{F}^n $ with respect to the sum-rank metric for $ \mathbb{F}_{q_0} \subseteq \mathbb{F}_{q_0^m} $. 

To see why this approach works, observe that, if 
$$ \mathbf{y} = \mathbf{c} + \mathbf{e} \in \mathbb{F}^n, $$
where $ \mathbf{c} \in \mathcal{C}_\delta (a^b, \beta) $, and $ \mathbf{e} \in \mathbb{F}^n $ is such that $ {\rm wt}^0_{SR}(\mathbf{e}) \leq (\delta - 1)/2 $, then
$$ {\rm wt}_{SR}(\mathbf{e}) \leq {\rm wt}^0_{SR}(\mathbf{e}) \leq \frac{\delta - 1}{2} $$
(see (\ref{eq sum-rank weights in this work}) for the notation). Thus a decoder for $ \mathcal{C}_{n-\delta+1}^\sigma(\mathcal{A}, \mathcal{B}^\prime) $ that corrects up to $ \left\lfloor (\delta - 1)/2 \right\rfloor $ errors with respect to the sum-rank metric $ {\rm wt}_{SR} $ yields $ \mathbf{c} \in \mathcal{C}_{n-\delta+1}^\sigma(\mathcal{A}, \mathcal{B}^\prime) \cap \mathbb{F}^n $, and we are done. 

This approach yields the same complexity as the corresponding decoder of $ \mathcal{C}_{n-\delta+1}^\sigma(\mathcal{A}, \mathcal{B}^\prime) $, also over the larger field $ \mathbb{F}_{q^m} $. Examples of such decoders include \cite{boucher-skew}, \cite[Sec. V]{secure-multishot} and \cite{sven-sum-rank}. They are all extensions of the classical Welch-Berlekamp decoder \cite{welch-berlekamp}, given in decreasing order of computational complexity.

\section{Conclusion and Open Problems} \label{sec conclusion}

In this work, we have introduced the novel families of cyclic-skew-cyclic codes and sum-rank BCH codes. We have studied their structure, obtaining: (1) The minimal generator skew polynomial of a CSC code, with its corresponding generator matrix, (2) the defining set of a CSC code, after carefully considering different evaluation maps, (3) obtained a lower bound (sum-rank BCH bound) on the minimum sum-rank distance of certain CSC codes, and (4) the defining set of sum-rank BCH codes from the pairs in their definition. We have also seen that sum-rank BCH codes can be decoded up to half their prescribed distance by considering them as subfield subcodes of linearized Reed-Solomon codes.

Using their prescribed distance (Theorem \ref{th SR-BCH bound}) and a lower bound on their dimensions based on their defining sets (Theorem \ref{th lower bound}), we obtained in Appendix \ref{app tables} tables of parameters of sum-rank BCH codes, for the finite field $ \mathbb{F}_4 $ and for $ m = 2 $, beating previously known codes.

We now list some open problems for future research:

1) We have made several assumptions on the parameters of CSC codes throughout different sections of the manuscript (see, e.g., the beginning of Section \ref{sec SR BCH codes}). It would be interesting to study CSC codes lifting one or more of these assumptions.

2) The sum-rank BCH bound (Theorem \ref{th SR-BCH bound}) may admit extensions such as the Hartmann-Tzeng bound \cite[Th. 4.5.6]{pless} or the van Lindt-Wilson technique \cite[Th. 4.5.10]{pless}. During the review of this manuscript, this problem was partially solved in \cite{alfarano}, where Hartmann-Tzeng bounds and Roos bounds were given for the sum-rank metric.

3) It would be of interest to find faster decoders of sum-rank BCH codes as in the classical cyclic case, such as the Peterson-Gorenstein-Zierler decoder or the Berlekamp-Massey decoder \cite[Sec. 5.4]{pless}.

4) It would be of interest to find better estimates on the dimension of a sum-rank BCH code than in Theorem \ref{th lower bound}, or better, a simpler formula to exactly compute such a dimension than in Theorem \ref{th SR-BCH structure}.

\section*{Acknowledgement}

The author wishes to thank the anonymous reviewers for their very helpful comments, and especially for the improvement and simplification of Lemma \ref{lemma non-conj over F_q}.

\ifCLASSOPTIONcaptionsoff
  \newpage
\fi



\bibliographystyle{IEEEtran}

\begin{thebibliography}{10}
\providecommand{\url}[1]{#1}
\csname url@samestyle\endcsname
\providecommand{\newblock}{\relax}
\providecommand{\bibinfo}[2]{#2}
\providecommand{\BIBentrySTDinterwordspacing}{\spaceskip=0pt\relax}
\providecommand{\BIBentryALTinterwordstretchfactor}{4}
\providecommand{\BIBentryALTinterwordspacing}{\spaceskip=\fontdimen2\font plus
\BIBentryALTinterwordstretchfactor\fontdimen3\font minus
  \fontdimen4\font\relax}
\providecommand{\BIBforeignlanguage}[2]{{%
\expandafter\ifx\csname l@#1\endcsname\relax
\typeout{** WARNING: IEEEtran.bst: No hyphenation pattern has been}%
\typeout{** loaded for the language `#1'. Using the pattern for}%
\typeout{** the default language instead.}%
\else
\language=\csname l@#1\endcsname
\fi
#2}}
\providecommand{\BIBdecl}{\relax}
\BIBdecl

\bibitem{multishot}
R.~W. N{\'o}brega and B.~F. Uch{\^o}a-Filho, ``Multishot codes for network
  coding using rank-metric codes,'' in \emph{Proc. 2010 Third IEEE Int.\
  Workshop on Wireless Network Coding}, 2010, pp. 1--6.

\bibitem{secure-multishot}
U.~Mart{\'i}nez-Pe{\~n}as and F.~R. {Kschischang}, ``Reliable and secure
  multishot network coding using linearized reed-solomon codes,'' \emph{IEEE
  Trans.\ Info.\ Theory}, vol.~65, no.~8, pp. 4785--4803, 2019.

\bibitem{space-time-kumar}
H.-F. Lu and P.~V. Kumar, ``A unified construction of space-time codes with
  optimal rate-diversity tradeoff,'' \emph{IEEE Trans.\ Info.\ Theory},
  vol.~51, no.~5, pp. 1709--1730, May 2005.

\bibitem{mohannad}
M.~{Shehadeh} and F.~R. {Kschischang}, ``Rate-diversity optimal multiblock
  space-time codes via sum-rank codes,'' in \emph{Proc.\ IEEE Int.\ Symp.
  Info.\ Theory}, 2020, pp. 3055--3060.

\bibitem{universal-lrc}
U.~Mart{\'i}nez-Pe{\~n}as and F.~R. {Kschischang}, ``Universal and dynamic
  locally repairable codes with maximal recoverability via sum-rank codes,''
  \emph{IEEE Trans.\ Info.\ Theory}, vol.~65, no.~12, pp. 7790--7805, 2019.

\bibitem{compartment}
Q.~Chen, C.~Tang, and Z.~Lin, ``Compartmented secret sharing schemes and
  locally repairable codes,'' \emph{IEEE Trans.\ Communications}, vol.~68,
  no.~10, pp. 5976--5987, 2020.

\bibitem{pirlrc}
U.~Mart{\'i}nez-Pe{\~n}as, ``Private information retrieval from locally
  repairable databases with colluding servers,'' in \emph{Proc.\ IEEE Int.\
  Symp. Info.\ Theory}, 2019, pp. 1057--1061.

\bibitem{mceliece-crypto}
R.~J. {McEliece}, ``{A Public-Key Cryptosystem Based On Algebraic Coding
  Theory},'' \emph{Deep Space Network Progress Report}, vol.~44, pp. 114--116,
  1978.

\bibitem{roth}
R.~M. Roth, ``Maximum-rank array codes and their application to crisscross
  error correction,'' \emph{IEEE Trans.\ Info.\ Theory}, vol.~37, no.~2, pp.
  328--336, 1991.

\bibitem{hamming}
R.~W. Hamming, ``Error detecting and error correcting codes,'' \emph{The Bell
  System Technical Journal}, vol.~29, no.~2, pp. 147--160, April 1950.

\bibitem{delsartebilinear}
P.~Delsarte, ``Bilinear forms over a finite field, with applications to coding
  theory,'' \emph{J. Combinatorial Theory, S. A}, vol.~25, no.~3, pp. 226--241,
  1978.

\bibitem{gabidulin}
E.~M. Gabidulin, ``Theory of codes with maximum rank distance,'' \emph{Prob.
  Info. Transmission}, vol.~21, 1985.

\bibitem{wachter}
A.~Wachter, V.~R. Sidorenko, M.~Bossert, and V.~V. Zyablov, ``On (partial) unit
  memory codes based on {G}abidulin codes,'' \emph{Prob. Info. Transmission},
  vol.~47, no.~2, pp. 117--129, 2011.

\bibitem{wachter-convolutional}
A.~Wachter-Zeh, M.~Stinner, and V.~Sidorenko, ``Convolutional codes in rank
  metric with application to random network coding,'' \emph{IEEE Trans.\ Info.\
  Theory}, vol.~61, no.~6, pp. 3199--3213, 2015.

\bibitem{mahmood-convolutional}
R.~Mahmood, A.~Badr, and A.~Khisti, ``Convolutional codes with maximum column
  sum rank for network streaming,'' \emph{IEEE Trans.\ Info.\ Theory}, vol.~62,
  no.~6, pp. 3039--3052, 2016.

\bibitem{mrd-convolutional}
D.~Napp, R.~Pinto, J.~Rosenthal, and P.~Vettori, ``{MRD} rank metric
  convolutional codes,'' in \emph{Proc.\ IEEE Int.\ Symp. Info.\ Theory}, 2017,
  pp. 2766--2770.

\bibitem{mrd-convolutional2}
------, ``Faster decoding of rank metric convolutional codes,'' in
  \emph{International Symposium on Mathematical Theory of Networks and
  Systems}, 2018.

\bibitem{napp-sidorenko}
D.~Napp, R.~Pinto, and V.~Sidorenko, ``Concatenation of convolutional codes and
  rank metric codes for multi-shot network coding,'' \emph{Des., Codes,
  Crypto.}, vol.~86, no.~2, pp. 303--318, Feb. 2018.

\bibitem{linearizedRS}
U.~Mart{\'i}nez-Pe{\~n}as, ``Skew and linearized {R}eed–{S}olomon codes and
  maximum sum rank distance codes over any division ring,'' \emph{J. Algebra},
  vol. 504, pp. 587--612, 2018.

\bibitem{caruso}
X.~Caruso and A.~Durand, ``{R}eed-{S}olomon-{G}abidulin codes,'' 2018,
  {P}reprint: arXiv:1812.09147.

\bibitem{reed-solomon}
I.~S. Reed and G.~Solomon, ``Polynomial codes over certain finite fields,''
  \emph{Journal of the Society for Industrial and Applied Mathematics}, vol.~8,
  no.~2, pp. 300--304, 1960.

\bibitem{delsarte}
P.~Delsarte, ``On subfield subcodes of modified {R}eed-{S}olomon codes,''
  \emph{IEEE Trans.\ Info.\ Theory}, vol.~21, no.~5, pp. 575--576, Sep. 1975.

\bibitem{loidreau-subfield}
E.~M. {Gabidulin} and P.~{Loidreau}, ``On subcodes of codes in rank metric,''
  in \emph{Proc.\ IEEE Int.\ Symp. Info.\ Theory}, 2005, pp. 121--123.

\bibitem{macwilliamsbook}
F.~J. MacWilliams and N.~J.~A. Sloane, \emph{The Theory of Error-Correcting
  Codes}.\hskip 1em plus 0.5em minus 0.4em\relax North-Holland Mathematical
  Library, 1983.

\bibitem{pless}
W.~C. Huffman and V.~Pless, \emph{Fundamentals of error-correcting
  codes}.\hskip 1em plus 0.5em minus 0.4em\relax Cambridge University Press,
  Cambridge, 2003.

\bibitem{skewcyclic1}
D.~Boucher, W.~Geiselmann, and F.~Ulmer,
  ``\BIBforeignlanguage{English}{Skew-cyclic codes},''
  \emph{\BIBforeignlanguage{English}{Appl. Alg. Engin. Comm. Comput.}},
  vol.~18, no.~4, pp. 379--389, 2007.

\bibitem{blake}
I.~F. Blake, ``Codes over certain rings,'' \emph{Information and Control},
  vol.~20, no.~4, pp. 396--404, 1972.

\bibitem{greferath-cyclic}
M.~Greferath, ``Cyclic codes over finite rings,'' \emph{Discrete Mathematics},
  vol. 177, no.~1, pp. 273--277, 1997.

\bibitem{hirano}
Y.~Hirano, ``On finite rings over which all free codes are splitting,''
  \emph{Math. J. Okayama Univ.}, vol.~58, pp. 133–--140, 2016.

\bibitem{greferath-rings}
M.~Greferath, ``An introduction to ring-linear coding theory,'' in \emph{Sala
  M., Sakata S., Mora T., Traverso C., Perret L. (eds.) Gr{\"o}bner Bases,
  Coding, and Cryptography}.\hskip 1em plus 0.5em minus 0.4em\relax Springer,
  Berlin, Heidelberg, 2009, pp. 219--238.

\bibitem{skewcyclic4}
D.~Boucher and F.~U. P.~Sol{\'e}, ``Skew constacyclic codes over {G}alois
  rings,'' \emph{Adv. Math. Communications}, vol.~2, no.~3, pp. 273--292, 2008.

\bibitem{skewcyclic5}
P.~U. S.~Jitman, S.~Ling, ``Skew constacyclic codes over finite chain rings,''
  \emph{Adv. Math. Communications}, vol.~6, no.~1, pp. 39--63, 2012.

\bibitem{skewcyclic6}
J.~Gao, ``Skew cyclic codes over $ \mathbb{F}_p + v \mathbb{F}_p $,''
  \emph{Journal of applied mathematics and informatics}, vol.~31, no. 3-4, pp.
  337--342, 2013.

\bibitem{skewcyclic7}
M.~Shi, T.~Yao, and P.~Sol{\'e}, ``Skew cyclic codes over a non-chain ring,''
  \emph{Chinese Journal of Electronics}, vol.~26, no.~3, pp. 544--547, 2017.

\bibitem{piret}
P.~{Piret}, ``Structure and constructions of cyclic convolutional codes,''
  \emph{IEEE Trans.\ Info.\ Theory}, vol.~22, no.~2, pp. 147--155, 1976.

\bibitem{heide-schmale}
H.~Gluesing-Luerssen and W.~Schmale, ``On cyclic convolutional codes,''
  \emph{Acta Applicandae Mathematicae}, vol.~82, pp. 183–--237, 2004.

\bibitem{bose}
R.~C. Bose and D.~K. Ray-Chaudhuri, ``On a class of error correcting binary
  group codes,'' \emph{Information and Control}, vol.~3, no.~1, pp. 68--79,
  1960.

\bibitem{hocquenghem}
A.~Hocquenghem, ``Codes correcteurs d'erreurs,'' \emph{Chiffres}, vol.~2, pp.
  147--156, 1959.

\bibitem{skewcyclic3}
L.~Chaussade, P.~Loidreau, and F.~Ulmer, ``\BIBforeignlanguage{English}{Skew
  codes of prescribed distance or rank},''
  \emph{\BIBforeignlanguage{English}{Des., Codes, Crypto.}}, vol.~50, no.~3,
  pp. 267--284, 2009.

\bibitem{lidl}
R.~Lidl and H.~Niederreiter, \emph{Finite Fields}.\hskip 1em plus 0.5em minus
  0.4em\relax Amsterdam: Encyclopedia of Mathematics and its Applications.
  Addison-Wesley, 1983, vol.~20.

\bibitem{ore}
O.~Ore, ``Theory of non-commutative polynomials,'' \emph{Annals of Mathematics
  (2)}, vol.~34, no.~3, pp. 480--508, 1933.

\bibitem{lang-under}
S.~Lang, \emph{Undergraduate Algebra}, 3rd~ed.\hskip 1em plus 0.5em minus
  0.4em\relax Springer, New York, NY, 2005.

\bibitem{lam-book}
T.~Y. Lam, \emph{A First Course in Noncommutative Rings}.\hskip 1em plus 0.5em
  minus 0.4em\relax Springer, New York, NY, 1991, vol. GTM, 131.

\bibitem{heide-cyclic}
H.~Gluesing-Luerssen and F.-L. Tsang, ``A matrix ring description for cyclic
  convolutional codes,'' \emph{Adv. Math. Communications}, vol.~2, no.~1, pp.
  55--81, 2008.

\bibitem{ontheroots}
U.~Mart{\'i}nez-Pe{\~n}as, ``On the roots and minimum rank distance of skew
  cyclic codes,'' \emph{Des., Codes, Crypto.}, vol.~83, pp. 639--660, 2017.

\bibitem{lam}
T.~Y. Lam, ``A general theory of {V}andermonde matrices,'' \emph{Expositiones
  Mathematicae}, vol.~4, pp. 193--215, 1986.

\bibitem{lam-leroy}
T.~Y. Lam and A.~Leroy, ``Vandermonde and {W}ronskian matrices over division
  rings,'' \emph{J. Algebra}, vol. 119, no.~2, pp. 308--336, 1988.

\bibitem{gsrws}
U.~Mart{\'i}nez-Pe{\~{n}}as, ``Theory of supports for linear codes endowed with
  the sum-rank metric,'' \emph{Des., Codes, Crypto.}, vol.~87, pp. 2295--2320,
  Feb 2019.

\bibitem{caruso-dual}
X.~Caruso, ``Residues of skew rational functions and linearized {G}oppa
  codes,'' 2019, {P}reprint: arXiv:1908.08430.

\bibitem{boucher-skew}
D.~Boucher, ``An algorithm for decoding skew {R}eed-{S}olomon codes with
  respect to the skew metric,'' \emph{Des., Codes, Crypto.}, 2020.

\bibitem{sven-sum-rank}
H.~Bartz, T.~Jerkovits, S.~Puchinger, and J.~Rosenkilde, ``Fast decoding of
  codes in the rank, subspace, and sum-rank metric,'' \emph{IEEE Trans.\ Info.\
  Theory}, pp. 1--1, 2021.

\bibitem{welch-berlekamp}
E.~R. Berlekamp and L.~Welch, ``Error correction of algebraic block codes,''
  U.S. Patent 4,633,470, 1986.

\bibitem{alfarano}
G.~N. Alfarano, F.~J. Lobillo, A.~Neri, and A.~Wachter-Zeh, ``Sum-rank product
  codes and bounds on the minimum distance,'' 2021, {P}reprint:
  arXiv:2105.15086.

\end{thebibliography}
%


\begin{IEEEbiographynophoto}{Umberto Mart{\'i}nez-Pe\~{n}as}
(S’15–M’18) received the B.Sc. and M.Sc. degrees in Mathematics from the University of Valladolid, Spain, in 2013 and 2014, respectively, and he received the Ph.D. degree in Mathematics from Aalborg University, Denmark, in 2017. In 2018 and 2019, he was a Postdoctoral Fellow at the Department of Electrical and Computer Engineering at the University of Toronto, Canada. He is currently a Ma{\^i}tre Assistant at the Institutes of Computer Science and Mathematics at the University of Neuch{\^a}tel, Switzerland. He was awarded an Elite Research Travel Grant (EliteForsk Rejsestipendium) from the Danish Ministery of Education and Science in 2016, the Ph.D. Dissertation Prize by the Danish Academy of Natural Sciences (Danmarks Naturvidenskabelige Akademi, DNA) in 2018, and a Vicent Caselles Prize for Research in Mathematics by the Royal Spanish Mathematical Society (Real Sociedad Matemática Española, RSME) in 2019. His research interests include algebra, algebraic coding, distributed storage, network coding, and information-theoretical security and privacy.
\end{IEEEbiographynophoto}

\appendices

\section{Tables for bound (\ref{eq lower bound}) on some narrow-sense primitive SR-BCH codes} \label{app tables} 

In this appendix, we provide tables of values for the Singleton upper bound (\ref{eq singleton bound}), our lower bound (\ref{eq lower bound}), and Delsarte's lower bound (\ref{eq lower bound delsarte}). All of these are bounds on the dimension $ \dim_\mathbb{F}(\mathcal{C}) $ of a SR-BCH code $ \mathcal{C} \subseteq \mathbb{F}^n $. We only consider the parameters $ m = 2 $, $ q_0 = 2 $, and different values of $ \ell $, $ s $ and $ n $. If we represent codewords as lists of matrices, as in (\ref{eq matrix representation codewords}), then we may see the SR-BCH codes considered in this appendix as subsets
$$ \mathcal{C} \subseteq \left( \mathbb{F}_2^{2 \times 2} \right)^\ell, $$
but being linear over $ \mathbb{F}_4 $ seen as subsets $ \mathcal{C} \subseteq \mathbb{F}_4^n $, $ n = 2 \ell $. We consider $ \ell = q-1 = 2^s - 1 $, extending the concept of \textit{primitive} BCH codes \cite[Sec. 5.1]{pless}, and $ b = 0,1 $, being the latter case $ b=1 $ an extension of \textit{narrow-sense} BCH codes \cite[Sec. 5.1]{pless}. Note also that, since $ \ell = q-1 $ is odd, then $ \ell $ and $ m = 2 $ are coprime, and all assumptions at the beginning of Section \ref{sec SR BCH codes} are satisfied. 

Bold numbers on the column corresponding to Theorem \ref{th lower bound} (\ref{eq lower bound}) mean that the bound (\ref{eq lower bound}) beats (\ref{eq lower bound delsarte}). It is important to notice that the considered Singleton bound (\ref{eq singleton bound}) is with respect the prescribed distance $ \delta $, whereas the exact minimum sum-rank distance $ d $ of the corresponding sum-rank BCH code may be strictly higher. For this reason, it may happen that $ n - d + 1 < n - \delta + 1 $ in some cases. Thus the column corresponding to (\ref{eq singleton bound}) may not represent the tightest Singleton bound for the corresponding dimension.

\begin{table*}
\caption{$ q_0 = 2 $, $ m = 2 $, $ q_0^m = 4 $, $ s = 1 $, $ \ell = 1 $, $ n = 2 $.}
\centering
\begin{tabular}{c|c||c|c|c}
\hline
&&&&\\[-0.8em]
$ \delta $ & $ b $ & Singleton (\ref{eq singleton bound}) & Theorem \ref{th lower bound} (\ref{eq lower bound}) & Delsarte (\ref{eq lower bound delsarte}) \\
\hline\hline
 &&&& \\[-0.8em]
 2 & 0 & 1 & 1 & 1 \\
\hline
 &&&& \\[-0.8em]
 2 & 1 & 1 & 1 & 1 \\
\hline
\end{tabular}
\end{table*}

\begin{table*}
\caption{$ q_0 = 2 $, $ m = 2 $, $ q_0^m = 4 $, $ s = 2 $, $ \ell = 3 $, $ n = 6 $.}
\centering
\begin{tabular}{c|c||c|c|c}
\hline
&&&&\\[-0.8em]
$ \delta $ & $ b $ & Singleton (\ref{eq singleton bound}) & Theorem \ref{th lower bound} (\ref{eq lower bound}) & Delsarte (\ref{eq lower bound delsarte}) \\
\hline\hline
 &&&& \\[-0.8em]
 2 & 0 & 5 & 4 & 4 \\
\hline
 &&&& \\[-0.8em]
 2 & 1 & 5 & 4 & 4 \\
\hline
 &&&& \\[-0.8em]
 3 & 0 & 4 & 2 & 2 \\
\hline
 &&&& \\[-0.8em]
 3 & 1 & 4 & 2 & 2 \\
\hline
\end{tabular}
\end{table*}

\begin{table*}
\caption{$ q_0 = 2 $, $ m = 2 $, $ q_0^m = 4 $, $ s = 3 $, $ \ell = 7 $, $ n = 14 $.}
\centering
\begin{tabular}{c|c||c|c|c}
\hline
&&&&\\[-0.8em]
$ \delta $ & $ b $ & Singleton (\ref{eq singleton bound}) & Theorem \ref{th lower bound} (\ref{eq lower bound}) & Delsarte (\ref{eq lower bound delsarte}) \\
\hline\hline
 &&&& \\[-0.8em]
 2 & 0 & 13 & \textbf{12} & 11 \\
\hline
 &&&& \\[-0.8em]
 2 & 1 & 13 & 11 & 11 \\
\hline
 &&&& \\[-0.8em]
 3 & 0 & 12 & \textbf{9} & 8 \\
\hline
 &&&& \\[-0.8em]
 3 & 1 & 12 & 8 & 8 \\
\hline
 &&&& \\[-0.8em]
 4 & 0 & 11 & \textbf{6} & 5 \\
\hline
 &&&& \\[-0.8em]
 4 & 1 & 11 & 5 & 5 \\
\hline
 &&&& \\[-0.8em]
 5 & 0 & 10 & \textbf{3} & 2 \\
\hline
 &&&& \\[-0.8em]
 5 & 1 & 10 & \textbf{5} & 2 \\
\hline
 &&&& \\[-0.8em]
 6 & 0 & 9 & \textbf{3} & -1 \\
\hline
 &&&& \\[-0.8em]
 6 & 1 & 9 & \textbf{2} & -1 \\
\hline
 &&&& \\[-0.8em]
 7 & 0 & 8 & \textbf{0} & -4 \\
\hline
 &&&& \\[-0.8em]
 7 & 1 & 8 & \textbf{2} & -4 \\
\hline
\end{tabular}
\end{table*}

\begin{table*}
\caption{$ q_0 = 2 $, $ m = 2 $, $ q_0^m = 4 $, $ s = 4 $, $ \ell = 15 $, $ n = 30 $.}
\centering
\begin{tabular}{c|c||c|c|c} 
\hline
&&&&\\[-0.8em]
$ \delta $ & $ b $ & Singleton (\ref{eq singleton bound}) & Theorem \ref{th lower bound} (\ref{eq lower bound}) & Delsarte (\ref{eq lower bound delsarte}) \\
\hline\hline
 &&&& \\[-0.8em]
 2 & 0 & 29 & \textbf{28} & 26 \\
\hline
 &&&& \\[-0.8em]
 2 & 1 & 29 & 26 & 26 \\
\hline
 &&&& \\[-0.8em]
 3 & 0 & 28 & \textbf{24} & 22 \\
\hline
 &&&& \\[-0.8em]
 3 & 1 & 28 & 22 & 22 \\
\hline
 &&&& \\[-0.8em]
 4 & 0 & 27 & \textbf{20} & 18 \\
\hline
 &&&& \\[-0.8em]
 4 & 1 & 27 & 18 & 18 \\
\hline
 &&&& \\[-0.8em]
 5 & 0 & 26 & \textbf{16} & 14 \\
\hline
 &&&& \\[-0.8em]
 5 & 1 & 26 & \textbf{18} & 14 \\
\hline
 &&&& \\[-0.8em]
 6 & 0 & 25 & \textbf{16} & 10 \\
\hline
 &&&& \\[-0.8em]
 6 & 1 & 25 & \textbf{16} & 10 \\
\hline
 &&&& \\[-0.8em]
 7 & 0 & 24 & \textbf{14} & 6 \\
\hline
 &&&& \\[-0.8em]
 7 & 1 & 24 & \textbf{12} & 6 \\
\hline
 &&&& \\[-0.8em]
 8 & 0 & 23 & \textbf{10} & 2 \\
\hline
 &&&& \\[-0.8em]
 8 & 1 & 23 & \textbf{8} & 2 \\
\hline
 &&&& \\[-0.8em]
 9 & 0 & 22 & \textbf{6} & -2 \\
\hline
 &&&& \\[-0.8em]
 9 & 1 & 22 & \textbf{8} & -2 \\
\hline
 &&&& \\[-0.8em]
 10 & 0 & 21 & \textbf{6} & -6 \\
\hline
 &&&& \\[-0.8em]
 10 & 1 & 21 & \textbf{8} & -6 \\
\hline
 &&&& \\[-0.8em]
 11 & 0 & 20 & \textbf{6} & -10 \\
\hline
 &&&& \\[-0.8em]
 11 & 1 & 20 & \textbf{6} & -10 \\
\hline
 &&&& \\[-0.8em]
 12 & 0 & 19 & \textbf{4} & -14 \\
\hline
 &&&& \\[-0.8em]
 12 & 1 & 19 & \textbf{2} & -14 \\
\hline
 &&&& \\[-0.8em]
 14 & 0 & 17 & \textbf{0} & -22 \\
\hline
 &&&& \\[-0.8em]
 14 & 1 & 17 & \textbf{2} & -22 \\
\hline
\end{tabular}
\label{table primitive narrow-sense s=4}
\end{table*}

\begin{table*}
\caption{$ q_0 = 2 $, $ m = 2 $, $ q_0^m = 4 $, $ s = 5 $, $ \ell = 31 $, $ n = 62 $.}
\centering
\begin{tabular}{c|c||c|c|c}
\hline
&&&&\\[-0.8em]
$ \delta $ & $ b $ & Singleton (\ref{eq singleton bound}) & Theorem \ref{th lower bound} (\ref{eq lower bound}) & Delsarte (\ref{eq lower bound delsarte}) \\
\hline\hline
 &&&& \\[-0.8em]
 2 & 0 & 61 & \textbf{60} & 57 \\
\hline
 &&&& \\[-0.8em]
 2 & 1 & 61 & 57 & 57 \\
\hline
 &&&& \\[-0.8em]
 3 & 0 & 60 & \textbf{55} & 52 \\
\hline
 &&&& \\[-0.8em]
 3 & 1 & 60 & 52 & 52 \\
\hline
 &&&& \\[-0.8em]
 4 & 0 & 59 & \textbf{50} & 47 \\
\hline
 &&&& \\[-0.8em]
 4 & 1 & 59 & 47 & 47 \\
\hline
 &&&& \\[-0.8em]
 5 & 0 & 58 & \textbf{45} & 42 \\
\hline
 &&&& \\[-0.8em]
 5 & 1 & 58 & \textbf{47} & 42 \\
\hline
 &&&& \\[-0.8em]
 6 & 0 & 57 & \textbf{45} & 37 \\
\hline
 &&&& \\[-0.8em]
 6 & 1 & 57 & \textbf{42} & 37 \\
\hline
 &&&& \\[-0.8em]
 7 & 0 & 56 & \textbf{40} & 32 \\
\hline
 &&&& \\[-0.8em]
 7 & 1 & 56 & \textbf{37} & 32 \\
\hline
 &&&& \\[-0.8em]
 8 & 0 & 55 & \textbf{35} & 27 \\
\hline
 &&&& \\[-0.8em]
 8 & 1 & 55 & \textbf{32} & 27 \\
\hline
 &&&& \\[-0.8em]
 10 & 0 & 53 & \textbf{30} & 17 \\
\hline
 &&&& \\[-0.8em]
 10 & 1 & 53 & \textbf{27} & 17 \\
\hline
 &&&& \\[-0.8em]
 12 & 0 & 51 & \textbf{25} & 7 \\
\hline
 &&&& \\[-0.8em]
 12 & 1 & 51 & \textbf{22} & 7 \\
\hline
 &&&& \\[-0.8em]
 14 & 0 & 49 & \textbf{20} & -3 \\
\hline
 &&&& \\[-0.8em]
 14 & 1 & 49 & \textbf{17} & -3 \\
\hline
 &&&& \\[-0.8em]
 18 & 0 & 45 & \textbf{5} & -23 \\
\hline
 &&&& \\[-0.8em]
 18 & 1 & 45 & \textbf{7} & -23 \\
\hline
 &&&& \\[-0.8em]
 22 & 0 & 41 & \textbf{5} & -43 \\
\hline
 &&&& \\[-0.8em]
 22 & 1 & 41 & \textbf{7} & -43 \\
\hline
 &&&& \\[-0.8em]
 26 & 0 & 37 & \textbf{0} & -63 \\
\hline
 &&&& \\[-0.8em]
 26 & 1 & 37 & \textbf{2} & -63 \\
\hline
 &&&& \\[-0.8em]
 30 & 0 & 33 & \textbf{0} & -83 \\
\hline
 &&&& \\[-0.8em]
 30 & 1 & 33 & \textbf{2} & -83 \\
\hline
\end{tabular}
\end{table*}

\begin{table*}
\caption{$ q_0 = 2 $, $ m = 2 $, $ q_0^m = 4 $, $ s = 6 $, $ \ell = 63 $, $ n = 126 $.}
\centering
\begin{tabular}{c|c||c|c|c}
\hline
&&&&\\[-0.8em]
$ \delta $ & $ b $ & Singleton (\ref{eq singleton bound}) & Theorem \ref{th lower bound} (\ref{eq lower bound}) & Delsarte (\ref{eq lower bound delsarte}) \\
\hline\hline
 &&&& \\[-0.8em]
 2 & 0 & 125 & \textbf{124} & 120 \\
\hline
 &&&& \\[-0.8em]
 2 & 1 & 125 & 120 & 120 \\
\hline
 &&&& \\[-0.8em]
 3 & 0 & 124 & \textbf{118} & 114 \\
\hline
 &&&& \\[-0.8em]
 3 & 1 & 124 & 114 & 114 \\
\hline
 &&&& \\[-0.8em]
 4 & 0 & 123 & \textbf{112} & 108 \\
\hline
 &&&& \\[-0.8em]
 4 & 1 & 123 & 108 & 108 \\
\hline
 &&&& \\[-0.8em]
 5 & 0 & 122 & \textbf{106} & 102 \\
\hline
 &&&& \\[-0.8em]
 5 & 1 & 122 & \textbf{108} & 102 \\
\hline
 &&&& \\[-0.8em]
 6 & 0 & 121 & \textbf{106} & 96 \\
\hline
 &&&& \\[-0.8em]
 6 & 1 & 121 & \textbf{102} & 96 \\
\hline
 &&&& \\[-0.8em]
 7 & 0 & 120 & \textbf{100} & 90 \\
\hline
 &&&& \\[-0.8em]
 7 & 1 & 120 & \textbf{96} & 90 \\
\hline
 &&&& \\[-0.8em]
 10 & 0 & 117 & \textbf{88} & 72 \\
\hline
 &&&& \\[-0.8em]
 10 & 1 & 117 & \textbf{84} & 72 \\
\hline
 &&&& \\[-0.8em]
 14 & 0 & 113 & \textbf{70} & 48 \\
\hline
 &&&& \\[-0.8em]
 14 & 1 & 113 & \textbf{66} & 48 \\
\hline
 &&&& \\[-0.8em]
 22 & 0 & 105 & \textbf{52} & 0 \\
\hline
 &&&& \\[-0.8em]
 22 & 1 & 105 & \textbf{52} & 0 \\
\hline
 &&&& \\[-0.8em]
 30 & 0 & 97 & \textbf{26} & -48 \\
\hline
 &&&& \\[-0.8em]
 30 & 1 & 97 & \textbf{28} & -48 \\
\hline
 &&&& \\[-0.8em]
 38 & 0 & 89 & \textbf{14} & -96 \\
\hline
 &&&& \\[-0.8em]
 38 & 1 & 89 & \textbf{16} & -96 \\
\hline
 &&&& \\[-0.8em]
 46 & 0 & 81 & \textbf{6} & -144 \\
\hline
 &&&& \\[-0.8em]
 46 & 1 & 81 & \textbf{8} & -144 \\
\hline
 &&&& \\[-0.8em]
 54 & 0 & 73 & \textbf{0} & -192 \\
\hline
 &&&& \\[-0.8em]
 54 & 1 & 73 & \textbf{2} & -192 \\
\hline
 &&&& \\[-0.8em]
 62 & 0 & 65 & \textbf{0} & -240 \\
\hline
 &&&& \\[-0.8em]
 62 & 1 & 65 & \textbf{2} & -240 \\
\hline
\end{tabular}
\end{table*}

\begin{table*}
\caption{$ q_0 = 2 $, $ m = 2 $, $ q_0^m = 4 $, $ s = 7 $, $ \ell = 127 $, $ n = 254 $.}
\centering
\begin{tabular}{c|c||c|c|c}
\hline
&&&&\\[-0.8em]
$ \delta $ & $ b $ & Singleton (\ref{eq singleton bound}) & Theorem \ref{th lower bound} (\ref{eq lower bound}) & Delsarte (\ref{eq lower bound delsarte}) \\
\hline\hline
 &&&& \\[-0.8em]
 2 & 0 & 253 & \textbf{252} & 247 \\
\hline
 &&&& \\[-0.8em]
 2 & 1 & 253 & 247 & 247 \\
\hline
 &&&& \\[-0.8em]
 3 & 0 & 252 & \textbf{245} & 240 \\
\hline
 &&&& \\[-0.8em]
 3 & 1 & 252 & 240 & 240 \\
\hline
 &&&& \\[-0.8em]
 4 & 0 & 251 & \textbf{238} & 233 \\
\hline
 &&&& \\[-0.8em]
 4 & 1 & 251 & 233 & 233 \\
\hline
 &&&& \\[-0.8em]
 5 & 0 & 250 & \textbf{231} & 226 \\
\hline
 &&&& \\[-0.8em]
 5 & 1 & 250 & \textbf{233} & 226 \\
\hline
 &&&& \\[-0.8em]
 6 & 0 & 249 & \textbf{231} & 219 \\
\hline
 &&&& \\[-0.8em]
 6 & 1 & 249 & \textbf{226} & 219 \\
\hline
 &&&& \\[-0.8em]
 7 & 0 & 248 & \textbf{224} & 212 \\
\hline
 &&&& \\[-0.8em]
 7 & 1 & 248 & \textbf{219} & 212 \\
\hline
 &&&& \\[-0.8em]
 10 & 0 & 245 & \textbf{210} & 191 \\
\hline
 &&&& \\[-0.8em]
 10 & 1 & 245 & \textbf{205} & 191 \\
\hline
 &&&& \\[-0.8em]
 14 & 0 & 241 & \textbf{189} & 163 \\
\hline
 &&&& \\[-0.8em]
 14 & 1 & 241 & \textbf{184} & 163 \\
\hline
 &&&& \\[-0.8em]
 22 & 0 & 233 & \textbf{154} & 107 \\
\hline
 &&&& \\[-0.8em]
 22 & 1 & 233 & \textbf{149} & 107 \\
\hline
 &&&& \\[-0.8em]
 30 & 0 & 225 & \textbf{112} & 51 \\
\hline
 &&&& \\[-0.8em]
 30 & 1 & 225 & \textbf{107} & 51 \\
\hline
 &&&& \\[-0.8em]
 38 & 0 & 217 & \textbf{91} & -5 \\
\hline
 &&&& \\[-0.8em]
 38 & 1 & 217 & \textbf{86} & -5 \\
\hline
 &&&& \\[-0.8em]
 46 & 0 & 209 & \textbf{70} & -61 \\
\hline
 &&&& \\[-0.8em]
 46 & 1 & 209 & \textbf{65} & -61 \\
\hline
 &&&& \\[-0.8em]
 54 & 0 & 201 & \textbf{42} & -117 \\
\hline
 &&&& \\[-0.8em]
 54 & 1 & 201 & \textbf{44} & -117 \\
\hline
 &&&& \\[-0.8em]
 62 & 0 & 193 & \textbf{28} & -173 \\
\hline
 &&&& \\[-0.8em]
 62 & 1 & 193 & \textbf{23} & -173 \\
\hline
\end{tabular}
\end{table*}

%






\end{document}